\newcommand{\tsf}[1]{\textup{\textsf{#1}}}
\newcommand{\supp}{\mathrm{supp}}
\newcommand{\cnot}{\textsc{cnot}}
\newcommand{\cz}{\textsc{cz}}
\newtheorem{theorem}{Theorem}
\newtheorem{definition}{Definition}
\theoremstyle{definition}
\newtheorem{condition}{Condition}
\begin{document}

\title{Universal resource-efficient topological measurement-based quantum computation via color-code-based cluster states}

\author{Seok-Hyung Lee}
\affiliation{Department of Physics and Astronomy, Seoul National University, Seoul 08826, Republic of Korea}
\author{Hyunseok Jeong}
\email{h.jeong37@gmail.com}
\affiliation{Department of Physics and Astronomy, Seoul National University, Seoul 08826, Republic of Korea}

\begin{abstract}
Topological measurement-based quantum computation (MBQC) enables one to carry out universal fault-tolerant quantum computation via single-qubit Pauli measurements with a family of large entangled states called cluster states as resources.
Raussendorf's three-dimensional cluster states (RTCSs) based on the surface codes are mainly considered for topological MBQC.
In such schemes, however, the fault-tolerant implementation of the logical Hadamard, phase ($Z^{1/2}$), and $T$ ($Z^{1/4}$) gates which are essential for building up arbitrary logical gates has not been achieved to date without using state distillation, while the controlled-\textsc{not}~(\cnot) gate does not require it, to best of our knowledge.
State distillation generally consumes many ancillary logical qubits, thus it is a severe obstacle against practical quantum computing.
To solve this problem, we suggest an MBQC scheme via a family of cluster states called \textit{color-code-based cluster states} (CCCSs) based on the two-dimensional color codes instead of the surface codes.
We define logical qubits, construct elementary logical gates, and describe error correction schemes.
We show that all the logical Clifford gates including the \cnot, Hadamard, and phase gates can be implemented fault-tolerantly without state distillation, although the fault-tolerant $T$ gate still requires it.
We further prove that the minimal number of physical qubits per logical qubit in a CCCS is at most approximately 1.8 times smaller than the case of an RTCS.
We lastly show that the error threshold of MBQC via CCCSs for logical-$Z$ errors is 2.7--2.8\%, which is comparable to the value for RTCSs, assuming a simple error model where physical qubits have $X$-measurement or $Z$ errors independently with the same probability.
\end{abstract}

\maketitle

\section{Introduction}
\label{sec:introduction}

Three major theoretical challenges for quantum computation (QC) are \textit{universality}, \textit{fault-tolerance}, and \textit{resource-efficiency}. 
Universality indicates the ability of a quantum computer to initialize logical qubits to the computational basis state, perform any unitary gate, and measure them in the computational basis.
It is known that the controlled-\textsc{not}~(\cnot), Hadamard, and phase ($Z^{1/2}$) gates completely generate the Clifford group, and together with the $T$ ($Z^{1/4}$) gate, any unitary gate may be approximated to an arbitrary accuracy \cite{galindo2002information, nielsen2010quantum}.

To achieve fault-tolerance, various quantum error-correcting (QEC) codes have been proposed from simple codes with few physical qubits \cite{shor1995scheme, bennett1996mixed, laflamme1996perfect, calderbank1996good, steane1996multiple} to topological stabilizer codes defined on lattice structures of qubits allowing only local interactions which are easily scalable \cite{bombin2013topological}.
Several simple codes also have been demonstrated experimentally in assorted systems for small code distances \cite{cory1998experimental, chiaverini2004realization, schindler2011experimental, reed2012realization, nigg2014quantum, corcoles2015demonstration, kelly2015state, ofek2016extending, andersen2020repeated}.
Particularly, the \textit{surface codes} \cite{kitaev1997quantum, bravyi1998quantum, dennis2002topological, kitaev2003fault, fowler2009high, bombin2009quantum, fowler2012surface, terhal2015quantum}, a family of topological codes defined on two-dimensional (2D) lattices, have high \textit{error thresholds} up to about 12\% \cite{fowler2012surface}, thus they are one of the most promising candidates for fault-tolerant QC.
The \textit{2D color codes} is another family of topological codes \cite{bombin2006topological, fowler2011two, bombin2013topological, kesselring2018boundaries} which enable transversal \footnote{
    That a logical gate $U$ is transversal means that $U$ can be expressed as $U = U_1 U_2 \cdots$ such that a unitary operator $U_i$ acts only on the $i$th physical qubit for each logical qubit for all $i$'s. 
    For example, if $X_L := X_1 \cdots X_n$ for $[[n, 1, d]]$ code where $X_i$ is the $X$ operator on the $i$th physical qubit, $X_L$ is transversal. 
    Specific 2D color codes implement the logical Hadamard gate by the combination of the Hadamard gate on every physical qubit, and similarly for the logical phase gate \cite{bombin2006topological, fowler2011two}.
} implementation of the logical \cnot, Hadamard, and phase gates thanks to their self-duality.
Moreover, three-dimensional (3D) gauge color codes even allow transversal implementation of the logical $T$ gate as well as the Clifford gates, thus have been getting much attention recently \cite{bombin2007topological, bombin2007exact, bombin2015gauge, kubica2015universal, watson2015qudit, kubica2018three, bombin20182d, bombin2018transversal}.

Lastly, fault-tolerant QC typically requires enormous resource overheads, which makes it tough to realize it.
It is not only because a single logical qubit is composed of multiple physical qubits, but also because state distillation, which generally demand many ancillary logical qubits, is required for non-Clifford gates and sometimes for several Clifford gates to be fault-tolerant \cite{bravyi2005universal, fowler2009high, fowler2012surface, jones2013multilevel}, e.g., one round of a typical protocol to distill an $\ket{A_L} := \frac{1}{\sqrt{2}}\qty( \ket{0_L} + e^{i\pi/4}\ket{1_L} )$ state for the logical $T$ gate requires 15 ancillary logical qubits \cite{bravyi2005universal, raussendorf2007topological, fowler2012surface}.
It is therefore desirable to find QC schemes minimizing the need for state distillation.

\textit{Measurement-based QC} (MBQC) is an alternative of conventional circuit-based QC (CBQC), processed only by single-qubit Pauli measurements with a family of large entangled states called \textit{cluster states} as resources \cite{raussendorf2001one, raussendorf2003measurement, raussendorf2006fault, raussendorf2007fault, raussendorf2007topological, fowler2009topological}.
The ingredients for generating Cluster states are physical qubits initialized to the $X$ basis and controlled-$Z$ (\cz) gates on them, thus MBQC requires much fewer types of physical-level operations than typical CBQC.
The initial MBQC schemes via cluster states on 2D planes \cite{raussendorf2001one, raussendorf2003measurement} were universal but not fault-tolerant.
To achieve fault-tolerance, the space should be 3D; \textit{Raussendorf's 3D cluster states} (RTCSs) allow universal and fault-tolerant MBQC with topologically-encoded logical qubits \cite{raussendorf2006fault, raussendorf2007fault, raussendorf2007topological, fowler2009topological}.
% , which has an error threshold of 0.75\% against computational errors \cite{raussendorf2007fault, raussendorf2007topological}.
Additionally, it was shown that it can tolerate imperfect preparation of cluster states such as qubits losses or failed \cz~gates \cite{barrett2010fault, whiteside2014upper, li2010fault}.
MBQC via cluster states is regarded as one of the suitable candidates for practical fault-tolerant QC, especially in optical systems \cite{nielsen2004optical, dawson2006noise, menicucci2006universal, devitt2009architectural, herrera2010photonic, li2010fault, fujii2010fault, myers2011coherent, yao2012experimental, gimeno2015from, li2015resource, omkar2020resource}.

MBQC via RTCSs is powerful from the point of view of universality and fault-tolerance, but has a significant drawback: the fault-tolerant implementation of the logical Hadamard, phase, and $T$ gates has not been achieved to date without using costly state distillation, while the \cnot~gate does not require it \cite{raussendorf2006fault, raussendorf2007fault, raussendorf2007topological, fowler2009topological}, to best of our knowledge.
Fault-tolerant non-Clifford gates (including the $T$ gate) demand state distillation for most topological QEC codes, but it is uncommon that it is needed even for some Clifford gates (the Hadamard and phase gates), which is fatal to resource-efficiency.
To solve this problem, we consider recent studies on generalizing the relation between RTCSs and surfaces codes to any Calderbank-Steane-Shor (CSS) codes \cite{bolt2016foliated, bolt2018decoding} and later to any stabilizer codes \cite{brown2020universal}.
We also consider Ref. \cite{fowler2011two} on 2D color code computation introducing ``defects'' for logical operations since a similar approach on the surface codes leads to the protocol for MBQC via RTCSs.
Motivated by these works, we propose a new MBQC scheme via a family of cluster states based on the 2D color codes instead of the surface codes, called \textit{color-code-based cluster states} (CCCSs).

Throughout this paper, we show that MBQC via CCCSs is a competitive candidate on realistic QC, regarding the three challenges mentioned at the very first.
We describe the fault-tolerant implementation of the logical \cnot, Hadamard, and phase gates without state distillation, thus show that arbitrary Clifford gates do not require it to be fault-tolerant unlike MBQC via RTCSs, although non-Clifford gates still demand it.
We then prove that it requires a smaller amount of physical qubits per logical qubits than the case of RTCSs, which makes it resource-efficient even more.
Finally, we show that they have a similar level of fault-tolerance by comparing their error thresholds.

This paper is structured as follows. 
In Sec.~\ref{sec:cluster_states_and_MBQC}, we review the concept of cluster states and the general process of MBQC.
In Sec.~\ref{sec:color_code_based_cluster_states}, we construct CCCSs and describe their properties, especially about their stabilizers called \textit{correlation surfaces}.
In Sec.~\ref{sec:MBQC_via_color_code_based_cluster_states}, we define logical qubits and suggest the schemes for their initialization and measurements, elementary logical gates, and state injection for state distillation.
In Sec.~\ref{sec:error_correction}, we present the methods for error correction.
In Sec.~\ref{sec:calculations}, we calculate resource overheads and error thresholds of MBQC via CCCSs and compare them with the results for RTCSs.
We conclude with final remarks in Sec.~\ref{sec:conclusion}.

% Another noteworthy family of topological codes is color codes \cite{bombin2006topological, fowler2011two, bombin2013topological}, with which \cnot, Hadamard, and phase gates may be implemented in a transversal way
% It is possible to generalize the relation between surfaces codes and Raussendorf's 3D cluster states to any quantum error-correcting codes \cite{bolt2016foliated, brown2020universal}}

\section{Cluster states and measurement-based quantum computation}
\label{sec:cluster_states_and_MBQC}

\begin{figure}[t!]
	\centering
	\includegraphics[width=\columnwidth]{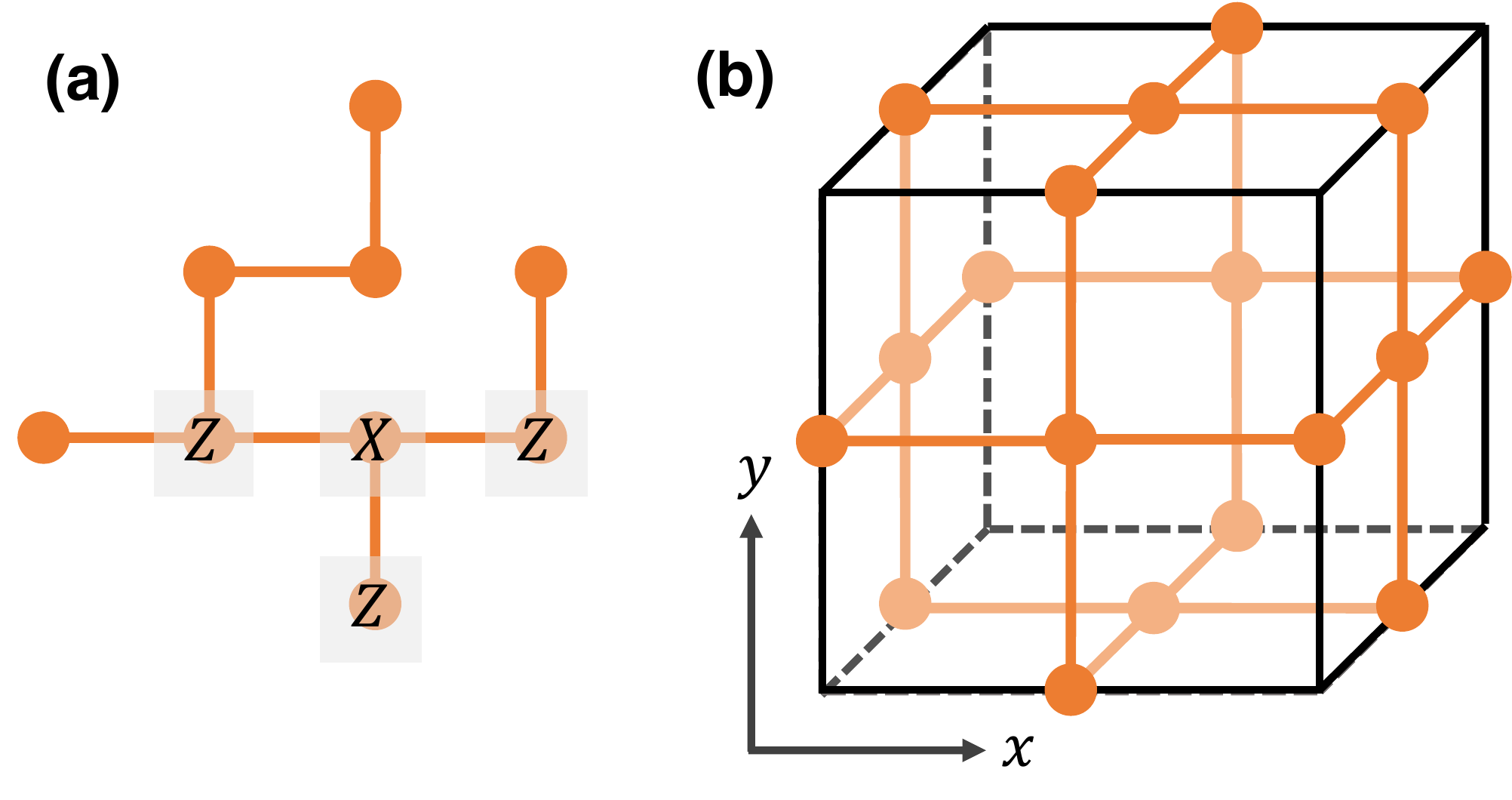}
	\caption{
    	Examples of cluster states. 
    	Orange dots and lines indicate vertices and edges of the graphs, respectively.
    	To construct a cluster state, qubits initialized to the $\ket{+} = \frac{1}{\sqrt{2}} \qty( \ket{0} + \ket{1} )$ states are placed on the vertices, then a \cz~gate is applied on the qubits connected by each edge.
    	(a) A cluster state on a simple graph.
    	The presented ``$XZZZ$'' operator indicates an example of a stabilizer generator given in Eq.~\eqref{eq:cluster_state_SG}.
    	(b) A unit cell of a Raussendorf's 3D cluster state (RTCS).
    	A vertex is located on each edge and face of the cell.
	}
	\label{fig:cluster_states}
\end{figure}

To define a cluster state, we consider a graph $G=(V,~E)$, where $V$ and $E$ are the sets of vertices and edges, respectively.
The cluster state $\ket{G}$ is constructed by attaching a qubit to every vertex in $V$, initializing them to the $\ket{+}$ states where $\ket{\pm} := \qty( \ket{0} \pm \ket{1} )/\sqrt{2}$, then applying a controlled-$Z$ (\cz) gate on every pair of qubits connected by an edge.

The constructed cluster state has a stabilizer generator (SG) $S(v)$ for each vertex $v \in V$ defined as
\begin{align}
    S(v) := X(v) \prod_{v' \in \mathrm{adj}(v)} Z(v'),
    \label{eq:cluster_state_SG}
\end{align}
where $\mathrm{adj}(v) := \qty{ v' \in V \mid (v, v') \in E }$ is the set of adjacent vertices of $v$ and $X(v)$ and $Z(v)$ are the $X$ and $Z$ operators, respectively, on the qubit at the vertex $v$ denoted by $Q(v)$.
In other words, $g\ket{G} = \ket{G}$ holds for all $g \in \mathcal{S}$, where $\mathcal{S}$ is the stabilizer group generated by $\qty{S(v) | v \in V}$.
We say that $S(v)$ is \textit{around} $v$ or $Q(v)$, called its \textit{center vertex} or \textit{qubit}, respectively.
Examples of cluster states are presented in Fig.~\ref{fig:cluster_states}.

For MBQC, modified versions of cluster states are used, where some qubits do not need to be initialized to the $\ket{+}$ states.
$S(v)$ where $Q(v)$ is each of such qubits is then no longer a stabilizer, but others still remain as SGs.

General MBQC via a cluster state to implement a quantum circuit is processed through the following three steps \cite{raussendorf2001one, raussendorf2003measurement, raussendorf2007fault, raussendorf2007topological, fowler2009topological}:
\begin{enumerate}
    \item 
        \textit{Preparation.} 
        For a given graph $G(V, E)$, a qubit is attached to each vertex.
        $Q(V)$ is divided into three subsets: the input qubits $Q_{\mathrm{IN}}$, the output qubits $Q_{\mathrm{OUT}}$, and the others.
        $Q_{\mathrm{OUT}} = \emptyset$ if the desired circuit does not produce any output state or ends with measurements.
        The input logical states are prepared in $Q_{\mathrm{IN}}$.
        % Each qubit in $Q_{\mathrm{IN}}$ may be either one logical qubit or included in an encoded logical qubit.
        All qubits except those in $Q_\mathrm{IN}$ are initialized to the $\ket{+}$ states.
        A \cz~gate is then applied on every pair of qubits connected by an edge.
    \item
        \textit{Measurement.}
        For each physical qubit except those in $Q_\mathrm{OUT}$, a single-qubit Pauli measurement, selected by a \textit{measurement pattern} with a classical computer, is performed.
        The measurement pattern is determined by the desired circuit.
        Let the measurement results be $M$.
        If possible, errors in $M$ are corrected by decoding the \textit{parity-check} outcomes.
    \item
        \textit{Obtaining the results.}
        The output logical state is obtained from $Q_{\mathrm{OUT}}$ up to logical Pauli operators called \textit{byproduct operators} determined by $M$.
        If $Q_{\mathrm{OUT}} = \emptyset$, the results of the final logical measurements are determined by $M$.
\end{enumerate}
% \paragraph{Preparation.}
% For a given graph $G(V, E)$, a qubit is attached to each vertex.
% $Q(V)$ is divided into three subsets: the input qubits $Q_{\mathrm{IN}}$, the output qubits $Q_{\mathrm{OUT}}$, and the others.
% $Q_{\mathrm{OUT}} = \emptyset$ if the desired circuit ends with logical measurements.
% The input logical states are prepared in $Q_{\mathrm{IN}}$.
% Each qubit in $Q_{\mathrm{IN}}$ may be either one logical qubit or included in an encoded logical qubit.
% All qubits except those in $Q_\mathrm{IN}$ are initialized to the $\ket{+}$ states.
% A \cz~gate is then applied on every pair of qubits connected by an edge.
% \paragraph{Measurement.}
% A single-qubit Pauli measurement, selected by a \textit{measurement pattern} with a classical computer, is performed for each physical qubit except those in $Q_\mathrm{OUT}$.
% The measurement pattern is determined by the desired circuit.
% Let the results of the measurements be $M$.
% If possible, errors in $M$ are corrected by decoding the parity-check outcomes.
% \paragraph{Obtaining the results.}
% The output logical state is obtained from $Q_{\mathrm{OUT}}$ up to logical Pauli operators called \textit{byproduct operators} determined by $M$.
% If $Q_{\mathrm{OUT}} = \emptyset$, the results of the final logical measurements are determined by $M$.

Note that the preparation and measurement steps may be performed simultaneously; after a qubit $q$ and its neighbors are prepared and \cz~gates are applied on them, it is allowed that $q$ is measured before the other qubits are prepared.
One of the spatial axes may be regarded as the \textit{simulating time axis}, or simply the \textit{time axis}, along which qubits are prepared and then measured in order.
It is thus possible to minimize the number of unmeasured physical qubits in a moment by measuring each qubit as soon as possible after preparing it.

The cores of MBQC are the structure of the cluster state and the measurement pattern.
We illustrate them one by one over the next two sections.

% Raussendorf's 3D cluster states (RTCSs) \cite{raussendorf2006fault, raussendorf2007fault, raussendorf2007topological, fowler2009topological} are the most successfully utilized family of cluster states for universal fault-tolerance MBQC, where a vertex of the graph is located at each face or edge of a cubic lattice, as Fig.~\ref{fig:cluster_states}(b).
% Here, MBQC can implement logical identity and \cnot~gates fault-tolerantly.
% However, magic state distillation is needed for the phase gates and the $T$ gates \cite{bravyi2005universal, fowler2009high}, which requires a much larger overheads than that for logical qubits \cite{bravyi2005universal, jones2013multilevel} (e.g. the distillation circuit for one $\ket{A_L}$ state using the Reed-Muller code requires 15 ancillary logical qubits \cite{bravyi2005universal, raussendorf2007topological}).

\section{Color-code-based cluster states}
\label{sec:color_code_based_cluster_states}

In this section, we define color-code-based cluster states and describe their properties.
Based on the work on the foliation of CSS codes \cite{bolt2016foliated}, we consider a particular family of cluster states derived from a 2D color-code lattice, called \textit{color-code-based cluster states} (CCCSs).

\subsection{Two-dimensional color-code lattices}
\label{subsec:2d_color_code_lattices}

\begin{figure}[t!]
	\centering
	\includegraphics[width=\columnwidth]{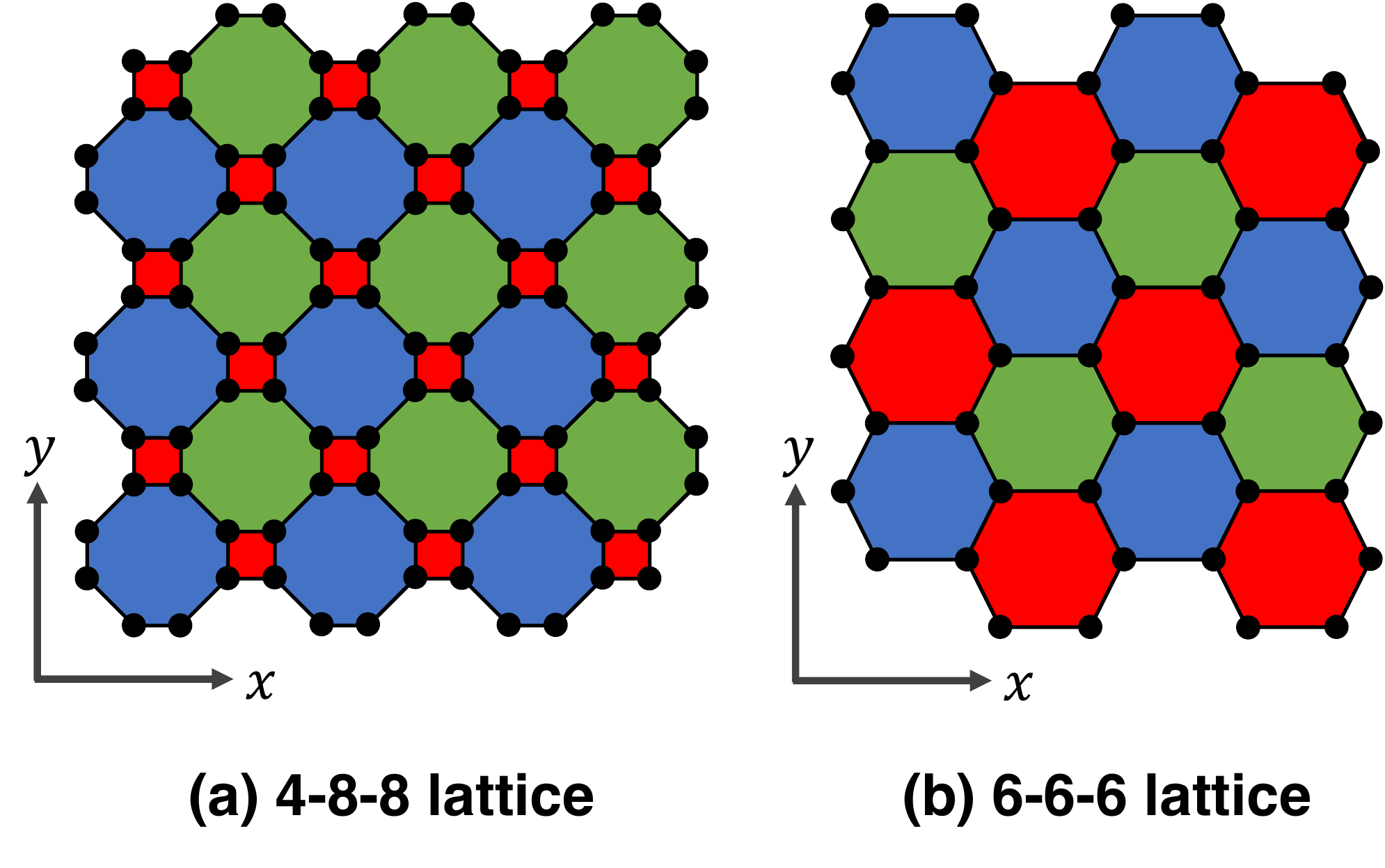}
	\caption{
    	Two typical examples of color-code lattices: the (a) 4-8-8 and (b) 6-6-6 lattices. 
    	The lattices are 3-valent and have 3-colorable faces.
    % 	A qubit is attached at each vertex, and stabilizer generators $X_f$ and $Z_f$ in Eq.~\eqref{eq:color_code_SGs} are defined for each face.
	}
	\label{fig:color_code_lattices}
\end{figure}

We first present 2D color-code lattices on which CCCSs are based.
We consider a lattice $\mathcal{L}_{2D}$ on a 2D plane which is 3-valent and has 3-colorable faces; namely, three edges meet at each vertex and one of the three colors (red, green, or blue) is assigned to each face in such a way that neighboring faces have different colors. 
Note that each edge, called \textit{link}, is also colorable with the color of the faces it connects.
Two typical examples (4-8-8 and 6-6-6) of such lattices are shown in Fig.~\ref{fig:color_code_lattices}.
In the original 2D color codes, a qubit is attached to each vertex and two SGs ($X$- and $Z$-type) correspond to each face.
Details on the codes are described in Ref. \cite{bombin2006topological, bombin2013topological}.

% To construct a color code, a qubit is attached on each vertex of $\mathcal{L}_{2D}$.
% The color code has stabilizer generators (SGs) $X_f$ and $Z_f$ for each face $f$ defined by
% \begin{align}
%     X_f = \prod_{v \in f} X_v, \qquad Z_f = \prod_{v \in f} Z_v,
%     \label{eq:color_code_SGs}
% \end{align}
% where $X_v$ and $Z_v$ are the $X$ and $Z$ Pauli operators on the qubit at the vertex $v$, respectively, and $v \in f$ means that $v$ is in the face $f$. 
% The stabilizer group $\mathcal{S}$ is generated by such SGs, and the code space is the common eigenspace of the operators in $\mathcal{S}$ where the eigenvalues are equal to $+1$.

\begin{figure}[t!]
	\centering
	\includegraphics[width=\columnwidth]{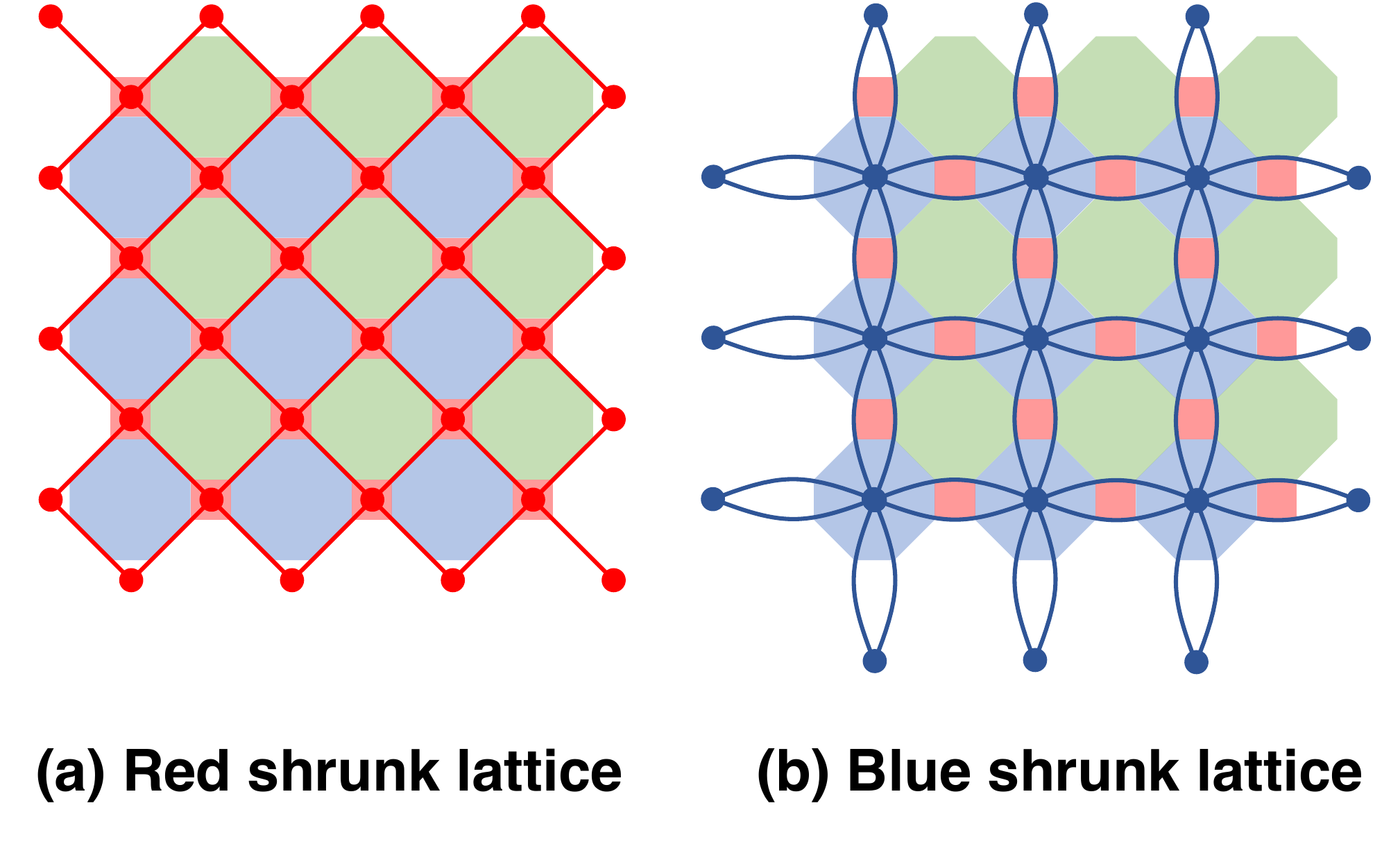}
	\caption{
    	(a) Red and (b) blue shrunk lattices of the 4-8-8 color-code lattice.
    	Red or blue dots (lines) indicate their vertices (edges), which corresponds to red or blue faces (links) of the original lattices.
    % 	Each vertex (face) of the red shrunk lattice ($\mathcal{L}_{R}$) is a red (blue or green) face of the original lattice ($\mathcal{L}_{2D}$). 
    % 	Each edge of $\mathcal{L}_{R}$ is a red link of $\mathcal{L}_{2D}$. 
    % 	Blue and green shrunk lattices are also defined analogously. 
    % 	A series of $X$ operators along the line $L$ in (a) indicates an example of a string operator.
	}
	\label{fig:color_code_shrunk_lattices}
\end{figure}

Regarding a color-code lattice $\mathcal{L}_{2D}$, three \textit{shrunk lattices} are defined, one for each color by shrinking all the faces of that color, as shown in Fig.~\ref{fig:color_code_shrunk_lattices}.
For example, in the red shrunk lattice, each vertex corresponds to a red face in $\mathcal{L}_{2D}$ and each face corresponds to a blue or green face in $\mathcal{L}_{2D}$. 
Edges of the red shrunk lattice then correspond to red links in $\mathcal{L}_{2D}$. 
The blue and green shrunk lattices are also defined analogously.

\subsection{Construction of color-code-based cluster states}
\label{subsec:basic_structures}

\begin{figure*}[t!]
	\centering
	\includegraphics[width=\textwidth]{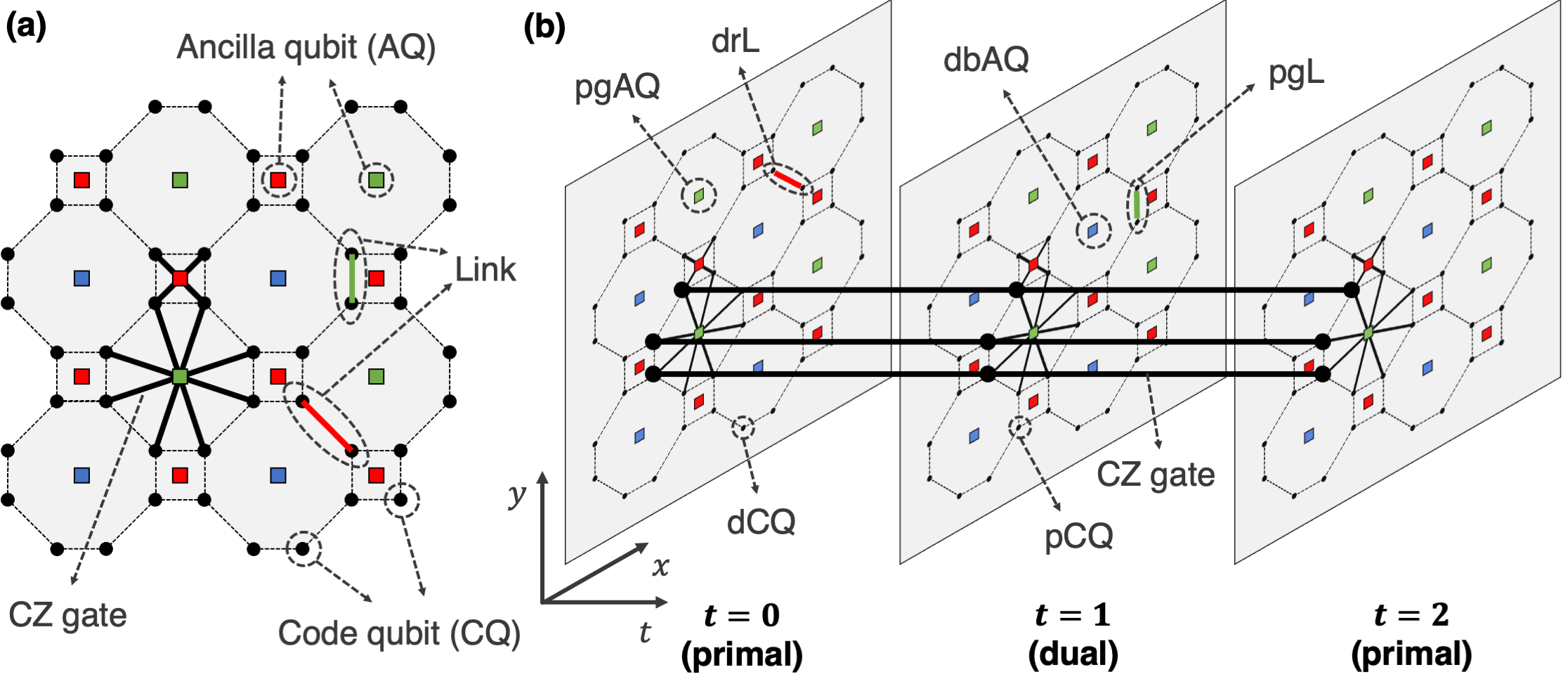}
	\caption{
    	Structure of a color-code-based cluster state (CCCS) based on the 4-8-8 color-code lattice $\mathcal{L}_{2D}$. 
    	(a) Structure of a single layer.
    	Each black circle is a code qubit (\tsf{CQ}) located at a vertex of $\mathcal{L}_{2D}$.
    	Each colored square is an ancilla qubit (\tsf{AQ}) with that color, located at the center of a face of $\mathcal{L}_{2D}$ with that color.
    	Each \tsf{AQ} is connected with surrounding \tsf{CQ}s by edges (\cz~gates), some of which are drawn as black solid lines.
    % 	For each face, the corresponding \tsf{AQ} and the surrounding \tsf{CQ}s are connected by edges corresponding to \cz~gates, and some of them are presented as black solid lines.
        Two adjacent \tsf{CQ}s are connected by a link, some of which are drawn as colored lines.
    % 	A \textit{code qubit} (\tsf{CQ}) is located at each vertex of $\mathcal{L}_{2D}$ (black circle).
    % 	An \textit{ancilla qubit} (\tsf{AQ}) is located at the center of each face of $\mathcal{L}_{2D}$, where the color of the square indicates that of the qubit.
    % 	For each face $f$, $G$ has an edge (black solid line) connecting the ancilla qubit corresponding to $f$ and each surrounding code qubit, which \cz~gate is applied on. 
    % 	A pair of code qubits corresponding to each link in $\mathcal{L}_{2D}$ is called \textit{link} (\tsf{L}) here as well (dashed line).
    	(b) Stack of multiple identical layers along the simulating time axis.
    	Each pair of two \tsf{CQ}s adjacent along the time axis is connected by an edge, some of which are presented as black solid lines.
        One of the primalities (``primal'' and ``dual'') is alternatively assigned to each layer.
    % 	All layers are identical except the primality. 
    	An \tsf{AQ} (a \tsf{CQ} or link) is primal (dual) if it is in a primal layer, and vice versa for a dual layer.
    	Labels of some elements defined in Sec.~\ref{subsec:basic_structures} are shown. 
	}
	\label{fig:CCCS_structure}
\end{figure*}

The graph $G$ for a CCCS based on a color-code lattice $\mathcal{L}_{2D}$ has a 3D structure composed of multiple identical 2D layers stacked along the simulating time ($t$) axis.
The layer of $t=t_0$ is referred to as the \textit{$t_0$-layer}.

The structure of each layer is originated from $\mathcal{L}_{2D}$, as illustrated in Fig.~\ref{fig:CCCS_structure}(a) for the case of the 4-8-8 lattice.
Each vertex in the layer is located at either a vertex of $\mathcal{L}_{2D}$ or the center of a face of $\mathcal{L}_{2D}$; the corresponding qubit is called a \textit{code qubit} (\tsf{CQ}) or an \textit{ancilla qubit} (\tsf{AQ}), respectively.
Each \tsf{AQ} is colorable with the color of the corresponding face in $\mathcal{L}_{2D}$.
For each face in $\mathcal{L}_{2D}$, the layer has an edge connecting the corresponding \tsf{AQ} and each surrounding \tsf{CQ}, on which a \cz~gate is applied. 
Each pair of \tsf{CQ}s connected by a link in $\mathcal{L}_{2D}$ is called \textit{link} here as well.
Note that links are not edges of $G$.

Next, we stack multiple identical layers along the time axis as shown in Fig.~\ref{fig:CCCS_structure}(b).
Every pair of \tsf{CQ}s adjacent along the time axis is connected by an edge in $G$.
The vertices (\tsf{CQ}s and \tsf{AQ}s) and edges (between \tsf{CQ}s and \tsf{AQ}s in the same layer and between \tsf{CQ}s in the adjacent layers) constructed above finally complete the graph $G$ of the cluster state.

We assign each layer, qubit or link a ``primality'': either \textit{primal} or \textit{dual}.
Each layer is primal (dual) if it has an even (odd) time.
An \tsf{AQ} is primal (dual) if it is in a primal (dual) layer, while a \tsf{CQ} or link is primal (dual) if it is in a dual (primal) layer.
We label each qubit or link in an abbreviated form with its primality (``\tsf{p}'' for primal and ``\tsf{d}'' for dual), color (``\tsf{r}'' for red, ``\tsf{g}'' for green, and ``\tsf{b}'' for blue; omitted for \tsf{CQ}s), and type (``\tsf{AQ},'' ``\tsf{CQ},'' and ``\tsf{L}'' for a link).
For example, a \tsf{pgAQ} means a primal green ancilla qubit.
We also frequently use ``\tsf{c}'' instead of a specific color (\tsf{r}, \tsf{g}, or \tsf{b}) for a variable on colors.

\subsection{Stabilizer generators}
\label{subsec:stabilizer_generators}

\begin{figure}[t!]
	\centering
	\includegraphics[width=\columnwidth]{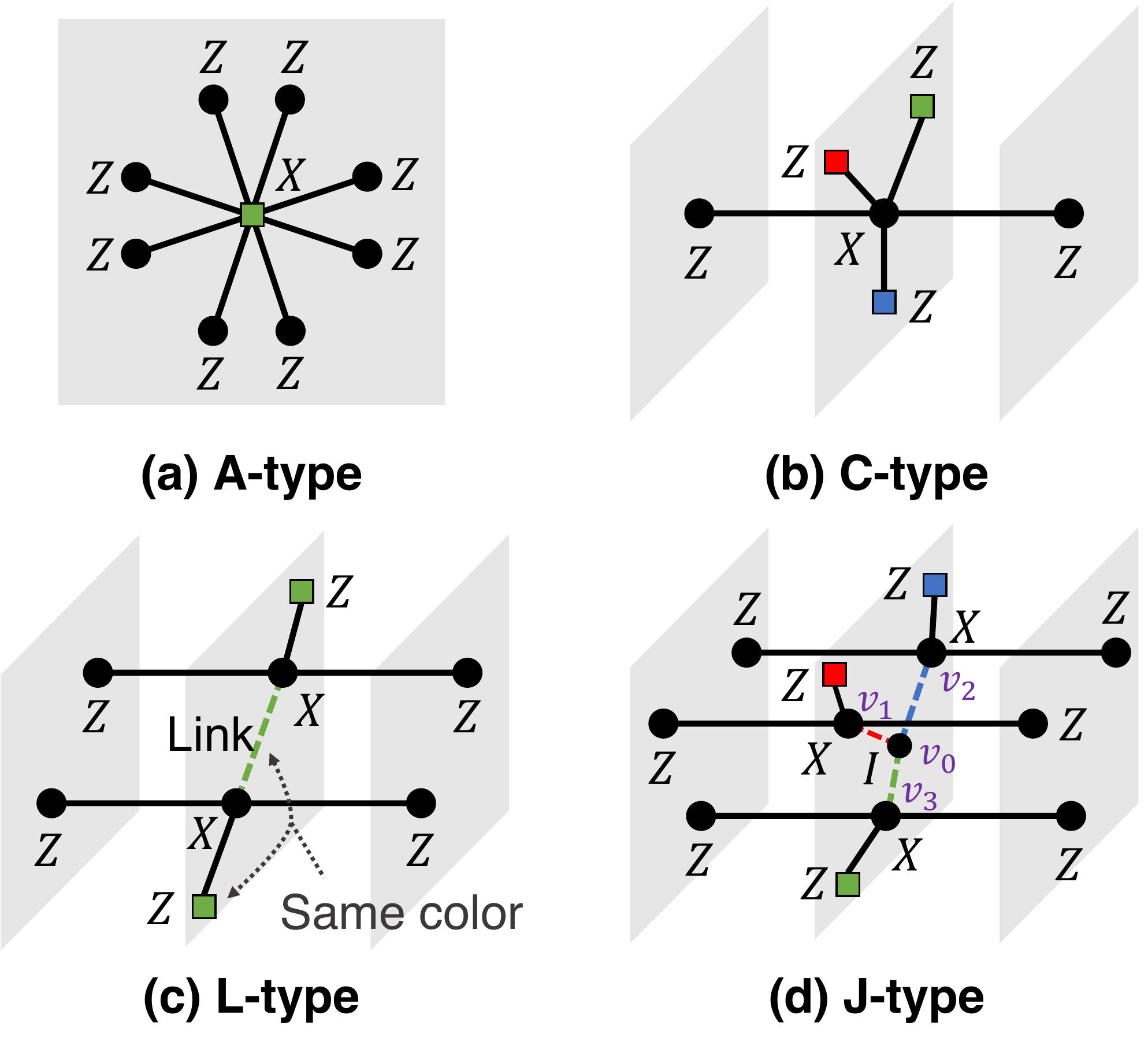}
	\caption{
    	Four types of stabilizer generators (SGs) in a CCCS defined in Definition \ref{def:AC_type_SGs}--\ref{def:I_type_SGs}: (a) A-, (b) C-, (c) L-, and (d) J-type.
    	Each grey square indicates a layer.
    % 	Each black line corresponds to a \cz~gate.
    	An SG of each type is the tensor product of the marked $X$ or $Z$ operators on the qubits.
    % 	For any SG, the qubits on which $X$ operators applied have the same primality, and we assign the primality to the SG.
    % 	(a) A-type SG has a form of Eq.~\eqref{eq:cluster_state_SG} around an \tsf{AQ}, and is colorable by the color of the \tsf{AQ}.
    % 	All the support qubits are in the same time slice.
    % 	(b) C-type SG has a form of Eq.~\eqref{eq:cluster_state_SG} with a \tsf{CQ}.
    % 	The support qubits are distributed in three adjacent time slices. 
    % 	(c) L-type SG around a link is the product of two C-type SGs whose center \tsf{CQ}s are connected by the link.
    % 	It is colorable by the color of the link.
    % 	The colors of the \tsf{AQ}s in the support and the link are the same.
    % 	(d) J-type SG around a code qubit $v_0$ is the product of three C-type SGs $S_i := S(v_i)$ for $i \in \qty{ 1, 2, 3}$ where $v_0$ is connected by links with $v_1$, $v_2$, and $v_3$.
    % 	Three links with different colors meet at one \tsf{CQ} in a J-type SG.
	}
	\label{fig:cluster_state_SGs}
\end{figure}

We now present stabilizer generators (SGs) of a CCCS.
Remark that, for each vertex $v$ in $G$, $S(v)$ given in Eq.~\eqref{eq:cluster_state_SG} is a SG if $Q(v)$ is initialized to $\ket{+}$.
% For now, we regard that every qubit is initialized to $\ket{+}$.
We define \textit{A-} and \textit{C-type SGs} shown in Fig.~\ref{fig:cluster_state_SGs}(a) and (b) as follows.

\begin{definition}[\textbf{A- and C-type SGs}]
    An \textup{A-} or \textup{C-type SG} is the SG given in Eq.~\eqref{eq:cluster_state_SG} around an \tsf{AQ} or a \tsf{CQ}, respectively.
    \label{def:AC_type_SGs}
\end{definition}

% Each of such SGs is classified by the type of its center qubit: \textit{A-} or \textit{C-type} if it is around an \tsf{AQ} or a \tsf{CQ} as shown in Fig.~\ref{fig:cluster_state_SGs}(a) or (b), respectively.
The support \footnote{The support of an operator $O$, written as $\supp(O)$, is the set of qubits on which $O$ applies non-trivially.} of a C-type SG is distributed in three adjacent layers, while that of an A-type SG is contained in a layer.
% An A-type SG is colorable with the color of its center \tsf{AQ}.

Although these two types of SGs completely generate the stabilizer group, we need another two types of SGs: \textit{L-} and \textit{J-type SGs} in Fig.~\ref{fig:cluster_state_SGs}(c) and (d).

\begin{definition}[\textbf{L-type SG}]
    The \textup{L-type SG around a link $l$} is the product of two C-type SGs whose center qubits constitute $l$.
    \label{def:L_type_SGs}
\end{definition}

\begin{definition}[\textbf{J-type SG}]
    Let $S_i := S\qty(v_i)$ for each $i \in \qty{ 0, 1, 2, 3 }$ be a C-type SG such that $\qty( v_0, v_1 )$, $\qty( v_0, v_2 )$, and $\qty( v_0, v_3 )$ are links with different colors.
    $S_I := S_1 S_2 S_3$ is then the \textup{J-type SG around the \tsf{CQ} $Q\qty(v_0)$}.
    \label{def:I_type_SGs}
\end{definition}

% Among the support qubits of them, only the \tsf{AQ}s whose colors are the same with that of $l$ remain in the support of the product.

A-, L-, and J-type SGs together generate the stabilizer group over-completely.
To see this, regarding a J-type SG $S_I$, we consider an L-type SG $S_{Li} := S_0 S_i$ for each $i \in \qty{ 0, 1, 2, 3 }$, where $S_i$'s are defined in Definition \ref{def:I_type_SGs}.
Then $S_0 = S_{L1} S_{L2} S_{L3} S_I$ holds, thus any C-type SG can be written as the product of L- and J-type SGs.

Let the $P$-support $\supp_P(O)$ of a multi-qubit Pauli operator $O$ for $P \in \qty{ X, Y, Z }$ be the subset of $\supp(O)$ corresponding to the $P$ operator.
Note that, for every SG regardless of its type, qubits in its $X$- and $Z$-support always have different primalities.
% For example, for a L-type SG with color \tsf{c}, $X$-support qubits are \tsf{pCQ}s while $Z$-support qubits are \tsf{dCQ}s or \tsf{dcAQ}s, if the first layer in Fig.~\ref{fig:cluster_state_SGs}(c) is primal.

\subsection{Shrunk lattices and correlation surfaces}
\label{subsec:shrunk_lattices_corr_surfaces}

Almost every discussion from now on is symmetric between the two primalities.
Thus, throughout the rest of this paper, we frequently discuss only one of them, which implies that the other side can be treated similarly.

We now construct the shrunk lattices of a CCCS, which are analogous to those of the 2D color codes in Fig.~\ref{fig:color_code_shrunk_lattices}. 
% To distinguish them, we call those of a 2D color code the ``2D shrunk lattices.'' 
We then define correlation surfaces \cite{raussendorf2006fault, raussendorf2007fault, raussendorf2007topological} within each shrunk lattice, through which logical gates are built for MBQC.

\begin{figure}[t!]
	\centering
	\includegraphics[width=\columnwidth]{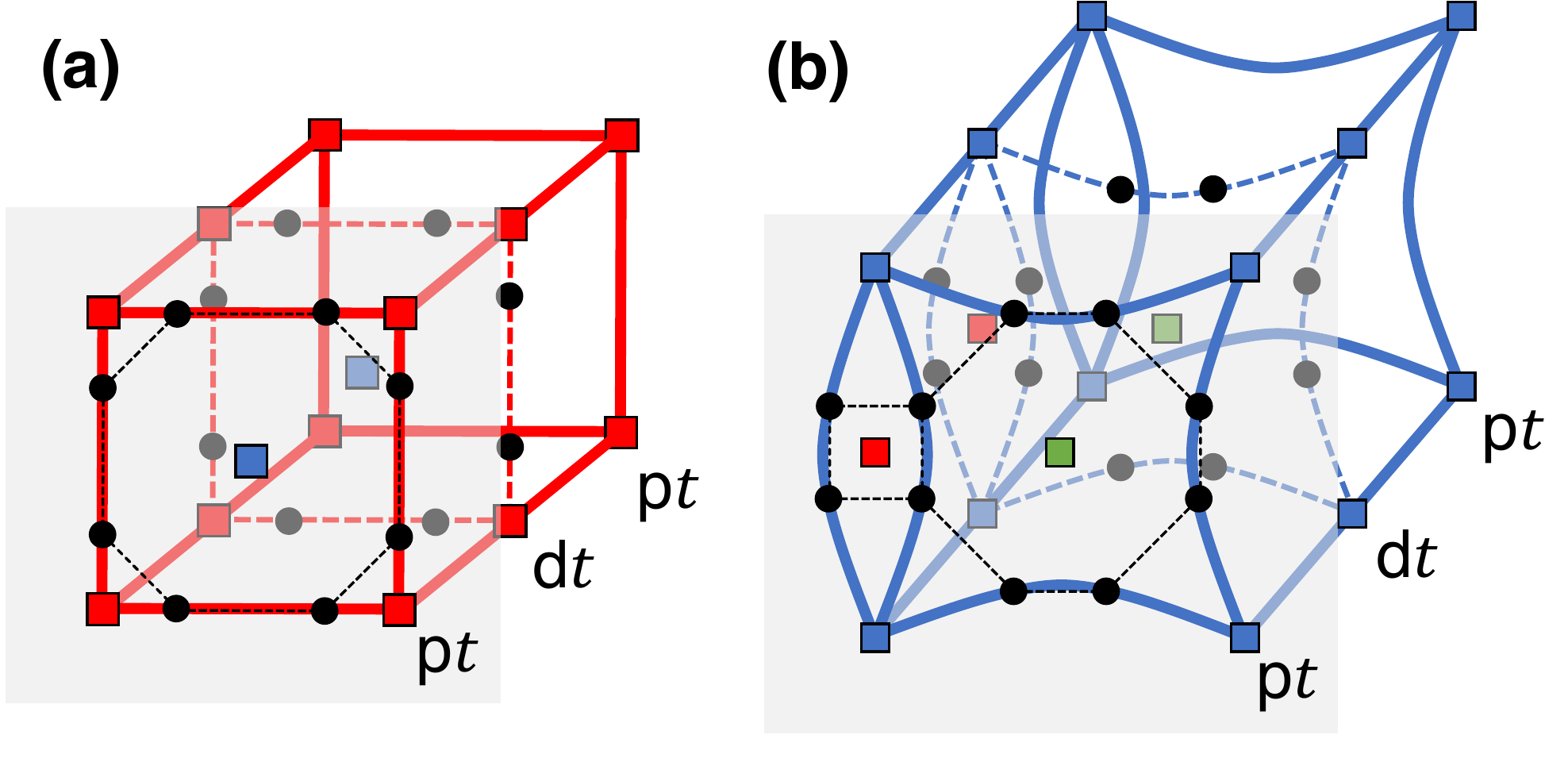}
	\caption{
    	Unit cells of the primal shrunk lattices of an 4-8-8 CCCS: (a) a blue cell in the primal red shrunk lattice $\mathcal{L}^\tsf{pr}$ (a green cell is identical except the colors of \tsf{AQ}s) and (b) red and green cells in the primal blue shrunk lattice $\mathcal{L}^\tsf{pb}$.
    	\tsf{p}$t$s and \tsf{d}$t$s indicate primal and dual layers, respectively.
    	Some qubits on the last layer are not displayed.
    	All the \tsf{pcAQ}s are vertices of $\mathcal{L}^\tsf{pc}$.
    	Each spacelike (or timelike) edge, visualized as red or blue solid lines, connects two adjacent vertices in a layer (or different layers) and corresponds to a \tsf{pcL} (or \tsf{dcAQ}).
    	Faces and cells are defined naturally with the edges.
    % 	For $\mathcal{L}^\tsf{pc}$ of a color \tsf{c}, each layer consisting of spacelike faces is derived from the shrunk lattice $\mathcal{L}^\tsf{pc}_{2D}$ of the 4-8-8 color code in Fig.~\ref{fig:color_code_shrunk_lattices}, and each adjacent pair of \tsf{c}-colored ancilla qubits along the time axis also constitute an edge.
    % 	We can check the correspondence between the elements of the shrunk lattices and the qubits in Table~\ref{table:qubits_corr_to_each_element_in_shrunk_lattices}.
	}
	\label{fig:CCCS_shrunk_lattices}
\end{figure}

% We consider a CCCS based on a 2D color-code lattice $\mathcal{L}_{2D}$, where $\mathcal{L}^\tsf{pc}_{2D}$ is its \tsf{c}-colored shrunk lattice.
% We construct the primal \tsf{c}-colored \textit{shrunk lattice} $\mathcal{L}^\tsf{pc}$ of the CCCS by the following process.
% First, we place a copy of $\mathcal{L}^\tsf{pc}_{2D}$ on each primal layer of the cluster state in Fig.~\ref{fig:CCCS_structure}(b).
% All the vertices, edges, and faces in the copies get included in $\mathcal{L}^\tsf{pc}$.
% Note that such edges and faces are all spacelike.
% For each vertex, (timelike) edges are connected between the vertex and its adjacent ones along the time axis.
% A timelike face is then naturally generated by a pair of adjacent spacelike edges and a pair of adjacent timelike edges. 
% Lastly, a cell is defined by each indivisible 3D space enclosed with spacelike and timelike faces.

The primal \tsf{c}-colored shrunk lattice $\mathcal{L}^\tsf{pc}$ is a 3D lattice containing every \tsf{pcAQ} as a vertex.
Note that the vertices are only in primal layers.
There are two types of edges connecting them: ``spacelike'' and ``timelike'' edges.
Each spacelike edge corresponds to a \tsf{pcL} and connects two vertices in a layer.
Each timelike edge connects two vertices adjacent along the time axis and contains a \tsf{dcAQ} between them.
Faces and cells are then naturally defined by the vertices and edges.
Cells in each primal shrunk lattice are visualized in Fig.~\ref{fig:CCCS_shrunk_lattices} for 4-8-8 CCCSs.
Note that each primal layer in $\mathcal{L}^\tsf{pc}$ is identical with the \tsf{c}-colored shrunk lattice of the 2D color code on which the CCCS is based.

% From now on, we only consider the primal shrunk lattices. All the discussions below are the same for the dual shrunk lattices except the primality of each mentioned element. 

\begin{table}[b!]
    \caption{
        Qubits $Q(b)$ corresponding to each element (vertex, edge, face, or cell) $b$ in $\mathcal{L}^\tsf{pc}$.
        % and the element $\overline{b}$ in $\mathcal{L}^\tsf{dc}$ dual to it.
        % $\mathcal{B}_0$, $\mathcal{B}_1$, $\mathcal{B}_2$, and $\mathcal{B}_3$ indicate the sets of all vertices, edges, faces, and cells respectively.
        % Links in spacelike edges or timelike faces indicate that both the qubits are connected by a link corresponding to the edge or face.
        The results for $\mathcal{L}^\tsf{dc}$ can be obtained by changing each \tsf{p} or \tsf{d}.
    }
    \label{table:qubits_corr_to_each_element_in_shrunk_lattices}
    \centering
    \begin{ruledtabular}
    \begin{tabular}{ccc}
        \multicolumn{2}{c}{Element $b$ in $\mathcal{L}^\tsf{pc}$} & Qubits $Q(b)$ \\ \hline
        \multicolumn{2}{c}{Vertex ($\in \mathcal{B}^\tsf{c}_0$)} & \tsf{pcAQ} \\
        \multirow{2}{*}{Edge ($\in \mathcal{B}^\tsf{c}_1$)} & Timelike & \tsf{dcAQ} \\
                                         & Spacelike      & \tsf{dcL} (two \tsf{dCQ}s)                          \\
        \multirow{2}{*}{Face ($\in \mathcal{B}^\tsf{c}_2$)}       & Timelike       & \tsf{pcL} (two \tsf{pCQ}s)                           \\ 
                                         & Spacelike      & \tsf{pc}$'$\tsf{AQ} ($\tsf{c}' \neq \tsf{c}$) \\
        \multicolumn{2}{c}{Cell ($\in \mathcal{B}^\tsf{c}_3$)}                    & \tsf{dc}$'$\tsf{AQ} ($\tsf{c}' \neq \tsf{c}$)
    \end{tabular}
    % \begin{tabular}{cccc}
    %     \multicolumn{2}{c}{Element $b$ in $\mathcal{L}^\tsf{pc}$} & Qubits $Q(b)$ & $\overline{b}$ in $\mathcal{L}^\tsf{dc}$ dual to $b$ \\ \hline
    %     \multicolumn{2}{c}{Vertex ($\in \mathcal{B}^\tsf{c}_0$)}                  & \tsf{pcAQ} & Timelike edge                              \\
    %     \multirow{2}{*}{Edge ($\in \mathcal{B}^\tsf{c}_1$)}       & Timelike       & \tsf{dcAQ} & Vertex                              \\
    %                                      & Spacelike      & \tsf{dcL} &    Timelike face                            \\
    %     \multirow{2}{*}{Face ($\in \mathcal{B}^\tsf{c}_2$)}       & Timelike       & \tsf{pcL} & Spacelike edge                             \\ 
    %                                      & Spacelike      & \tsf{pc}$'$\tsf{AQ} ($\tsf{c}' \neq \tsf{c}$) & Cell \\
    %     \multicolumn{2}{c}{Cell ($\in \mathcal{B}^\tsf{c}_3$)}                    & \tsf{dc}$'$\tsf{AQ} ($\tsf{c}' \neq \tsf{c}$) & Spacelike face
    % \end{tabular}
    \end{ruledtabular}
\end{table}

\begin{figure*}[t!]
	\centering
	\includegraphics[width=\textwidth]{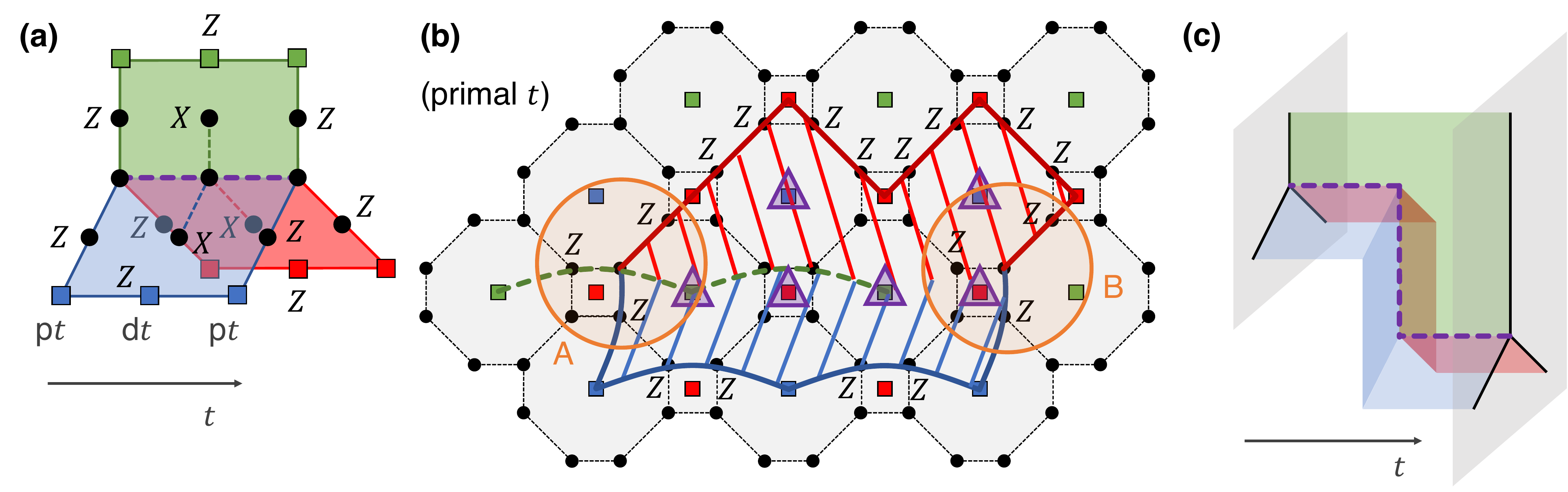}
	\caption{
	    (a) Timelike joint of primal correlation surfaces (CSs) originated from a J-type SG.
	   % Each black circle is a \tsf{CQ}, each \tsf{c}-colored square is a \tsf{cAQ}, and each \tsf{c}-colored link is a \tsf{cL}.
	    The $X$ or $Z$ operators on the qubits indicate the support of the resulting \tsf{CS}.
	    A series of \tsf{CQ}s along which the three faces meet is marked as a purple dashed line.
	   % Each \tsf{p}$t$ or \tsf{d}$t$ indicates a primal or dual layer.
	    (b) Example of the construction of a spacelike joint of three primal \tsf{CS}s.
	    A primal layer of an 4-8-8 CCCS is presented.
	    We first assume a timelike \tsf{pg-CS} $S$ ending at the green dashed line.
	    We then expand $S$ by multiplying the A-type SGs around the \tsf{pAQ}s marked with purple triangles.
	    After the expansion, $\supp_X(S)$ contains the marked \tsf{pAQ}s, and $\supp_Z(S)$ contains the \tsf{CQ}s along the red and blue solid lines.
	    The area above (below) the green line can be regarded as a \tsf{pr(b)-CS}, in the sense that it may be expanded by multiplying ordinary \tsf{pr(b)-CS}s.
	    A joint of the three \tsf{CS}s is thus constructed, and $S$ is the corresponding joined \tsf{CS}.
	    The qubits in $\supp_Z S$ inside the area A or B exactly match with the final layer of a timelike joint, thus spacelike and timelike joints may be connected.
	    (c) Example of a general joint, obtained by multiplying a series of timelike and spacelike joints together with ordinary \tsf{CS}s.
	    }
	\label{fig:joints}
\end{figure*}

Each element (vertex, edge, face, or cell) in a shrunk lattice corresponds to an \tsf{AQ} or a link, as presented in Table~\ref{table:qubits_corr_to_each_element_in_shrunk_lattices}.
Here $Q(b)$ for an element $b$ denotes the set of qubits corresponding to $b$.
% Additionally, for each element $b$, there is an element $\overline{b}$ in the dual lattice of the same color satisfying $Q(b) = Q(\overline{b})$; we say that $\overline{b}$ is \textit{dual to} $b$ and the relation is shown in the table as well.
Note that an element is colorable with the color of \tsf{AQ} or link corresponding to it.
In particular, cells and spacelike faces have colors different from the color of the shrunk lattice, e.g., $\mathcal{L}^\tsf{pr}$ is composed of green and blue cells.
% Note that the correspondences for edges and faces are particularly important since they are essential for constructing correlation surfaces.

% Relations between 1-chains of different colors

We now regard the shrunk lattices as chain complexes \cite{raussendorf2006fault, raussendorf2007fault, raussendorf2007topological}.
Let $\mathcal{B}_i^\tsf{pc}$ for $i=0$, 1, 2, or 3 be the set of vertices, edges, faces, or cells in $\mathcal{L}^\tsf{pc}$, respectively.
% $\overline{\mathcal{B}_i^\tsf{c}}$ for $i \in \qty{0, 1, 2, 3}$ is defined analogously for $\mathcal{L}^\tsf{dc}$.
We then consider a vector space $H_i^\tsf{pc}$ generated by $\mathcal{B}_i^\tsf{pc}$ over $\mathbb{Z}_2$.
Each primal shrunk lattice may be regarded as a chain complex: $\mathcal{L}^\tsf{pc} = \qty{ H_3^\tsf{pc}, H_2^\tsf{pc}, H_1^\tsf{pc}, H_0^\tsf{pc} }$.
Each element $h_i \in H_i^\tsf{pc}$ is called an \textit{$i$-chain} and corresponds to a set $B\qty(h_i) \subseteq \mathcal{B}_i^\tsf{c}$ where each $b \in B\qty(h_i)$ has nonzero contribution in $h_i$.
For example, if $f_1$, $f_2$, and $f_3$ are faces in $\mathcal{L}_2^\tsf{pc}$, $h_2 := f_1 + f_2 + f_3$ is a 2-chain in $H_2^\tsf{pc}$ and $B\qty(h_2) = \qty{ f_1, f_2, f_3 }$ holds.
The correspondence is one-to-one, thus we do not distinguish $h_i$ and $B\qty(h_i)$ from now on if it is not confusing.
% For an $i$-chain $h_i$, we define $\overline{h_i} := \qty{ \overline{b}~|~b \in h_i }$ \textit{dual to} $h_i$, which is either an $(i+1)$ or $(i-1)$-chain unless timelike and spacelike elements are mixed in $h_i$.
The chain complex $\mathcal{L}^\tsf{pc}$ has a boundary map $\partial$ which maps $h_i \in H_i^\tsf{pc}$ to $\partial h_i \in H_{i-1}^\tsf{pc}$ corresponding to the geometrical boundary of $h_i$.
Note that $\partial$ is a linear map and satisfies $\partial \circ \partial = 0$.
% The above discussion is the same for the dual shrunk lattices with $\overline{\mathcal{B}_i^\tsf{c}}$ and $H_i^\tsf{dc}$ defined analogously.

For an $i$-chain $h_i$ and $P \in \qty{X, Y, Z}$, we define a multi-qubit Pauli operator $P\qty(h_i)$ by
\begin{align*}
    P\qty(h_i) := \prod_{q \in Q(h_i)} P(q),
\end{align*}
where $Q(h_i) := \bigcup_{b_i \in h_i} Q(b_i)$ and $P(q)$ is the tensor product of the $P$ operator on the qubit $q$ and the identity operator on all the other qubits.
We now define correlation surfaces (\tsf{CS}s), essential elements for constructing logical operations through MBQC.

\begin{definition}[\textbf{Correlation surface}]
    For each 2-chain $h_2 \in H_2^\tsf{p(d)c}$, the operator
    \begin{align}
        S_\tsf{CS}\qty(h_2) := X\qty(h_2) Z\qty(\partial h_2).
        \label{eq:correlation_surface_definition}
    \end{align}
    is a \textup{primal (dual) \tsf{c}-colored correlation surface}, referred to as a ``\tsf{p(d)c-CS}.''
\end{definition}

It is straightforward to see that, for a spacelike or timelike face $f$, $S_\tsf{CS}\qty(f)$ is an A- or L-type SG around the \tsf{AQ} or link corresponding to $f$, respectively.
The following theorem relates general 2-chains to stabilizers of the CCCS.

\begin{theorem}[\textbf{\tsf{CS}s as stabilizers}]
    For a 2-chain $h_2$, $S_\tsf{CS}\qty(h_2)$ is a stabilizer if and only if $Q\qty( h_2 ) \cap Q_\mathrm{IN} = \emptyset$, where $Q_\mathrm{IN}$ is the set of input qubits defined in Sec.~\ref{sec:cluster_states_and_MBQC} which are not initialized to the $\ket{+}$ states.
    \label{thr:cs_and_stabilizers}
\end{theorem}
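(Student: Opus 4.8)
The plan is to reduce $S_\tsf{CS}(h_2)$ to an explicit product of the cluster-state SGs of Eq.~\eqref{eq:cluster_state_SG} centered exactly at the qubits of its $X$-support, and then read off the stabilizer condition from which of those centers are input qubits. By the symmetry between the two primalities I would treat a primal $2$-chain $h_2 \in H_2^\tsf{pc}$; the dual case is identical. First I would record two structural facts from Table~\ref{table:qubits_corr_to_each_element_in_shrunk_lattices}. \textbf{(i)} Every face of $\mathcal{L}^\tsf{pc}$ contributes only \emph{primal} qubits to $Q(h_2)$ (a \tsf{pc}$'$\tsf{AQ} for a spacelike face, a \tsf{pcL} for a timelike face), whereas every edge contributes only \emph{dual} qubits to $Q(\partial h_2)$ (a \tsf{dcAQ} or a \tsf{dcL}); hence $\supp_X S_\tsf{CS}(h_2)=Q(h_2)$ and $\supp_Z S_\tsf{CS}(h_2)=Q(\partial h_2)$ are disjoint, so no $Y$'s arise and the two tensor factors commute. \textbf{(ii)} Within one shrunk lattice the map $f\mapsto Q(f)$ on faces is injective with pairwise-disjoint images: distinct spacelike faces give distinct \tsf{AQ}s; distinct timelike faces give \emph{vertex-disjoint} \tsf{c}-colored links, because the $3$-valence and $3$-colorability of $\mathcal{L}_{2D}$ force each \tsf{CQ} to lie on exactly one \tsf{c}-colored link; and \tsf{AQ}s are never \tsf{CQ}s. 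Consequently $Q(h_2)$ decomposes as the disjoint union of the $Q(f)$ with no cancellation.

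Next I would establish the key identity
\begin{align*}
S_\tsf{CS}(h_2) \;=\; \prod_{f \in h_2} S_\tsf{CS}(f) \;=\; \prod_{q \in Q(h_2)} S(q),
\end{align*}
where $S(q)$ is the SG of Eq.~\eqref{eq:cluster_state_SG} centered at $q$. The first equality follows from Eq.~\eqref{eq:correlation_surface_definition}, the linearity of $\partial$, and fact (i): all $X$-factors sit on disjoint primal qubits and all $Z$-factors on disjoint dual qubits, so $\prod_f X(f)=X(h_2)$ and $\prod_f Z(\partial f)=Z(\sum_f \partial f)=Z(\partial h_2)$, and the factors commute. For the second equality I would invoke the statement preceding the theorem that $S_\tsf{CS}(f)$ is an A-type SG $S(a)$ for a spacelike face (center \tsf{AQ} $a$) and an L-type SG for a timelike face, which by Definition~\ref{def:L_type_SGs} is the product $S(v_1)S(v_2)$ of the two C-type SGs around the constituent \tsf{CQ}s; in either case $S_\tsf{CS}(f)=\prod_{q\in Q(f)}S(q)$. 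Multiplying over $f\in h_2$ and using fact (ii) (disjoint centers whose union is $Q(h_2)$) telescopes the product into the stated form.

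With this identity the two implications are immediate. If $Q(h_2)\cap Q_\mathrm{IN}=\emptyset$, then every center $q\in Q(h_2)$ is initialized to $\ket{+}$, so each $S(q)$ is a genuine SG of the modified cluster state and the product $S_\tsf{CS}(h_2)$ is a stabilizer. Conversely, suppose some $q_0\in Q(h_2)\cap Q_\mathrm{IN}$. By fact (i), $q_0\notin Q(\partial h_2)$, so $S_\tsf{CS}(h_2)$ acts as exactly $X(q_0)$ on the input qubit $q_0$. But the stabilizer group is generated by $\{S(v):Q(v)\notin Q_\mathrm{IN}\}$, and each such generator acts on $q_0$ as $Z(q_0)$ (when $v\in\mathrm{adj}(q_0)$) or trivially, since it can act as $X$ only on its own center $Q(v)$, which is never the input qubit $q_0$. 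Hence every stabilizer acts on $q_0$ as $I$ or $Z$, never as $X$ or $Y$, so $S_\tsf{CS}(h_2)$ is not a stabilizer.

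I expect the main obstacle to be fact (ii): the purely combinatorial verification that the face-center sets are pairwise disjoint and exhaust $Q(h_2)$, as this is precisely what makes the product telescope cleanly into $\prod_{q\in Q(h_2)}S(q)$ without sign or cancellation subtleties. Once facts (i) and (ii) are in place, the algebraic reduction and both implications are routine.
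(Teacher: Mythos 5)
Your proposal is correct and takes essentially the same route as the paper's own proof: the \emph{if} direction decomposes $h_2$ into its constituent faces, identifies each $S_\tsf{CS}\qty(f)$ with an A- or L-type SG, and multiplies using linearity of $\partial$, while the \emph{only if} direction observes that no stabilizer of the modified cluster state can have $X$-support on $Q_\mathrm{IN}$ because the SGs centered at input qubits do not exist. Your further telescoping to $\prod_{q \in Q\qty(h_2)} S(q)$ and the disjointness verification in fact (ii) merely spell out details the paper leaves implicit, without changing the underlying argument.
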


\begin{proof}
(\textit{If part})
Since qubits outside $Q_\mathrm{IN}$ is initialized to the $\ket{+}$ states, there exist the A- or C-type SG around each of them, as discussed in Sec.~\ref{sec:cluster_states_and_MBQC}.
Let $F := \qty{ f \in \mathcal{B}_2^\tsf{pc} \mid Q(f) \cap Q_\mathrm{IN} = \emptyset }$ be a set of faces.
For a face $f \in F$, $S_\tsf{CS}\qty(f)$ is a stabilizer; it is an A- or L-type SG.
For a 2-chain $h_2 \in H_2^\tsf{pc}$ where $Q\qty( h_2 ) \cap Q_\mathrm{IN} = \emptyset$, $h_2$ can be written as a linear summation of elements in $F$: $\exists \qty{ f_i } \subseteq F$, $h_2 = \sum_i f_i$.
Since the map $\partial$ is linear and $P(h)P\qty(h') = P\qty(h + h')$ holds for any Pauli operator $P$, $S_\tsf{CS}\qty(h_2) = X\qty(h_2) Z\qty(\partial h_2) = \prod_i X\qty(f_i) Z\qty(\partial f_i) = \prod_i S_\tsf{CS}\qty(f_i)$, which is a stabilizer.
The proof is analogous for dual 2-chains.

(\textit{Only if part})
Since qubits in $Q_\mathrm{IN}$ are not initialized to the $\ket{+}$ states, the A- and C-type SGs around each of them do not exist.
Therefore, the $X$-support of any stabilizer cannot contain qubits in $Q_\mathrm{IN}$.
\end{proof}

Regarding a primal \tsf{CS} $S := S_\tsf{CS} \qty( h_2 )$, $Q\qty(h_2)$ is called the \textit{interior} of $S$, in which every qubit is primal and in $\supp_X (S)$.
Similarly, $Q\qty(\partial h_2)$ is called the \textit{boundary} of $S$, in which every qubit is dual and in $\supp_Z (S)$.
We say that $S$ is \textit{timelike} (\textit{spacelike}) if $h_2$ is composed of timelike (spacelike) faces only.

% \begin{align}
%     S_\tsf{CS}(\partial c) = X(\partial c)
%     \label{eq:parity_check_operators}
% \end{align}

\tsf{CS}s discussed above include all A- and L-type SGs, but not J-type SGs in Fig.~\ref{fig:cluster_state_SGs}(d).
Each J-type SG can be regarded as three primal timelike \tsf{CS}s with different colors ``joined'' along a timelike series of \tsf{CQ}s as Fig.~\ref{fig:joints}(a), in the sense that each ``wing'' of a color \tsf{c} may be extended by multiplying ordinary \tsf{pc-CS}s.
Note that the \tsf{CQ}s along the joint are not included in the support.

\begin{figure*}[t!]
	\centering
	\includegraphics[width=\textwidth]{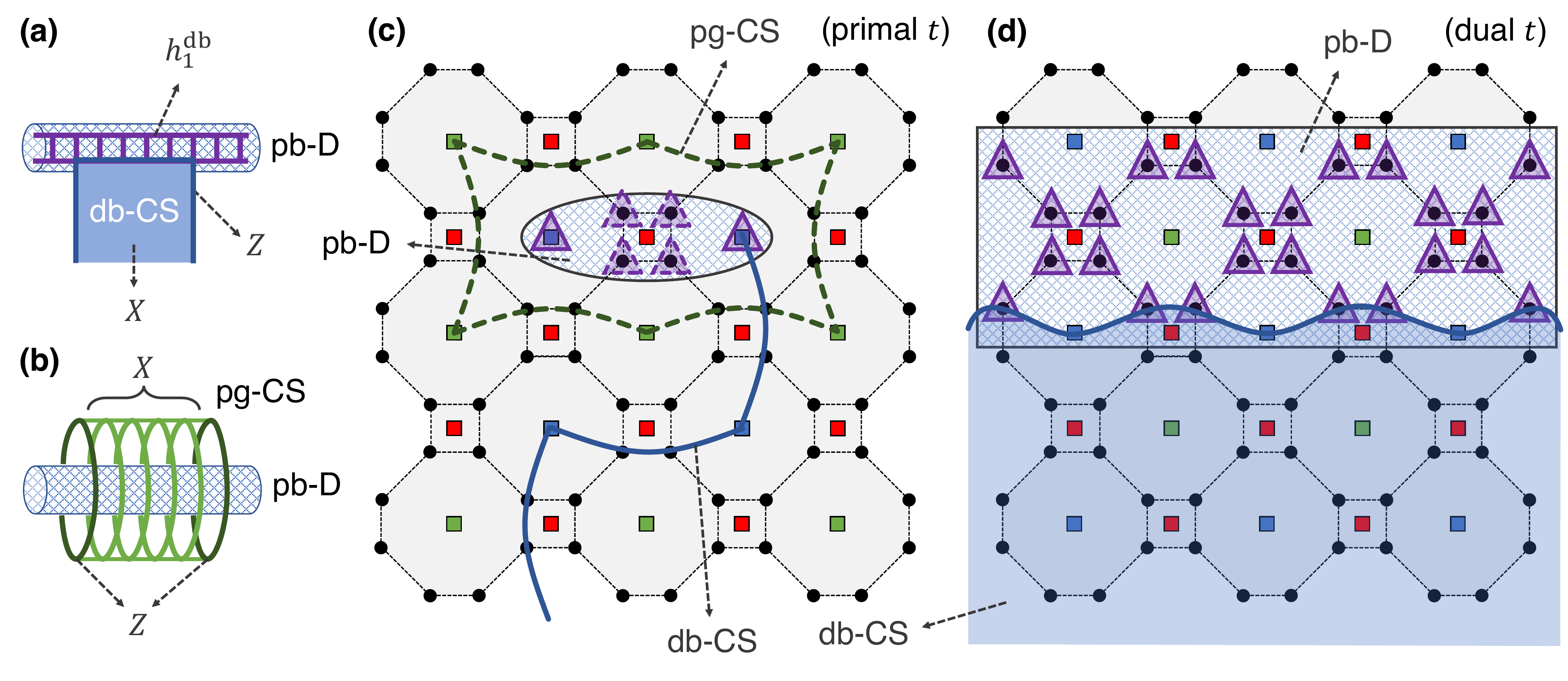}
	\caption{
	    (a) Schematic diagram of a defect (\tsf{pb-D}) and a \tsf{db-CS} $S$ ending at the defect.
	    The cylinder indicates the defect area and the purple lines indicate the defect, a 1-chain $h_1^\tsf{db} \in H_1^\tsf{db}$.
	    Qubits corresponding to $h_1^\tsf{db}$ are measured in the $Z$ basis.
	   % $S$ has $X$ operators on the interior qubits and $Z$ operators on the boundary qubits.
	   % The boundary is also a 1-chain in $H_1^\tsf{db}$, thus $S$ may end at $h_1^\tsf{db}$.
	    (b) Schematic diagram of a \tsf{pg-CS} surrounding a \tsf{pb-D}.
	    (c) A primal layer in a 4-8-8 CCCS penetrated by a timelike \tsf{pb-D}.
	    The cross-section of the defect area is displayed as a checkered pattern.
	    Each purple triangle with a solid (dashed) border indicates a defect \tsf{pbAQ} (\tsf{pCQ}) in the layer (next layer) measured in the $Z$ basis.
	    The cross-sections of a timelike \tsf{db-CS} ending at the defect and a timelike \tsf{pg-CS} surrounding it are presented as a blue solid line and a green dashed line, respectively.
	    The solid (or dashed) lines indicate faces bisected by the layer (or ending at the layer).
	    That is, the corresponding qubits are on the layer (or an adjacent layer).
	   % The dashed lines indicate faces ending at the layer, namely, the face qubits are on the next layer.
	    (d) A dual layer in a 4-8-8 CCCS containing a spacelike \tsf{pb-D}.
	    The defect area is displayed as a checkered pattern.
	   % Purple triangles indicate defect qubits.
	    A \tsf{db-CS} ending at the defect is visualized as a blue surface, where the blue line corresponds to its boundary.
	   % A \tsf{dg-CS} surrounding it is composed of faces in $\mathcal{L}^\tsf{pb}$
    }
	\label{fig:defect_structure}
\end{figure*}

A question arising naturally may be about ``spacelike'' joints, and those are also possible as presented in Fig.~\ref{fig:joints}(b).
A timelike \tsf{pc-CS} and two spacelike primal \tsf{CS}s with the other two colors may be joined along a spacelike series of \tsf{pcL}s.
% Note that the $X$ operators on the \tsf{pcAQ}s along the line are included in the merged \tsf{CS}s.
Such a joint can be obtained by multiplying several A-type SGs along a spacelike boundary of the timelike \tsf{CS}.
Note that the ends of spacelike and timelike joints may fit perfectly with each other, in the sense that all the $Z$ operators on the joint cancel out when multiplying them.

% A timelike joint and spacelike one can be connected as Fig.~\ref{fig:joints}(c); namely, the joint can be extended freely regardless of the direction.
A general joint of \tsf{CS}s with different colors can be obtained as Fig.~\ref{fig:joints}(c) by multiplying several spacelike and timelike joints together with ordinary \tsf{CS}s.
We refer to such a primal \tsf{CS} with a joint as a ``\tsf{pj-CS}.''
For consistency with ordinary \tsf{CS}s, we define the \textit{interior} (\textit{boundary}) of a \tsf{pj-CS} by its $X$($Z$)-support, which is intuitive considering its visualization in Fig.~\ref{fig:joints}.
% Although it is not mathematically strict, we also write it as $S_\tsf{CS}\qty( h_2 )$ and regard $h_2$ as a 2-chain such that $Q\qty(h_2)$ and $Q\qty(\partial h_2)$ are its $X$-support and $Z$-support, respectively.
% We then conclude that the set of $S_\tsf{CS}\qty(h_2)$ for every 2-chain $h_2$ such that $Q\qty(h_2) \cap Q_\mathrm{IN} = \emptyset$ completely generate the entire stabilizers.

\section{Measurement-based quantum computation via color-code-based cluster states}
\label{sec:MBQC_via_color_code_based_cluster_states}

In this section, we describe MBQC via CCCSs.
We first introduce defects and define logical qubits using them.
We then describe initialization and measurements of logical qubits and construct elementary logical gates including the identity, \cnot, Hadamard, and phase gates, which together generate the Clifford group.
We lastly present the state injection scheme to prepare an arbitrary logical state and implement the logical $T$ gate.

Each logical initialization, measurement, gate, or state injection process can be regarded as an independent circuit ``block'' implemented by the process presented in Sec.~\ref{sec:cluster_states_and_MBQC}.
In each block, the input logical state is given in the input qubits $Q_\mathrm{IN}$ ($Q_\mathrm{IN} = \emptyset$ for the initialization) and the output logical state is produced in the output qubits $Q_\mathrm{OUT}$ ($Q_\mathrm{OUT} = \emptyset$ for the logical measurements) after the single-qubit Pauli measurements of all the qubits except $Q_\mathrm{OUT}$.
An arbitrary quantum circuit can be formed by connecting multiple blocks in a way that the output qubits of each block are used as the input qubits of the next block.

Without loss of generality, we assume that the single-qubit measurements are performed layer by layer along the simulating time ($t$) axis.
In that case, the output qubits of an initialization, gate, or state injection block are the last several layers of it, called the \textit{output layers}.
On the other hands, it is sufficient that the input qubits of a measurement or gate block contain only the first layer of it, called the \textit{input layer}.
Two subsequent blocks can be connected in a way that the input layer of the second block overlap with the first output layer of the first block.
To see this, let us assume that the output layers of the first block are the layers of $t_0 \leq t \leq t_1$.
We first consider applying all the \cz~gates between qubits of $t_0 \leq t \leq t_1$ again on the post-measurement state of the first block.
Since the measurements of the qubits of $t < t_0$ commute with those \cz~gates, the qubits of $t_0 < t \leq t_1$ simply return to the initial $\ket{+}$ states.
The $t_0$-layer is then used for the input layer of the second block and the \cz~gates in the second block restore the output state of the first block to be used as the input state of the second block.
Of course, the above argument is just a theoretical trick to connect two blocks; it is unnecessary to apply \cz~gates multiple times in a real implementation.

\subsection{Measurement pattern}
\label{subsec:measurement_pattern}

Remark that each qubit except the output qubits is measured in a Pauli basis determined by a predefined measurement pattern.
Such a qubit is included in an area with one of the four types: \textit{vacuum}, \textit{defect}, \textit{Y-plane}, and \textit{injection qubit}.
There may be multiple defects, Y-planes, and injection qubits, and the entire remaining area is the vacuum.
We denote the set of all vacuum (defect) qubits as $V$ ($D$).

Defects are key ingredients for the protocol; all the logical operations completely depend on how to place them.
Y-planes are used in fault-tolerant $Y$-measurements on physical qubits for the logical Hadamard and phase gates.
Lastly, each injection qubit is a special area for state injection and consists of a single qubit.
Qubits in each area are measured as follows:
\begin{align}
    \begin{array}{cc}
        \text{A qubit is measured} \\
        \text{in the basis of}
    \end{array}
    \left\{
        \begin{array}{ll}
            X & \text{if in the vacuum} \\
              & \text{or an injection qubit,} \\
            Z & \text{if in a defect,} \\
            Y & \text{if in a Y-plane.} \\
        \end{array}
    \right. \label{eq:measurement_pattern}
\end{align}
Arranging these elements besides the vacuum properly is the key for implementing logical qubits and gates, which is what we cover in this section.

\subsection{Defects and related correlation surfaces}
\label{subsec:measurement_pattern_and_defects}

Defects are defined as follows and visualized in Fig.~\ref{fig:defect_structure}(a) schematically.

\begin{definition}[\textbf{Defect}]
    Consider a \textup{defect area}, a continuous area stretching along a direction.
    A \textup{primal (dual) \tsf{c}-colored defect}, referred to as a ``\tsf{p(d)c-D},'' is the largest 1-chain $h_1 \in H_1^\tsf{d(p)c}$ contained completely in the defect area. 
    Qubits in $Q\qty(h_1)$ are called \textup{defect qubits} and measured in the $Z$ basis during the measurement step.
\end{definition}

We say that a defect is \textit{timelike} or \textit{spacelike} if the defect area stretches timelikely or spacelikely, respectively.
Figure \ref{fig:defect_structure}(c) and (d) illustrate the explicit structures of timelike and spacelike defects, respectively, in an 4-8-8 CCCS.

After the measurement step, only \tsf{CS}s commuting with the measurement pattern survive.
We say a \tsf{CS} is \textit{compatible} with a set of qubits if it survives after the measurements of the qubits.
If it is compatible with all the qubits except the output qubits, we say that it is a compatible \tsf{CS}.
The following theorem gives the conditions which such \tsf{CS}s satisfy.

\begin{theorem}[\textbf{Compatible \tsf{CS}s}]
\begin{subequations}
Considering only the vacuum and defects, a \tsf{CS} $S$ is compatible with a set $\widetilde{Q}$ of qubits if and only if the followings hold:
\begin{align}
    Q_\mathrm{int}(S) \cap \widetilde{Q} \setminus Q_{\mathrm{OUT}} &\subseteq V \setminus Q_\mathrm{IN}, \label{eq:compatible_CS_X_cond}\\
    Q_\mathrm{bnd}(S) \cap \widetilde{Q} \setminus Q_{\mathrm{OUT}} &\subseteq D, \label{eq:compatible_CS_Z_cond}
\end{align}
where $Q_\mathrm{int(bnd)}(S)$ is the interior (boundary) of $S$ and $Q_\mathrm{IN(OUT)}$ is the set of input (output) qubits.
\label{eq:compatible_CS_conds}
\end{subequations}
\label{thr:compatible_CSs}
\end{theorem}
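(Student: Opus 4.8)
The plan is to reduce compatibility to two separable requirements---being a stabilizer, and commuting with each measurement---and then to read off the two inclusions from a qubit-by-qubit Pauli commutation check. First I would recall the structure of $S := S_\tsf{CS}(h_2) = X(h_2)\,Z(\partial h_2)$: its interior $Q_\mathrm{int}(S) = Q(h_2)$ lies in $\supp_X(S)$ and is entirely primal, whereas its boundary $Q_\mathrm{bnd}(S) = Q(\partial h_2)$ lies in $\supp_Z(S)$ and is entirely dual. Since primal and dual qubits are distinct, these two sets are disjoint, so on each qubit $S$ restricts to exactly one of $X$ (interior), $Z$ (boundary), or the identity; in particular $S$ carries no $Y$ factor. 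This separation is what keeps the commutation bookkeeping clean.

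Next I would split the statement ``$S$ survives the measurements of $\widetilde{Q}$'' into (i) $S$ is a stabilizer of the resource state and (ii) $S$ commutes with the single-qubit measurement assigned to every qubit in $\widetilde{Q} \setminus Q_\mathrm{OUT}$. Requirement (i) is exactly Theorem~\ref{thr:cs_and_stabilizers}, which gives that $S$ is a stabilizer if and only if $Q_\mathrm{int}(S) \cap Q_\mathrm{IN} = \emptyset$. Requirement (ii) I would dispatch one qubit at a time using the pattern in Eq.~\eqref{eq:measurement_pattern}, restricted here to the only two allowed area types: vacuum (measured in $X$) and defect (measured in $Z$). For a measured qubit $q$ in the interior, $S|_q = X$ commutes with an $X$-measurement but anticommutes with a $Z$-measurement, so $q$ must lie in the vacuum, i.e.\ $q \in V$; for a measured qubit $q$ in the boundary, $S|_q = Z$ commutes with a $Z$-measurement but anticommutes with an $X$-measurement, so $q \in D$; and a qubit in neither support imposes nothing. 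Because only vacuum and defects occur, $V$ and $D$ partition the measured qubits, so these conditions become $Q_\mathrm{int}(S) \cap \widetilde{Q} \setminus Q_\mathrm{OUT} \subseteq V$ and $Q_\mathrm{bnd}(S) \cap \widetilde{Q} \setminus Q_\mathrm{OUT} \subseteq D$, the latter being precisely Eq.~\eqref{eq:compatible_CS_Z_cond}. Finally, since the input qubits are themselves measured and hence lie in $\widetilde{Q}$, requirement (i) sharpens the interior inclusion from $\subseteq V$ to $\subseteq V \setminus Q_\mathrm{IN}$, yielding Eq.~\eqref{eq:compatible_CS_X_cond}. The two directions of the iff are then immediate: the backward direction checks that under Eqs.~\eqref{eq:compatible_CS_X_cond}--\eqref{eq:compatible_CS_Z_cond} every measurement commutes with $S$ and $S$ is a stabilizer, while the forward direction is the contrapositive, any violation producing either an anticommuting measurement or, through Theorem~\ref{thr:cs_and_stabilizers}, a non-stabilizer.

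The step I expect to be the genuine obstacle is conceptual rather than computational: justifying the $\setminus Q_\mathrm{IN}$ in Eq.~\eqref{eq:compatible_CS_X_cond}. An interior qubit that happens to be an input qubit sitting in the vacuum is still $X$-measured, and $X$ commutes with $X$, so its exclusion cannot come from commutation. It must instead come from the stabilizer requirement: input qubits are not initialized to $\ket{+}$ and therefore support no SG, so by Theorem~\ref{thr:cs_and_stabilizers} any $S$ whose interior meets $Q_\mathrm{IN}$ fails to be a stabilizer and cannot survive. I would therefore keep requirements (i) and (ii) scrupulously separate and note explicitly that the input qubits belong to the measured set $\widetilde{Q}$, so that the global condition $Q_\mathrm{int}(S) \cap Q_\mathrm{IN} = \emptyset$ is faithfully captured by the $\widetilde{Q}$-local inclusion in Eq.~\eqref{eq:compatible_CS_X_cond}. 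Everything else reduces to routine single-qubit Pauli algebra.
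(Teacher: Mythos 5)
Your proposal is correct and matches the paper's intended justification: the paper states Theorem~\ref{thr:compatible_CSs} without a formal proof, relying on exactly the reasoning you give---qubit-wise commutation of the $X$-type interior and $Z$-type boundary with the measurement pattern of Eq.~\eqref{eq:measurement_pattern}, plus Theorem~\ref{thr:cs_and_stabilizers} to justify the $\setminus Q_\mathrm{IN}$ exclusion (the paper's own remark attributes it to that theorem). Your explicit separation of the stabilizer requirement from the commutation requirement, and your observation that input qubits are themselves measured, faithfully fill in the argument the paper leaves implicit.
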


We particularly want to emphasize that a compatible \tsf{CS} cannot end in the vacuum qubits.
Note that $Q_\mathrm{IN}$ is excluded in the right-hand side (RHS) of Eq.~\eqref{eq:compatible_CS_X_cond} due to Theorem \ref{thr:cs_and_stabilizers}.

% \begin{table}[t!]
%     \centering
%     % \begin{ruledtabular}
%     \begin{tabular}{|c|c|c|}
%         \multicolumn{3}{c}{With a \tsf{pc-D} $d$, a \tsf{\textbf{xy}-CS} $S$ can...} \\ \hline
%         \textbf{\tsf{x}} \textbackslash~\textbf{\tsf{y}}     & \tsf{c}        &  $\tsf{c}' (\neq \tsf{c})$ \\ \hline \hline
%         \tsf{p} & \begin{tabular}[c]{@{}c@{}} overlap with $d$ \\ only if $S$ is spacelike \end{tabular} & not overlap with $d$ \\ \hline
%                                 %  & only if $d$ is spacelike & and surround $d$ \\
%         \tsf{d} & end at $d$        & not end at $d$ \\ \hline
%     \end{tabular}
%     % \end{ruledtabular}
%     \caption{
%     Allowed position relationships between a primal defect $d$ and a compatible \tsf{CS}.
%     The relationships for dual defects are analogous.
%     }
%     \label{table:CS_defect_relationship}
% \end{table}

\begin{table}[b!]
    \caption{
        Allowed positional relations between a primal defect $d$ and a compatible \tsf{CS}.
        The relations for dual defects are analogous.
    }
    \label{table:CS_defect_relationship}
    \centering
    \begin{ruledtabular}
    \begin{tabular}{ccc}
        \multicolumn{3}{c}{With a \tsf{pc-D} $d$, a \tsf{\textbf{xy}-CS} $S$ ...} \\
        \backslashbox{\textbf{\tsf{x}}}{\textbf{\tsf{y}}}     & \tsf{c}        &  $\tsf{c}' (\neq \tsf{c})$ \\ \hline
        \tsf{p} & \begin{tabular}[c]{@{}c@{}} can overlap with $d$ \\ only if $S$ is spacelike \end{tabular} & cannot overlap with $d$ \\
                                %  & only if $d$ is spacelike & and surround $d$ \\
        \tsf{d} & can end at $d$        & cannot end at $d$
    \end{tabular}
    \end{ruledtabular}
\end{table}

Table~\ref{table:CS_defect_relationship} shows allowed positional relations between a \tsf{pc-D} $d$ and a compatible \tsf{CS} with each primality and color, derived from Theorem \ref{thr:compatible_CSs} and Table~\ref{table:qubits_corr_to_each_element_in_shrunk_lattices}.
Remark that $d$ is composed of \tsf{pcAQ}s and \tsf{pCQ}s.
A \tsf{pc}$'$\tsf{-CS} has primal interior qubits which should be measured in the $X$ basis for the \tsf{CS} to be compatible, thus can overlap with $d$ unless they share common qubits, which is possible only if $\tsf{c}' = \tsf{c}$ and the overlapped region of the \tsf{CS} is spacelike \footnote{
    If this is the case, the interior qubits of the \tsf{CS} correspond to spacelike faces in $\mathcal{L}^\tsf{pc}$, which are $\tsf{pc}''\tsf{AQ}$s ($\tsf{c}'' \neq \tsf{c}$) according to Table~\ref{table:qubits_corr_to_each_element_in_shrunk_lattices}. 
    They are surely not be in the \tsf{pc-D}. 
    Otherwise, the \tsf{pc}$'$\tsf{-CS} and defect share at least one qubit if they overlap.
}.
% Note that a primal \tsf{CS} can surround $d$ as Fig.~\ref{fig:defect_structure}(b).
Since its boundary qubits are dual, it cannot end at $d$.
A \tsf{dc}$'$\tsf{-CS} has primal boundary qubits, thus can end at $d$ if the boundary qubits are in $d$, which is possible if $\tsf{c}' = \tsf{c}$ \footnote{
    The boundary qubits of a \tsf{dc-CS} correspond to edges in $\mathcal{L}^\tsf{dc}$, which are \tsf{pcAQ}s or \tsf{pcL}s.
}.
Since its interior qubits are dual, it can freely pass $d$.

We mainly concern two types of \tsf{CS}s with respect to a \tsf{pc-D}: \tsf{pc-CS}s surrounding the defect and \tsf{dc-CS}s ending at it, as shown schematically in Fig.~\ref{fig:defect_structure}(a) and (b) and explicitly in Fig.~\ref{fig:defect_structure}(c) and (d).
Each of such \tsf{CS}s is compatible with all the qubits except the boundary qubits in the two ends about the direction of the defect.

\subsection{Defining a logical qubit}
\label{subsec:logical_qubit}

We first define \textit{connected 1-chains} as follows.

\begin{definition}[\textbf{Connected 1-chain}]
    A 1-chain $h_1$ is \textup{connected} if and only if it satisfies $\qty| \partial h_1 | \leq 2$.
    It is regarded to be \textup{closed} if $\qty| \partial h_1 | = 0$ and \textup{open} otherwise.
\end{definition}

\begin{figure*}[t!]
	\centering
	\includegraphics[width=\textwidth]{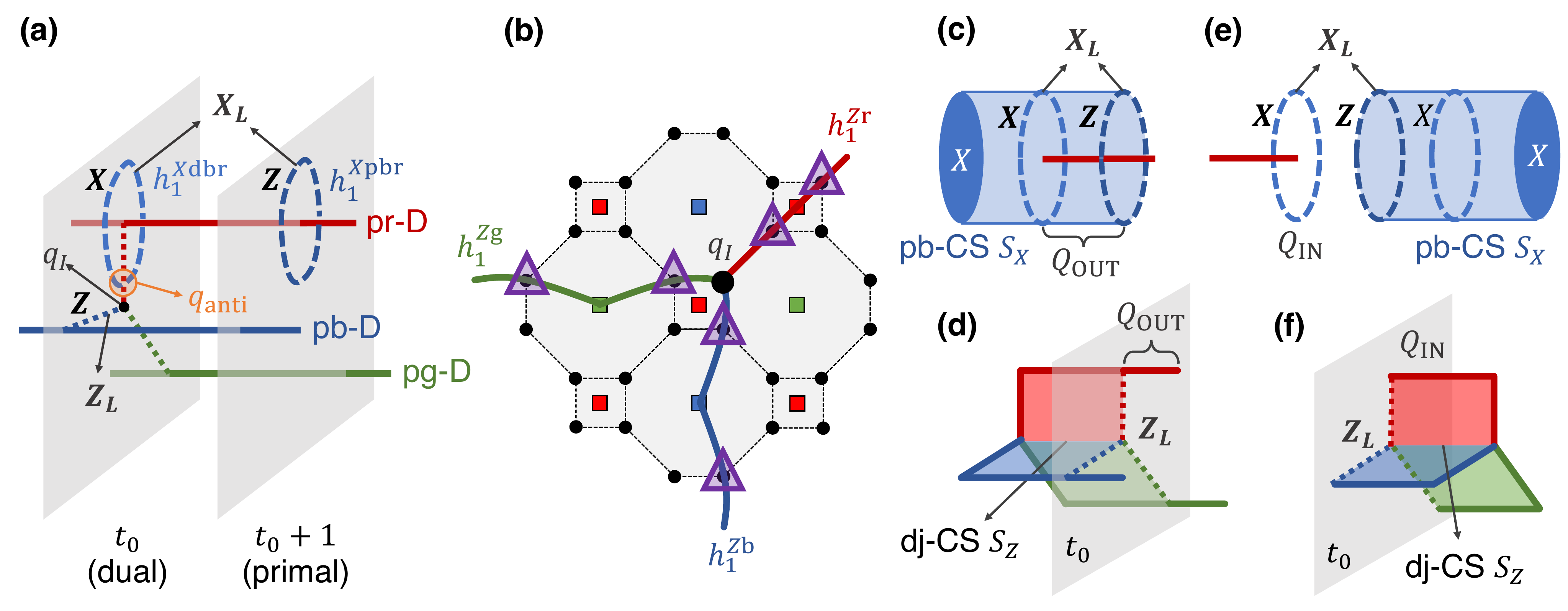}
	\caption{
	    Definition of a primal logical qubit and its initialization and measurement.
	    (a) Schematic diagram of a primal logical qubit composed of three parallel primal timelike defects with different colors.
	   % Qubits in the dual layer of $t=t_0$ are regarded as the input qubits ($Q_\mathrm{IN}$).
	    Blue dashed lines indicate 1-chains $h_1^{X\tsf{dbr}}$ and $h_1^{X\tsf{pbr}}$, which constitute $\supp_X\qty(X_L)$ and $\supp_Z\qty(X_L)$, respectively.
	   % which corresponds to a connected dual or primal 1-chain surrounding the \tsf{pr-D}, respectively.
	    Red, green, and blue dotted lines indicate 1-chains $h_1^{Z\tsf{r}}$, $h_1^{Z\tsf{g}}$, and $h_1^{Z\tsf{b}}$, respectively, which constitute $\supp_Z\qty(Z_L)$ except the \tsf{pCQ} $q_I$ at which they end.
	   % , each of which is a connected 1-chain starting from a \tsf{pCQ} $q_I$ to the corresponding defect.
	   % Here $q_I$ is omitted from the support.
	    $\supp_X\qty(X_L)$ and $\supp_Z\qty(Z_L)$ meet at a \tsf{pCQ} $q_\mathrm{anti}$, thus they anticommute with each other.
	    (b) Structure of $Z_L$ near $q_I$ in a 4-8-8 CCCS.
	    Colored lines are $h_1^{Z\tsf{r}}$, $h_1^{Z\tsf{g}}$, and $h_1^{Z\tsf{b}}$, respectively.
	    Purple triangles indicate $\supp\qty(Z_L)$.
	    (c) $X_L$- and (d) $Z_L$-initialization.
	    A logical qubit prepared in the output layers $Q_\mathrm{OUT}$ ($t_0$- and $\qty(t_0+1)$-layer).
	    For the $X_L$-initialization, the defects are made to start from the $t_0$-layer.
	    For the $Z_L$-initialization, they are extended to meet at a point before the layer-$t_0$.
	    $X_L$ ($Z_L$) is then a part of a \tsf{pb-CS} $S_X$ (\tsf{dj-CS} $S_Z$) which is a stabilizer.
	    After the measurement step, the logical qubit in $Q_\mathrm{OUT}$ is initialized to $\ket{\pm_L}$ ($\ket{0_L}$ or $\ket{1_L}$), depending on the measurement result of $X_L S_X$ ($Z_L S_Z$).
	    (e) $X_L$- and (f) $Z_L$-measurement of a logical qubit inserted through the input layer ($t_0$-layer).
    	Each of them is done by reversing the corresponding initialization process.
    	There then exists a \tsf{pb-CS} $S_X$ (\tsf{dj-CS} $S_Z$) which is a stabilizer, such that the measurement result of $S_X X_L$ ($S_Z Z_L$) determine the $X_L$($Z_L$)-measurement result.
	}
	\label{fig:logical_qubit}
\end{figure*}

To define a logical qubit, we consider three parallel timelike defects with different colors passing through the $t_0$- and $\qty(t_0+1)$-layer for a given integer $t_0$, as visualized schematically in Fig.~\ref{fig:logical_qubit}(a).
% \footnote{$t_0$ should be large enough to guarantee fault-tolerance. It is relatively hard to correct errors on qubits near $t=0$ since they are too close to a boundary.}
% We assume that the qubits with $t=t_0$ are the input qubits ($Q_\mathrm{IN}$) and all the qubits with $t < t_0$ have already been measured.
The constructed logical qubit is primal (dual) if the defects are primal (dual) and $t_0$ is odd (even).

We define a logical qubit by specifying the logical-$X$ ($X_L$) and logical-$Z$ ($Z_L$) operators.
To define $X_L$, we consider two closed connected spacelike 1-chains $h_1^{X\tsf{dcc}'} \in H^\tsf{dc}_1$ and $h_1^{X\tsf{pcc}'} \in H^\tsf{pc}_1$ for a given pair of different colors $\qty( \tsf{c}, \tsf{c}' )$.
$h_1^{X\tsf{dcc}'}$ is located in the $t_0$-layer and surrounding the $\tsf{pc}'\tsf{-D}$.
$h_1^{X\tsf{pcc}'}$ is defined by parallelly moving $h_1^{X\tsf{dcc'}}$ one unit positively along the time axis.
An example of $X_L$ is shown in Fig.~\ref{fig:logical_qubit}(a) for the case of $\qty( \tsf{c}, \tsf{c}' ) = \qty( \tsf{b}, \tsf{r} )$.
Note that the two 1-chains consist of \tsf{pcL}s and \tsf{dcL}s, respectively.
% we consider a set $\overline{L^{\mathrm{sur},\tsf{b}\tsf{r}}_{t_0}}$ of connected \tsf{dbL}s in $t=t_0$ surrounding the \tsf{pr-D}.
% We also consider a set $L^{\mathrm{sur},\tsf{br}}_{t_0+1}$ of \tsf{pbL}s each of which is located at the same spatial position with a link in $\overline{L^{\mathrm{sur},\tsf{br}}_{t_0}}$ but in $t=t_0+1$.
We then define
\begin{align}
    X_L := F_{X}^{\tsf{cc}'}\qty(t_0) := X\qty(h_1^{X\tsf{dcc}'}) Z\qty(h_1^{X\tsf{pcc}'}).
    \label{eq:logical_X_def}
\end{align}
Note that $\supp_Z\qty(X_L) = Q\qty(h_1^{X\tsf{pcc'}})$ may be in the boundary of a $\tsf{pc-CS}$ since the boundary is a 1-chain in $H^\tsf{pc}_1$ as well.
% Note also that $h_1^{X\tsf{dcc'}}$ is dual to a timelike 2-chain $h_2^{X\tsf{pcc'}} \in H^\tsf{pc}_2$ (see Table \ref{table:qubits_corr_to_each_element_in_shrunk_lattices}) satisfying $h_1^{X\tsf{pcc'}} \subset \partial h_2^{X\tsf{pcc'}}$.
The colors \tsf{c} and $\tsf{c}'$ can be any pair of different colors, and they are proven to be equivalent in Sec.~\ref{subsubsec:cnot_gate}.

For the $Z_L$ operator, we consider an open connected spacelike 1-chain $h_1^{Z\tsf{c}} \in H^\tsf{dc}_1$ for each color $\tsf{c}$, which is located in the $t_0$-layer and connects the \tsf{pc-D} and a common \tsf{pCQ} $q_I$, as shown in Fig.~\ref{fig:logical_qubit}(a) and (b).
% The three 1-chains have a Y-shape as visualized in Fig.~\ref{fig:logical_qubit}(a).
Note that $h_1^{Z\tsf{c}}$ is composed of \tsf{pcL}s.
We define
\begin{align}
    Z_L &:= F_Z \qty(t_0) \nonumber \\
    &:= Z\qty(h_1^{Z\tsf{r}}) Z\qty(h_1^{Z\tsf{g}}) Z\qty(h_1^{Z\tsf{b}}) Z\qty(q_I).
    \label{eq:logical_Z_def}
\end{align}
Note that $q_I$ is out of $\supp\qty(Z_L)$.
It is worth noticing that $\supp\qty(Z_L)$ may be in the boundary of a \tsf{dj-CS}, which is verifiable by comparing $\supp\qty(Z_L)$ and the structure of a timelike joint of \tsf{CS}s shown in Fig.~\ref{fig:joints}(c).
% ; both of them contain dual 1-chains of each color and the \tsf{pCQ}s at which the 1-chains meet is out of each of them.

$X_L$ and $Z_L$ defined above anticommute with each other, considering that $\supp_Z\qty(Z_L)$ and $\supp_X\qty(X_L)$ meet at a \tsf{pCQ} $q_\mathrm{anti}$ in Fig.~\ref{fig:logical_qubit}(a). A dual logical qubit is defined analogously, but now the logical operators are defined oppositely; $\supp\qty(Z_L)$ surrounds a defect and $\supp\qty(X_L)$ ends at each defect.

% The encoding scheme defined above is directly derived from the defect-based encoding in color codes shown in Fig.~\ref{fig:color_code_defect}(b).
% A logical qubit in a color code can be embedded in a CCCS by attaching an ancilla qubit for each face, adding subsequent layers, initializing all the newly added qubits to $\ket{+}$, and performing \cz~gates.

% For a logical-$X$ operator ($X_L$), we consider a ``strap-shaped' spacelike 2-chain $h_2^\tsf{d} \in \overline{H^\tsf{b}_2}$ surrounding the red defect where all the faces pass through $t=0$ and all the edges in $\partial h_2^\tsf{d}$ are on $t=1$, \figrefph.
% \footnote{There are no edges on $t=-1$ since the lattices are in the $t \geq 0$ area. In other words, the spacelike faces on $t=0$ have boundaries only on $t = 1$.'}
% We then define $X_L := X(h_2^\tsf{d})Z(\partial h_2^\tsf{d})$.
% Note that it is not a stabilizer since $Q\qty( h_2 ) \cap Q_\mathrm{IN} \neq \emptyset$.
% %prove?

% For a logical-$Z$ operator ($Z_L$), we consider three spacelike 2-chains of different colors with a spacelike joint starting from $t=0$, where each of them ends at the defect of the same color \figrefph.
% We then define $Z_L$ as the tensor products of $Z$ operators on the qubits in $t=0$ side of the boundaries of the 2-chains.

\subsection{Initialization and measurement of a logical qubit}
\label{subsec:initialization}

% \begin{figure}[t!]
% 	\centering
% 	\includegraphics[width=\columnwidth]{logical_measurements.pdf}
% 	\caption{
%     	(a) $X_L$- and (b) $Z_L$-measurement.
%     	Each of them is done by reversing the corresponding initialization process. 
%     	The $X_L$-measurement is done using a \tsf{pb-CS} $S_\tsf{CS}\qty(h_2)$ where $Q(\partial h_2) = \supp_Z\qty(X_L)$.
%     	The $Z_L$-measurement is done by extending each defect along $\supp\qty(Z_L)$.
% 	}
% 	\label{fig:logical_qubit}
% \end{figure}

We first describe initializing a primal logical qubit to an eigenstate of $X_L$ or $Z_L$. 
Initialization of a dual logical qubit can be analogously done. 
As mentioned at the beginning of this section, in each block of logical initialization, there is no input layer and the initialized state is prepared in the output layers $Q_\mathrm{OUT}$ ($t_0$- and $\qty(t_0+1)$-layer) after the measurement step.

The $X_L$-initialization of a primal logical qubit is done by making the defects start from the $t_0$-layer.
$X_L$ given in Eq.~\eqref{eq:logical_X_def} is then a part of a ``cup-shaped'' \tsf{pc-CS} $S_X$ as shown in Fig.~\ref{fig:logical_qubit}(c).
Since $X_L S_X$ has the support out of the output qubits and commutes with each single-qubit measurement in the measurement step, the post-measurement state is an eigenstate of $X_L S_X$.
$S_X$ is a stabilizer both before and after the measurement step due to Theorems \ref{thr:cs_and_stabilizers} and \ref{thr:compatible_CSs}.
Therefore, the post-measurement state is also an eigenstate of $X_L$ and the eigenvalue is determined by the measurement result of $X_L S_X$.
% More explicitly, there exists a 2-chain $h_2 \in H_2^\tsf{pc}$ such that
% \begin{align*}
%     &h_1^{X\tsf{pcc'}} = \partial h_2, \\
%     &h_2^{X\tsf{pcc'}} := \overline{h_1^{X\tsf{dcc'}}} \subset h_2, \\
%     &Q\qty(h_2 \setminus h_2^{X\tsf{pcc'}}) \cap Q_\mathrm{OUT} = \emptyset
% \end{align*}
% hold, where $h_1^{X\tsf{pcc'}}$ and $h_1^{X\tsf{dcc'}}$ are the 1-chains in Eq.~\eqref{eq:logical_X_def}, and $S_X := S\qty(h_2)$ is a stabilizer due to Theorem \ref{thr:cs_and_stabilizers}.
% Since $\supp\qty(X_L)$ and $\supp\qty(Z_L)$ are 1-chains with the same color adjacent along the time axis, we can always find such a \tsf{CS}.
% Since $X_L S_X$ has the support out of the output qubits and commutes with each single-qubit measurement, the logical qubit is initialized to $\ket{\pm_L} := \qty(\ket{0_L} \pm \ket{1_L})/\sqrt{2}$, depending on the measurement result of $X_L S_X$.
% = \supp_X \qty(X_L S_X) = Q\qty(h_2 \setminus h_2^{X\tsf{pcc'}})$ 

The $Z_L$-initialization of a primal logical qubit is done by extending the defects to meet at a qubit before the $t_0$-layer, as shown in Fig.~\ref{fig:logical_qubit}(d).
$Z_L$ given in Eq.~\eqref{eq:logical_Z_def} is then a part of a \tsf{dj-CS} $S_Z$ which is a stabilizer.
From an analogous argument, the post-measurement state is an eigenstate of $Z_L$ and the eigenvalue is determined by the measurement result of $Z_L S_Z$.
% By comparing the structures of a timelike joint of \tsf{CS}s in Fig.~\ref{fig:joints}(a) and $\supp\qty(Z_L)$ in Fig.~\ref{fig:logical_qubit}(b), we can verify that such a \tsf{dj-CS} always can be found. 
% The logical qubit is therefore initialized to $\ket{0_L}$ or $\ket{1_L}$, depending on the measurement result of $Z_L S_Z$ whose support is out of $Q_\mathrm{OUT}$.

The $X_L$- or $Z_L$-measurement is done by reversing the time order from the corresponding initialization process, as shown in Fig.~\ref{fig:logical_qubit}(e) and (f).
This time, $Q_\mathrm{IN}$ is the $t_0$-layer and $Q_\mathrm{OUT}$ is empty.
Regarding the $X_L$-measurement, there exists a \tsf{pb-CS} $S_X$ which is a stabilizer before the measurement step such that $X_L S_X$ commutes with each single-qubit measurement in the measurement step.
Therefore, $X_L$ is \textit{equivalent} to $X_L' := X_L S_X$; namely, $\expval{X_L}{\psi} = \expval{X_L'}{\psi}$ holds for every stabilized state $\ket{\psi}$ before the measurement step, thus redefining $X_L$ to $X_L'$ does not change the logical state encoded in $\ket{\psi}$.
The measurement result of $X_L'$ can be directly obtained from the results of the measurement step.
The $Z_L$-measurement process can be verified analogously.

\subsection{Elementary logical gates}
\label{subsec:elementary_gates}

% Now we describe the implementation schemes of the elementary logical gates.
% For each gate except the phase gate (which is implemented indirectly using an ancilla logical qubit), we assume that the gate operates between the input qubits $Q_\mathrm{IN}$ of the $t_0$-layer and the output qubits $Q_\mathrm{OUT}$ of the $t_1$- and $\qty(t_1+1)$-layer.
% We first construct the identity gate and present general principles for building up logical gates.
% We then consider the \cnot, Hadamard, and phase gates which together completely generate the Clifford group.
% Throughout this subsection, primal logical qubits are assumed unless specified otherwise.

\subsubsection{Identity gate}
\label{subsubsec:identity_gate}

\begin{figure}[t!]
	\centering
	\includegraphics[width=\columnwidth]{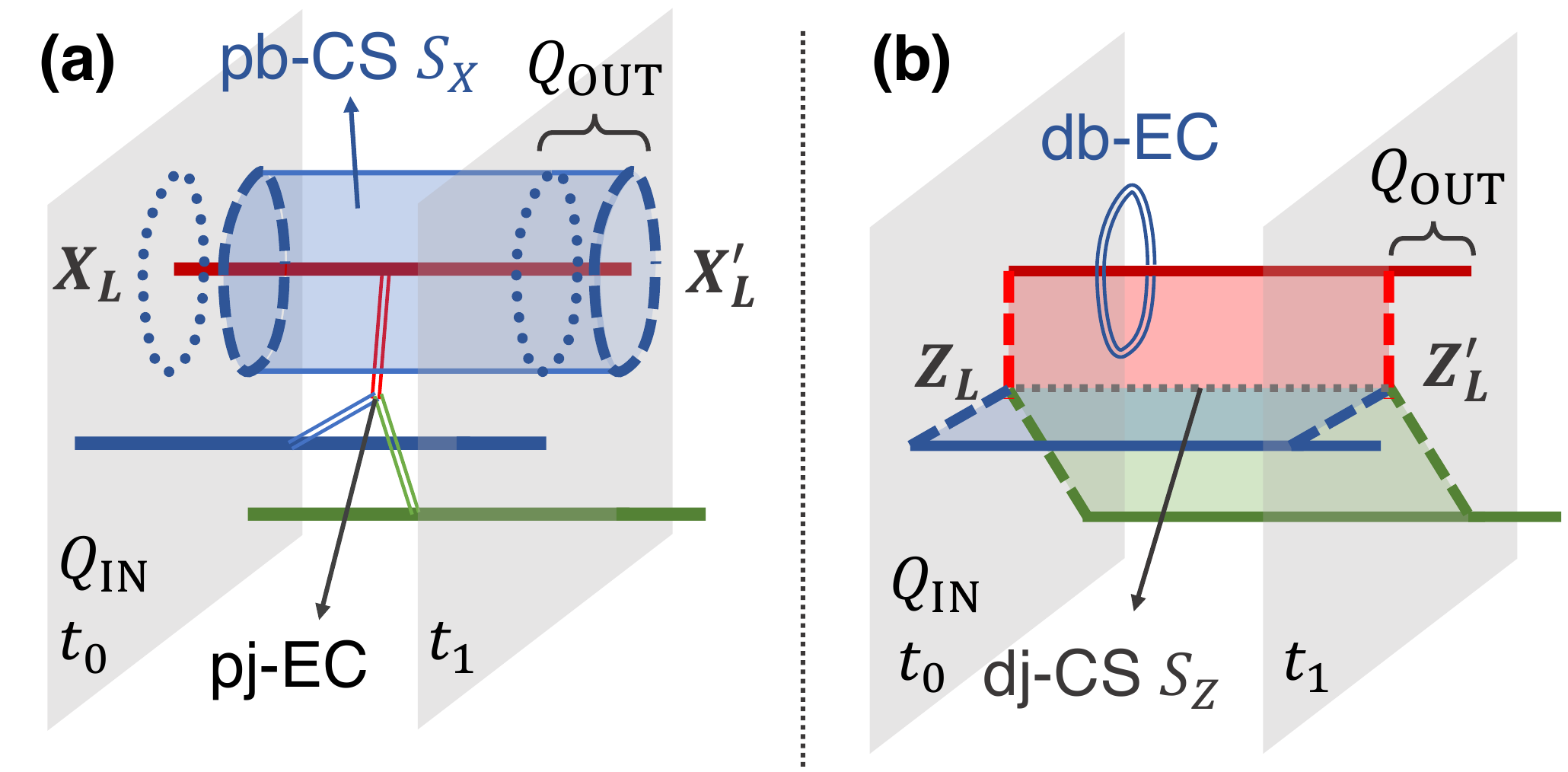}
	\caption{
    	Logical identity gate of a primal logical qubit between the input layer $Q_\mathrm{IN}$ ($t_0$-layer) and the output layers $Q_\mathrm{OUT}$ ($t_1$- and $\qty(t_1+1)$-layer).
    	The gate is constructed by extending the defects from $Q_\mathrm{IN}$ to $Q_\mathrm{OUT}$.
	    The logical-$X$ operator in $Q_\mathrm{IN}$ ($Q_\mathrm{OUT}$) is $X_L$ ($X'_L$), and $Z_L$ and $Z'_L$ are defined similarly.
    	(a) $X_L$ is transformed into $X'_L$ via a \tsf{pb-CS} $S_X$ surrounding the red defect, and (b) $Z_L$ is transformed into $Z'_L$ via a \tsf{dj-CS} $S_Z$ ending at the three defects.
    % 	Each colored dashed (dotted) line indicates the links constituting the $X$($Z$)-support of the corresponding logical operator.
        % $h_1^{X\tsf{p}}(\cdot)$ in (a) is $h_1^{X\tsf{pbr}}(\cdot)$ defined in Sec.~\ref{subsec:logical_qubit}.
    	Double lines indicate error chains causing logical errors covered in Sec.~\ref{subsec:error_correction_vacuum}.
	}
	\label{fig:identity_gate}
\end{figure}

The identity gate of a primal logical qubit is constructed just by extending the defects along the time axis between $Q_\mathrm{IN}$ ($t_0$-layer) and $Q_\mathrm{OUT}$ ($t_1$- and $\qty(t_1+1)$-layer) as shown in Fig.~\ref{fig:identity_gate}.
Let $X_L$ and $X'_L$ be the logical-$X$ operators of the input and output logical qubits, respectively: $X_{L} := F_X^{\tsf{br}}(t_0)$ and $X_{L}' := F_X^{\tsf{br}}(t_1)$, where $F_X^{\tsf{br}} (\cdot)$ is given in Eq.~\eqref{eq:logical_X_def}.
We consider a \tsf{pb-CS} $S_X$ which surrounds the \tsf{pr-D} and ends at $\supp_Z\qty(X_L)$ and $\supp_Z\qty(X_L')$, as shown in Fig.~\ref{fig:identity_gate}(a).
Since $S_X$ is a stabilizer, $X_L$ is equivalent to
\begin{align}
    \widetilde{X}_L := S_X X_L = \qty( \bigotimes_{q \in V_X} X(q)) X_{L}',
    \label{eq:identity_gate_XL_relation}
\end{align}
where $V_X := \supp\qty(S_X X_L X_L') \subset V\setminus Q_\mathrm{OUT}$. 
After the measurements of the qubits of $t < t_1$, $\widetilde{X}_L$ is \textit{transformed} into 
\begin{align*}
    \qty( \prod_{q \in V_X} x_q ) X'_L := x_X X'_L,
\end{align*}
where $x_q$ ($z_q$) is the $X$($Z$)-measurement result of the qubit $q$. In other words, 
\begin{align}
    \expval{\widetilde{X}_L}{\psi} = \expval{x_X X_L'}{\psi'}
    \label{eq:identical_exp_val}
\end{align}
holds, where $\ket{\psi}$ and $\ket{\psi'}$ are the states before and after the measurements (see Appendix \ref{app:trans_log_op} for the proof).

We do a similar thing on the $Z_L$ operators.
Denoting those of the input and output logical qubits as $Z_L$ and $Z_L'$, respectively, we consider a \tsf{dj-CS} $S_Z$ ending at $\supp\qty(Z_L)$, $\supp\qty(Z'_L)$, and the defects, as Fig.~\ref{fig:identity_gate}(b).
$Z_L$ is then equivalent to
\begin{align*}
    \widetilde{Z}_L := S_Z Z_L = \qty( \bigotimes_{q \in V_Z} X(q)) \qty( \bigotimes_{q \in D_Z} Z(q)) Z_L',
\end{align*}
where $V_Z := \supp_X \qty(S_Z Z_L Z_L') \subset V \setminus Q_\mathrm{OUT}$ and $D_Z := \supp_Z \qty(S_Z Z_L Z_L') \subseteq D \setminus Q_\mathrm{OUT}$.
After the measurements of the qubits of $t < t_1$, $\widetilde{Z}_L$ transforms into $x_Z z_Z Z_L'$ where $x_Z := \prod_{q \in V_Z} x_q$ and $z_Z := \prod_{q \in D_Z} z_q$.

The transformations of the logical operators are summarized as
\begin{align}
    X_L \rightarrow x_X X_L', \qquad Z_L \rightarrow x_Z z_Z Z_L'.
    \label{eq:identity_gate_transformation}
\end{align}
More explicitly, they are written as
\begin{align*}
    \expval{X_L}{\psi} &= \expval{x_X X_L'}{\psi'}, \\
    \expval{Z_L}{\psi} &= \expval{x_Z z_Z Z_L'}{\psi'},
\end{align*}
where $\ket{\psi}$ and $\ket{\psi'}$ are the states before and after the measurement step, respectively.
Therefore, the input logical state $\ket{\psi_L}$ encoded in $\ket{\psi}$ with the logical Pauli operators $\qty{X_L, Z_L}$ is transformed into
\begin{align*}
    \ket{\psi_L'} = X^{(1 - x_Z z_Z)/2} Z^{(1 - x_X)/2} \ket{\psi_L}
\end{align*}
encoded in $\ket{\psi'}$ with the logical Pauli operators $\qty{X_L', Z_L'}$.
This transformation corresponds to the identity gate up to some byproduct operators determined by the measurement results.
The byproduct operators can be handled by a software to be delayed to the end of the entire circuit and finally merged with the logical measurements \cite{fowler2012surface}.

The above arguments show the basic ideas for implementing logical gates.
Regarding $n$ logical qubits, let $P_{Li}$ for each $P \in \qty{X, Z}$ and integer $i \leq n$ denote the logical-$P$ operator of the $i$th logical qubit.
To construct a general logical gate $U$ for $n$ logical qubits, one should find a configuration of defects (and Y-planes for some gates) where a \tsf{CS} $S_{Pi}$ exists for each $P_{Li}$ satisfying the following conditions:

\begin{condition}
    $S_{Pi}$ should connect $P_{Li}$ of the input logical qubits and $U P_{Li} U^\dagger$ of the output logical qubits.
    $X_L$ ($Z_L$) of a logical qubit can be connected with primal (dual) \tsf{CS}s.
    \label{cond:condition_1}
\end{condition}

\begin{condition}
    $S_{Pi}$ should be compatible with all the qubits except the output qubits and $\supp\qty(P_{Li})$; it satisfies the relationships shown in Table~\ref{table:CS_defect_relationship} in that region.
    \label{cond:condition_2}
\end{condition}

If such \tsf{CS}s exist, the configuration implements the desired logical gate with some byproduct operators obtained from the measurement results.

% We can also construct the identity gate while changing the primality of the qubit. \figrefph shows such a configuration, where 

\subsubsection{\cnot~and primality-switching gates}
\label{subsubsec:cnot_gate}

\begin{figure}[t!]
	\centering
	\includegraphics[width=\columnwidth]{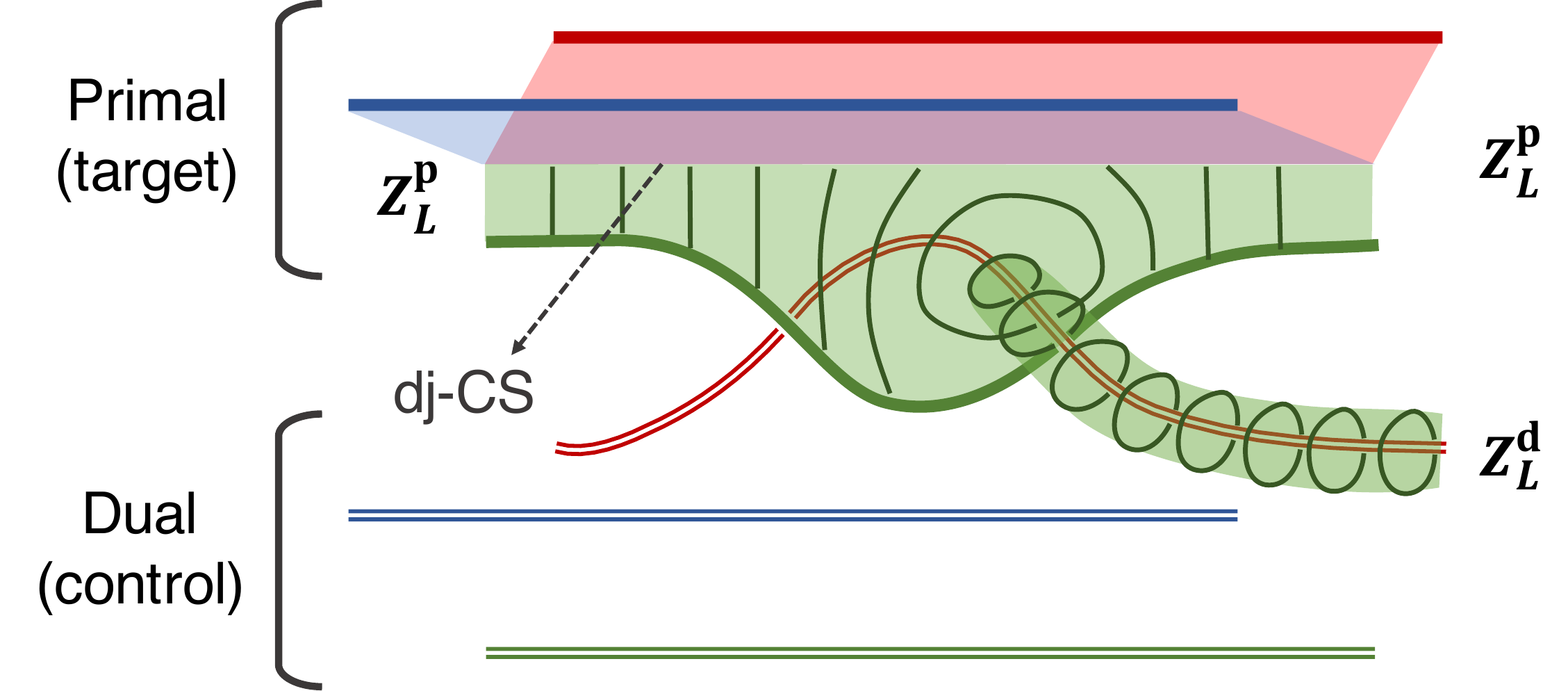}
	\caption{
        Construction of the \cnot~gate between a primal logical qubit (target) and a dual one (control).
        Each colored single (double) line indicates the primal (dual) defect of the corresponding color.
        $Z_L^\tsf{p} \otimes I_L^\tsf{d}$ is transformed into $Z_L^\tsf{p} \otimes Z_L^\tsf{d}$ via the presented \tsf{dj-CS}.
	}
	\label{fig:cnot_gate}
\end{figure}

We first consider the logical \cnot~gate between a primal logical qubit (target) and a dual one (control).
Figure~\ref{fig:cnot_gate} illustrates the defect configuration, where the \tsf{pg-D} of the primal logical qubit and the \tsf{dr-D} of the dual one are twisted one round with each other, which is commonly called \textit{defect braiding}.
The logical Pauli operators are transformed as
\begin{align}
    X_{L}^\tsf{p} I_{L}^\tsf{d} &\rightarrow X_{L}^\tsf{p} I_{L}^\tsf{d}, \qquad
    I_{L}^\tsf{p} X_{L}^\tsf{d} \rightarrow X_{L}^\tsf{p} X_{L}^\tsf{d}, \nonumber \\
    Z_{L}^\tsf{p} I_{L}^\tsf{d} &\rightarrow Z_{L}^\tsf{p} Z_{L}^\tsf{d}, \qquad
    I_{L}^\tsf{p} Z_{L}^\tsf{d} \rightarrow I_{L}^\tsf{p} Z_{L}^\tsf{d}, \label{eq:cnot_gate_transformation}
\end{align}
where the tensor product symbols and the sign terms such as $x_X$, $x_Z$, and $z_Z$ in Eq.~\eqref{eq:identity_gate_transformation} are omitted, and each superscript \tsf{p} or \tsf{d} indicates the primality of the logical qubit.
The above transformation is exactly the Heisenberg picture of the \cnot~gate where the primal logical qubit is the target.

We need to find \tsf{CS}s satisfying two Conditions presented in Sec.~\ref{subsubsec:identity_gate} to verify the transformations in Eq.~\eqref{eq:cnot_gate_transformation}.
A dual \tsf{CS} for the transformation of $Z_L^\tsf{p} \otimes I_L^\tsf{d}$ is presented schematically in Fig.~\ref{fig:cnot_gate}.
Note that the ``tunnel'' of the \tsf{CS} along the \tsf{dr-D} must be formed since the \tsf{dr-D} cannot overlap with a \tsf{dg-CS} (see Table~\ref{table:CS_defect_relationship}).
A \tsf{CS} for $I_L^\tsf{p} \otimes X_L^\tsf{d}$ can be constructed analogously; now, a tunnel of a \tsf{pr-CS} is made along the \tsf{pg-D}.
The other two transformations are straightforward.
% , since a \tsf{dr-CS} (\tsf{pg-CS}) surrounding the \tsf{pg-D} (\tsf{dr-D}) 

% The \cnot~gate between the qubits of the same primality is slightly more complex.
% The direct implementation is impossible, thus we use `intermediate' defects.
% \figrefph~shows the \cnot~gate between the two primal qubits with intermediate dual defects and the \tsf{CS}s proving the transformation in Eq.~\eqref{eq:cnot_gate_transformation}.

\begin{figure}[t!]
    \centering
    % \begin{quantikz}
    %     \lstick{$\ket{\psi_\mathrm{IN}}_\tsf{p}$} & \targ{} & \meter{$M_Z^\tsf{p}$} \arrow[r] & \rstick{$z_\tsf{p}$} \\
    %     \lstick{$\ket{+_L}_\tsf{d}$} & \ctrl{-1} & \qw \rstick{$\ket{\psi_\mathrm{OUT}}_\tsf{d}$}
    % \end{quantikz}
    \includegraphics[width=\columnwidth]{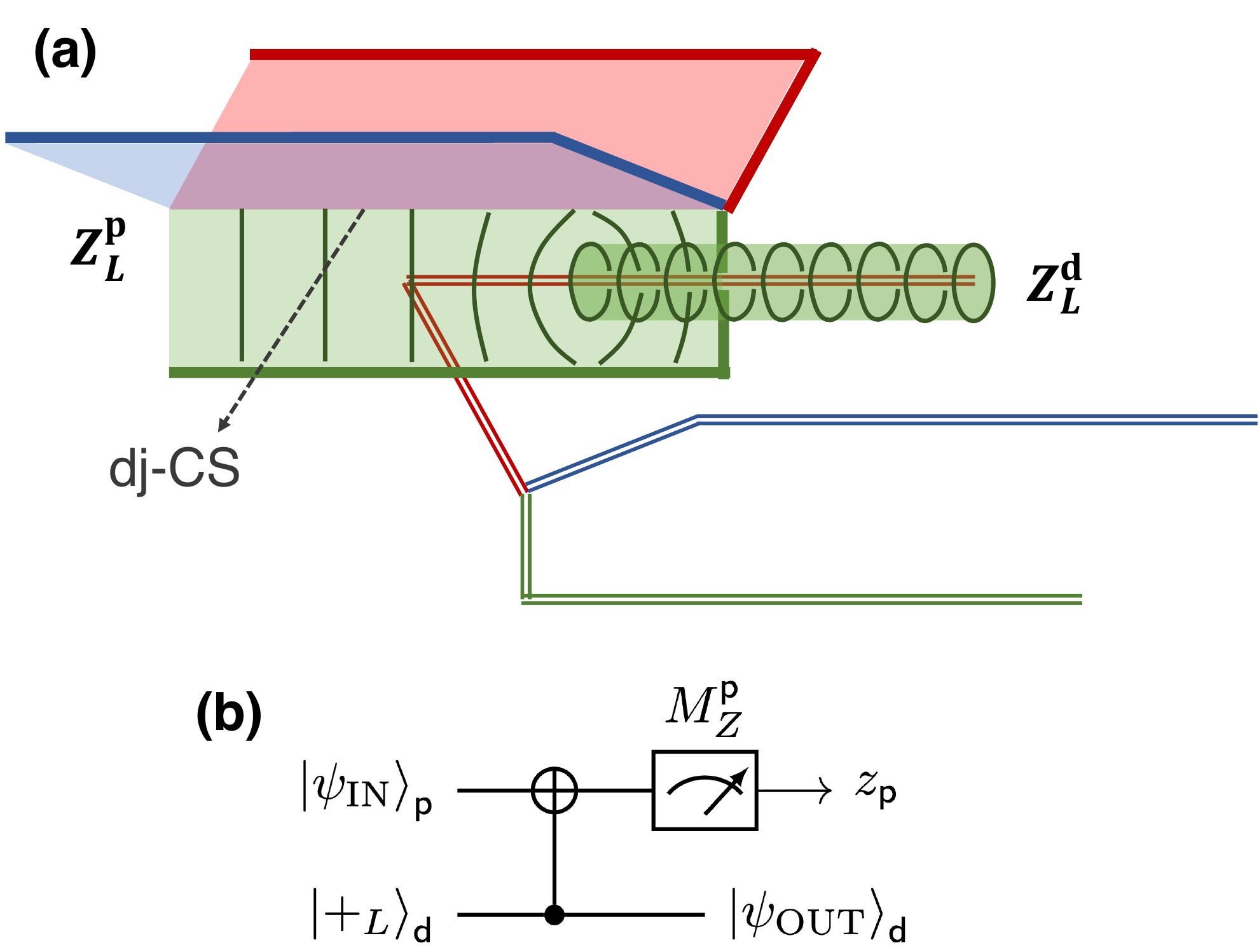}
    \caption{
        (a) Construction of the primality-switching gate changing a primal logical qubit to a dual one.
        $Z_L^\tsf{p}$ is transformed into $Z_L^\tsf{d}$ via the presented \tsf{dj-CS}.
        (b) Circuit equivalent to the primality-switching gate.
        $M_Z^\tsf{p}$ is the $Z_L$-measurement on the primal qubit, and the result is $z_\tsf{p}$.
    }
    \label{fig:PS_identity_gate}
\end{figure}

Exploiting the \cnot~gate discussed above, it is possible to make the \textit{primality-switching} gate which changes a primal logical qubit to a dual one, by ``closing'' the input part of the dual one and the output part of the primal one, as shown in Fig. \ref{fig:PS_identity_gate}(a).
Remark that these closures indicate the $Z_L$-measurement of the primal one and the $X_L$-initialization of the dual one.
The modified configuration is thus equivalent to the circuit in Fig.~\ref{fig:PS_identity_gate}(b) up to byproduct operators, which implements the identity or $X_L$ gate while changing the primality.
% The logical operators are transformed along the circuit as:
% \begin{align*}
%     \begin{array}{ccccccc}
%         & & & \cnot & & M_Z^\tsf{p} & \\
%         X_L^\tsf{p} I_{L}^\tsf{d} & \sim & X_{L}^\tsf{p} X_{L}^\tsf{d} & \longrightarrow & I_L^\tsf{p} X_L^\tsf{d} & \rightarrow & X_L^\tsf{d}, \\
%         Z_{L}^\tsf{p} I_{L}^\tsf{d} &      &         & \longrightarrow & Z_L^\tsf{p} Z_L^\tsf{d}  & \rightarrow & z_\tsf{p} Z_L^\tsf{d},
%     \end{array}
% \end{align*}
% where the equivalence relation (``$\sim$'') is from multiplying the initial stabilizer $I_L^\tsf{p} X_L^\tsf{d}$.
% The circuit thus implement the identity gate up to byproduct products, while changing the primality.
% , and one can check from stabilizer formalism that it implements the identity gate with some byproduct operators:
Alternatively, this result is directly obtainable by finding appropriate \tsf{CS}s; for example, the \tsf{dj-CS} in Fig.~\ref{fig:PS_identity_gate}(a) verify the transformation of $Z_L^\tsf{p}$ to $Z_L^\tsf{d}$.
The primality-switching gate from a dual logical qubit to a primal one can be made in a similar manner.
% In other words, it is effectively the same with fixing $\pm Z_{L}^\tsf{p}' I_{L}^\tsf{d}'$ and $\pm I_{L}^\tsf{p} X_{L}^\tsf{d}$ as stabilizers.
% From Eq.~\eqref{eq:cnot_gate_transformation}, we can confirm the following transformation corresponding to the identity gate with some byproduct operators:
% \begin{align*}
%     &Z_{L}^\tsf{p} I_{L}^\tsf{d} \rightarrow Z_{L}^\tsf{p}' Z_{L}^\tsf{d}' \sim  \pm I_{L}^\tsf{p}' Z_{L}^\tsf{d}', \\
%     &X_{L}^\tsf{p} I_{L}^\tsf{d} \sim \pm X_{L}^\tsf{p} X_{L}^\tsf{d} \rightarrow \pm I_{L}^\tsf{p}' X_{L}^\tsf{d}',
% \end{align*}
% where a ``$\sim$'' indicates an equivalence by stabilizers.

The primality-switching gate enables the \cnot~gate between logical qubits with arbitrary primalities.
Regardless of the primalities of the input logical qubits, one can switch them to primal (target) or dual (control), and apply the \cnot~gate in Fig.~\ref{fig:cnot_gate}.

Note that the equivalence between the different definitions of the $X_L$ operator, related to the choice of the color pair $(\tsf{c}, \tsf{c}')$ in Eq.~\eqref{eq:logical_X_def}, can be proven with the primality-switching gate.
We consider a chain of two primality-switching gates: primal $\rightarrow$ dual $\rightarrow$ primal.
No matter how $X_L$ is defined in the first primal logical qubit, it becomes symmetric about the color in the dual one.
We can thus transform it into any definition of $X_L$ in the final primal one.

\subsubsection{Hadamard gate}
\label{subsubsec:hadamard_gate}

To construct the logical Hadamard gate, the logical Pauli operators should be transformed as
\begin{align}
    X_L \rightarrow Z_L, \qquad Z_L \rightarrow X_L.
    \label{eq:hadamard_gate_transformation}
\end{align}
It is simple if the gate is located just after the state injection presented in the Sec.~\ref{subsec:state_injection}: injecting the unencoded state to a dual logical qubit instead of a primal one. 
This method is valid since the definitions of $X_L$ and $Z_L$ are opposite for primal and dual logical qubits.

\begin{figure}[t!]
	\centering
	\includegraphics[width=\columnwidth]{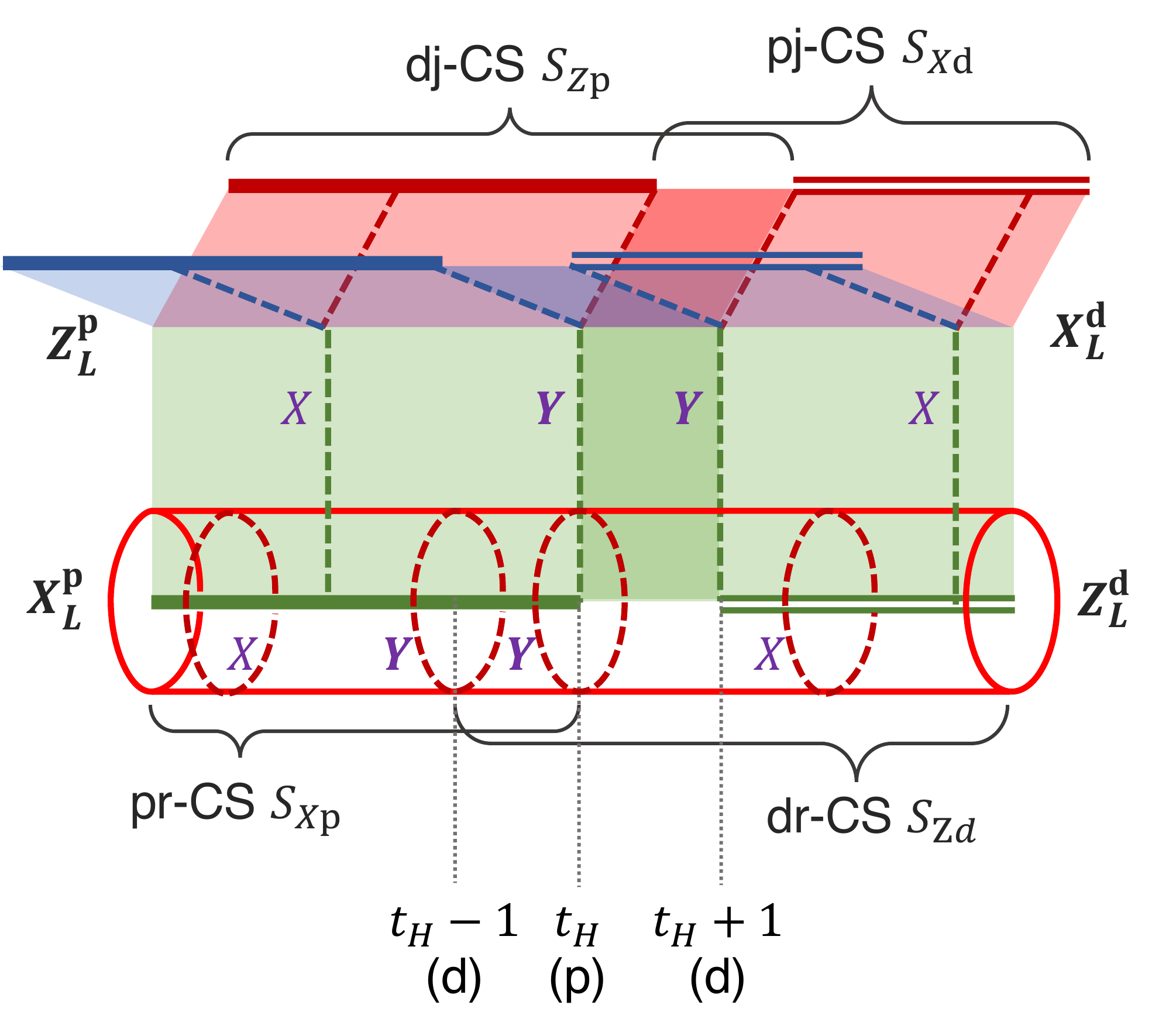}
	\caption{
	Construction of the Hadamard gate from a primal logical qubit to a dual one. 
	Each colored single (double) line is the primal (dual) defect of that color.
	The presented \tsf{dj-CS} $S_{Z\tsf{p}}$ and \tsf{pj-CS} $S_{X\tsf{d}}$ end at the three primal or dual defects and the $\qty(t_H+1)$- or $t_H$-layer, respectively.
	The presented \tsf{pr-CS} $S_{X\tsf{p}}$ and \tsf{dr-CS} $S_{Z\tsf{d}}$ surround the \tsf{pg-D} or \tsf{dg-D} and end at the $t_H$- or $\qty(t_H-1)$-layer, respectively.
% 	Consider a \tsf{dj-CS} $S_{Z\tsf{p}}$ and a \tsf{pj-CS} $S_{X\tsf{d}}$ ending at the corresponding three defects, where $S_{Z\tsf{p}}$ stops at $t = t_H+1$ and $S_{X\tsf{d}}$ starts from $t = t_H$.
% 	$S_{ZX}:=S_{Z\tsf{p}} S_{X\tsf{d}}$ is generated by merging them, which transforms $Z_L^\tsf{p}$ into $X_L^\tsf{d}$.
% 	Similarly, regarding a \tsf{pr-CS} $S_{X\tsf{p}}$ and a \tsf{dr-CS} $S_{Z\tsf{d}}$ which overlap in $t_H - 1 \leq t \leq t_H$ and surround the \tsf{pg-D} or \tsf{dg-D}, respectively, $S_{XZ} := S_{X\tsf{p}} S_{Z\tsf{d}}$ transforms $X_L^\tsf{p}$ into $Z_L^\tsf{d}$.
    $S_{ZX} := S_{Z\tsf{p}} S_{X\tsf{p}}$ and $S_{XZ} := S_{X\tsf{p}} S_{Z\tsf{d}}$ transform the logical Pauli operators as Eq.~\eqref{eq:hadamard_gate_transformation}.
	The supports of $S_{ZX}$ and $S_{XZ}$ are marked as colored dashed lines.
	For the $Y$-measurements to be fault-tolerant, a dual Y-plane is placed on the $t_H$-layer and primal Y-planes are placed on the $\qty(t_H-1)$- and $\qty(t_H+1)$-layer.
	}
	\label{fig:hadamard_gate}
\end{figure}

If the Hadamard gate is located in the middle of the circuit, it is a bit tricky.
Since $X_L$ and $Z_L$ of a logical qubit can be connected only with primal or dual \tsf{CS}s, respectively, there should be a \tsf{CS} having different primalities near the input and output layers, to achieve the transformation.
% The solution comes from self-duality of color codes; the primal and dual layer is identical except the primalities of the elements.
To solve this problem, we construct a defect structure starting with a primal logical qubit and ending with a dual one as shown in Fig.~\ref{fig:hadamard_gate}, where the primal one stops at the primal $t_H$-layer and the dual one starts from the dual $\qty(t_H+1)$-layer.
Each pair of defects with the same color must have exactly the same spatial structure at $t=t_H$ and $t=t_H+1$.
Note that such a configuration is possible thanks to the self-duality of the 2D color codes which makes primal and dual layers have exactly the same structure.

We consider two pairs of overlapping primal and dual \tsf{CS}s: $\qty( S_{Z\tsf{p}}, S_{X\tsf{d}} )$ and $\qty( S_{X\tsf{p}}, S_{Z\tsf{d}} )$, where $S_{X\tsf{p}}$, $S_{Z\tsf{p}}$, $S_{X\tsf{d}}$, and $S_{Z\tsf{d}}$ are a \tsf{pr-CS}, \tsf{dj-CS}, \tsf{pj-CS}, and \tsf{dr-CS} defined in Fig.~\ref{fig:hadamard_gate}, respectively.
$S_{ZX} := S_{Z\tsf{p}} S_{X\tsf{d}}$ then transforms $Z_L$ of the input primal logical qubit to $X_L$ of the output dual one.
Similarly, $S_{XZ} := S_{X\tsf{p}} S_{Z\tsf{d}}$ transforms the input $X_L$ to the output $Z_L$.
Condition \ref{cond:condition_1} in Sec.~\ref{subsubsec:identity_gate} is thus satisfied with these two ``hybrid'' \tsf{CS}s.
What remains is Condition \ref{cond:condition_2}.
Since $S_{ZX}$ and $S_{XZ}$ contain $Y$ operators on some \tsf{CQ}s in the overlapping regions, the qubits should be measured in the $Y$ basis for the \tsf{CS}s to be compatible.

To make the $Y$-measurements fault-tolerant, we introduce \textit{Y-planes}:

\begin{definition}[\textbf{Y-plane}]
    A \textup{primal (dual) Y-plane} is the set of \tsf{p(d)CQ}s in a continuous area contained in a dual (primal) layer.
    \tsf{CQ}s in Y-planes are measured in the $Y$ basis.
\end{definition}

Errors in Y-planes can be corrected by an error correction procedure presented in Sec.~\ref{subsec:measurement_pattern_and_defects}.
Therefore, the $Y$-measurements for the Hadamard gate can be fault-tolerantly done by placing wide enough Y-planes to cover $\supp_Y \qty( S_{ZX} )$ and $\supp_Y \qty( S_{XZ})$ completely.

% Conflict with defects and Y-planes?

% We consider spacelike planar areas with $t = t_0$ or $t = t_0 \pm 1$ which are penetrated by the defects and have boundaries far enough from them, as \figrefph.
% The set of \tsf{p(d)CQ}s in each of such areas is called a primal (dual) Y-plane, and all of them are measured by $Y$ as mentioned in \ref{subsec:measurement_pattern_and_defects}.
% Each error affecting a $Y$-measurement result required for the Hadamard gate is inside a Y-plane and can be corrected by the error correction procedure of Y-planes.
% More details on this is covered in Sec.~\ref{subsec:error_correction_in_y_planes}.

\subsubsection{Phase gate}
\label{subsubsec:phase_gate}

\begin{figure}[t!]
    \centering
    % \begin{quantikz}
    %     \lstick{$\ket{\psi_\mathrm{IN}}_S$} & \ctrl{1} & \qw\rstick{$\ket{\psi_\mathrm{OUT}}_S$} \\
    %     \lstick{$\ket{0_L}_A$} & \targ{} & \meter{$M_Y^A$} \arrow[r] & \rstick{$y_A$}
    % \end{quantikz}
    \includegraphics[width=0.6\columnwidth]{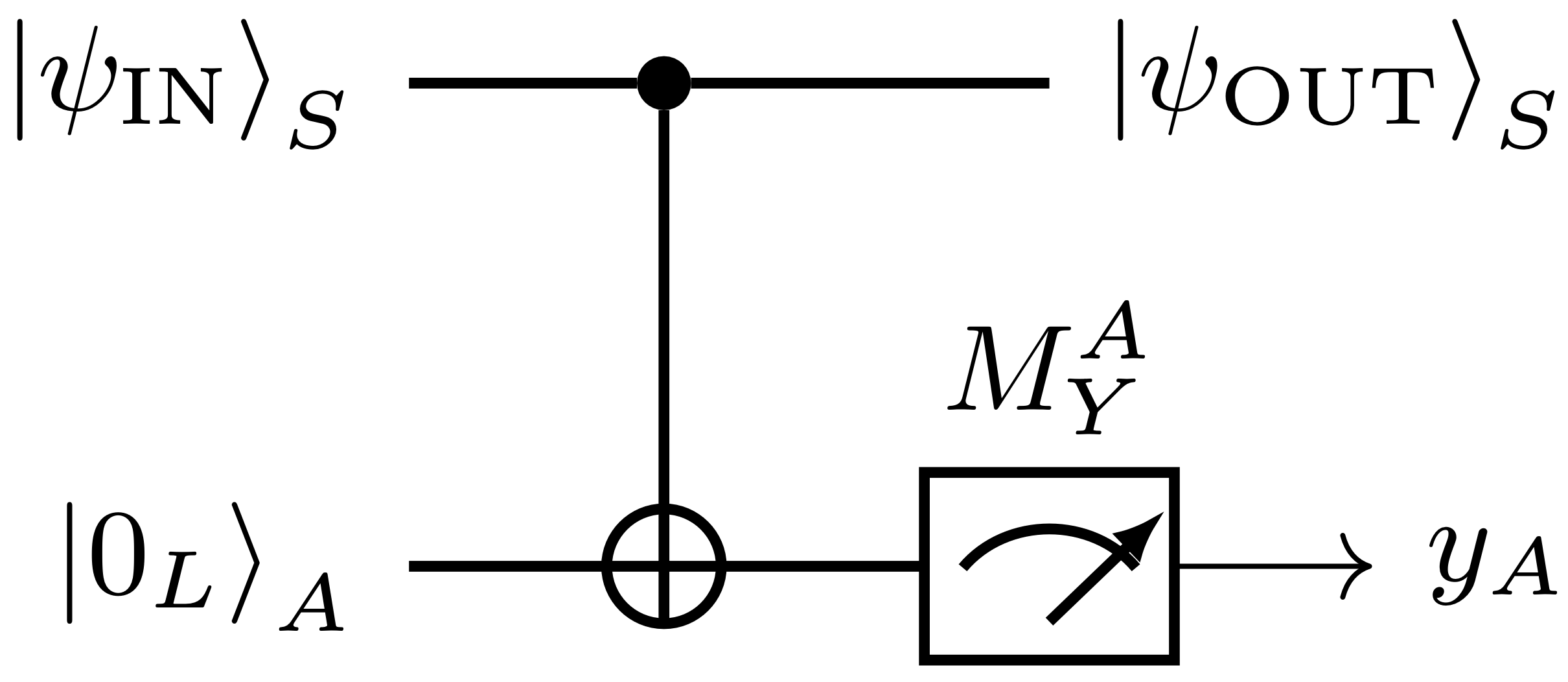}
    \caption{Circuit for the logical phase gate of a system logical qubit ($S$) with an ancilla logical qubit ($A$). $M_Y^A$ is the $Y_L$-measurement on the ancilla qubit with the result of $y_A$.}
    \label{fig:phase_gate_circuit}
\end{figure}

We now complete the generating set of the Clifford group with the construction of the logical phase ($S_L$) gate.
The phase gate is achieved indirectly by utilizing an ancilla logical qubit; the circuit in Fig.~\ref{fig:phase_gate_circuit} implements $S_L$ if the $Y_L$-measurement of the ancilla logical qubit gives the result of $+1$ and $S_L Z_L$ if the result is $y_A = -1$.
% The desired transformation for it is
% \begin{align}
%     X_L \rightarrow -Y_L, \qquad Z_L \rightarrow Z_L,
%     \label{eq:S_gate_transformation}
% \end{align}
% which can be achieved via the circuit in Fig.~\ref{fig:phase_gate_circuit} with an ancilla logical qubit initialized to the $\ket{0_L}$ state.
% To check this, we simply need to follow the transformation of the logical operators:
% \begin{align*} 
%     \begin{array}{lcccc}
%                  & \cnot         &          & M_Y^A         & \\
%         X_L^S I_{L}^A \sim X_{L}^S Z_{L}^A & \longrightarrow & -Y_{L}^S Y_{L}^A & \rightarrow & -y_A Y_{L}^S, \\
%         Z_{L}^S I_{L}^A & \longrightarrow & Z_{L}^S I_{L}^A  & \rightarrow & Z_{L}^S,
%     \end{array}
% \end{align*}
% where $P_L^{S(A)}$ for $P \in \qty{I, X, Y, Z}$ denotes the logical-$P$ operator of the system (ancilla) logical qubit, and the equivalence relation (``$\sim$'') is from multiplying the initial stabilizer $I_L^S Z_{L}^A$.
% The circuit thus implement $S_L$ or $dd $

\begin{figure*}[t!]
	\centering
	\includegraphics[width=\textwidth]{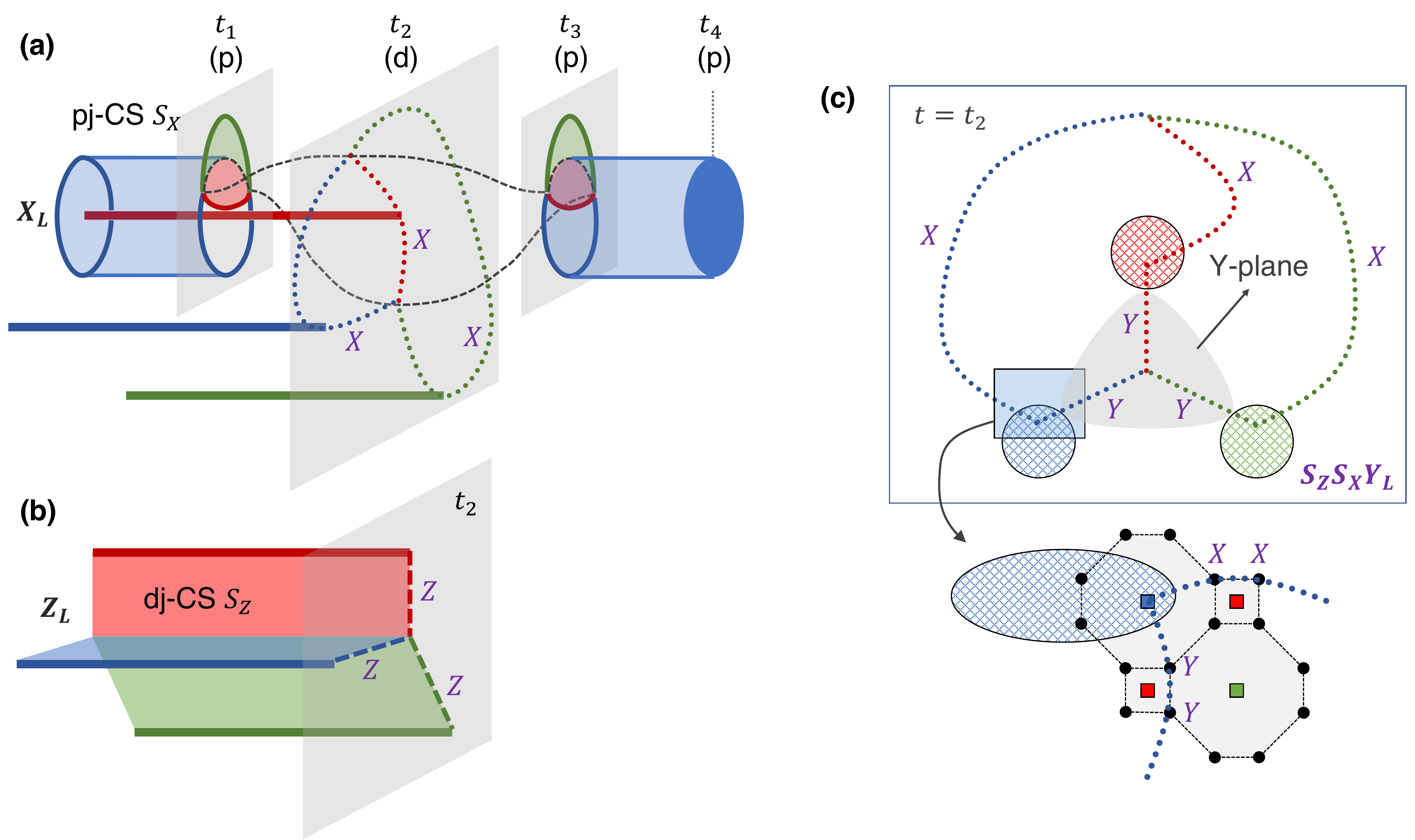}
	\caption{
	(a) \tsf{pj-CS} $S_X$ and (b) \tsf{dj-CS} $S_Z$ for the $Y_L$-measurement of a primal logical qubit.
	Near the input layer, $S_X$ connected with $X_L$ of the input logical qubit has the form of a \tsf{pb-CS} surrounding the \tsf{pr-D}.
	On the $t_1$-layer, it is divided into three \tsf{CS}s with different colors through a spacelike joint.
	Each \tsf{CS} is then deformed appropriately so that the joint is extended along the black dashed line.
	The 1-chains (colored dotted lines) on the $t_2$-layer are in the $X$-support of $S_X$.
	On the $t_3$-layer, the joint becomes spacelike again.
	After that, $S_X$ has the form of a \tsf{pb-CS} until the $t_4$-layer on which $S_X$ is closed.
	$S_Z$ simply connects $Z_L$ and the 1-chains (colored dashed lines) on the $t_2$-layer.
	(c) $\supp\qty(S_Z S_X Y_L)$ on the $t_2$-layer.
	A Y-plane is placed to cover $\supp_Y \qty(S_Z S_X Y_L)$.
	The junction of the \tsf{pb-D} and the blue 1-chain is explicitly shown below.
	}
	\label{fig:logical_Y_measurement}
\end{figure*}

All the elements in the circuit already have been described except the $Y_L$-measurement.
For the $Y_L$-measurement of a input primal logical qubit, we extend the defects straight along the time axis to a layer ($t_2$).
As shown in Fig.~\ref{fig:logical_Y_measurement}, there exists a \tsf{pj-CS} $S_X$ (connected with $X_L$) and \tsf{dj-CS} $S_Z$ (connected with $Z_L$) which are stabilizers, such that $S_Z S_X Y_L = i \qty(S_Z Z_L)\qty( S_X X_L)$ contains $X$ operators on vacuum qubits and $Y$ operators along ``Y''-shaped connected 1-chains on the $t_2$-layer.
Hence, the $Y_L$-measurement can be done by placing a Y-plane on the $t_2$-layer as Fig.~\ref{fig:logical_Y_measurement}(c).
% The initial $X_L$ transformed into the final $Z_L$ via $S_Z$ defined in Fig.~\ref{fig:identity_gate}(b).
% However, we define a different $S_X$ by a \tsf{pj-CS} illustrated in Fig.~\ref{fig:logical_Y_measurement}(a), instead of a \tsf{pb-CS} surrounding the \tsf{pr-D} as in Fig.~\ref{fig:identity_gate}(a).
% $S_X$ is constructed such that the $X$-support of the transformed $X_L$ contains the $Z$-support of the transformed $Z_L$, which is a ``Y''-shape series of \tsf{pCQ}s connecting the three defects.
% The transformed $Y_L$($= i X_L Z_L$) thus has the $Y$-support along the ``Y''-shape as shown in Fig.~\ref{fig:logical_Y_measurement}(b), which can be fault-tolerantly measured by placing a Y-plane between the three defects.
% The remained $Z$ operators in the transformed $Y_L$ are transformed back to the $Z$ operators along a closed connected blue 1-chain around the \tsf{pr-D} through the process opposite to the process so far, then measured via a process similar to the $X_L$-measurement in Fig.~\ref{fig:logical_qubit}.

We conclude that all the elements in the circuit of Fig.~\ref{fig:phase_gate_circuit} can be implemented fault-tolerantly, thus the fault-tolerant phase gate can be made up to byproduct operators.

\subsection{State injection}
\label{subsec:state_injection}

Preparation of an arbitrary logical qubit $a \ket{0_L} + b \ket{1_L}$ is essential for implementing the logical $T$ gate as well as quantum computation with arbitrary input states.
This is done in our scheme by injecting the corresponding unencoded state into a physical qubit.
% The state injection scheme is analogous to that of Raussendorf's 3D cluster states \cite{fowler2009topological}.

\begin{figure}[t!]
	\centering
	\includegraphics[width=\columnwidth]{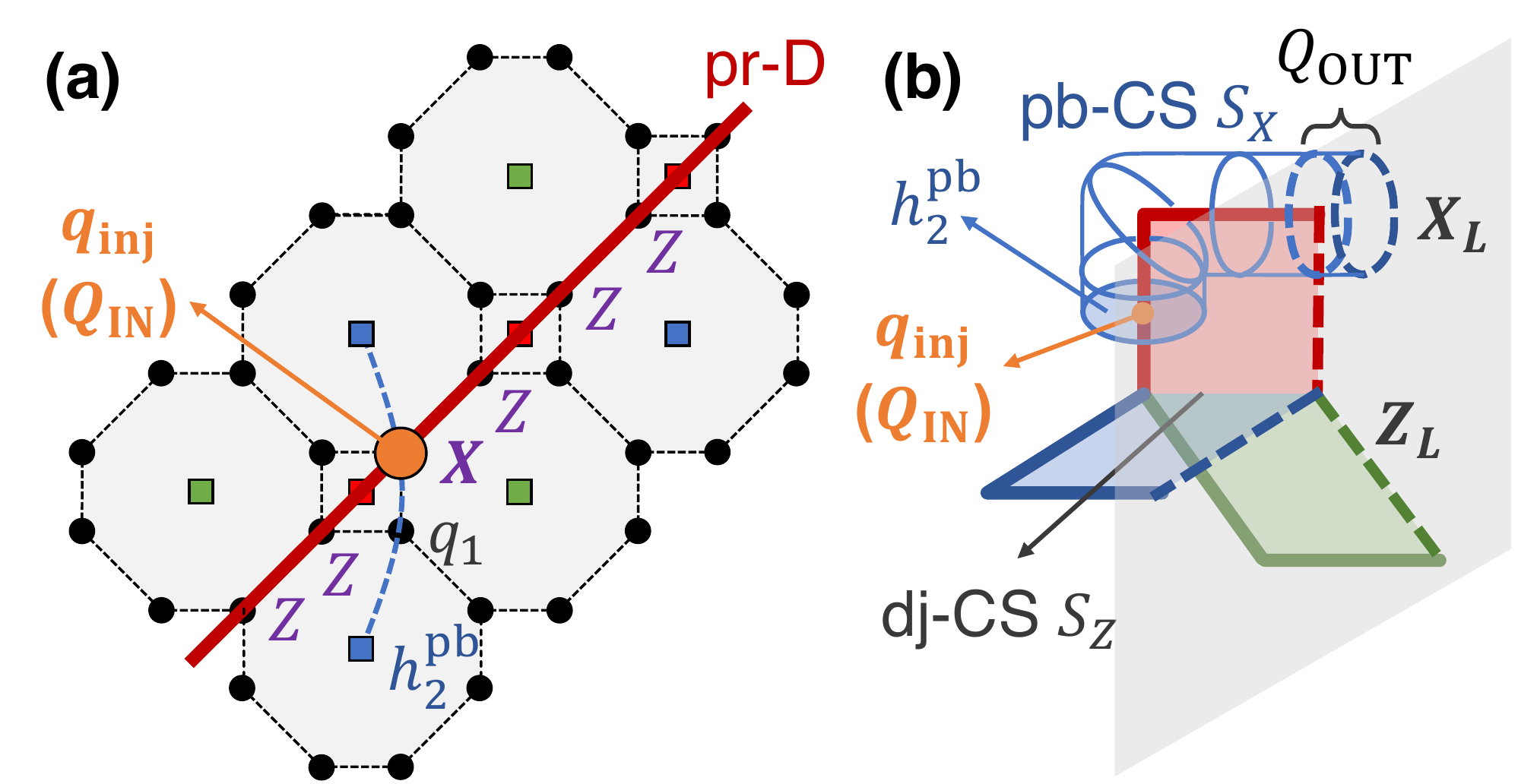}
	\caption{
	State injection procedure.
	(a) An unencoded state is injected into an injection qubit $q_\mathrm{inj}$, which is the only input qubit, in the \tsf{pr-D} which is spacelike and thicknessless at $q_\mathrm{inj}$.
	$Z\qty(q_\mathrm{inj})$ is invariant when the \cz~gates associated with $q_\mathrm{inj}$ are applied.
	However, $X\qty(q_\mathrm{inj})$ is transformed into $S\qty(q_\mathrm{inj})$, where $S\qty(q_\mathrm{inj})$ is the C-type SG around $q_\mathrm{inj}$.
	$S\qty(q_\mathrm{inj})$ is equivalent to $S_\tsf{CS}\qty(h_2^\tsf{pb})$ since $S_\tsf{CS}\qty(h_2^\tsf{pb}) = S\qty(q_\mathrm{inj}) S\qty(q_1)$, where $h_2^\tsf{pb} \in H_2^\tsf{pb}$ is the timelike 2-chain marked as a blue dashed line and $q_1$ is the marked \tsf{CQ} adjacent to $q_\mathrm{inj}$.
	$q_\mathrm{inj}$ is measured in the $X$ basis during the measurement step.
	(b) $S_\tsf{CS}\qty(h_2^\tsf{db})$ is transformed into $X_L$ of the output logical qubit via the \tsf{pb-CS} $S_X$.
	$Z\qty(q_\mathrm{inj})$ is transformed into $Z_L$ of the output logical qubit via the \tsf{dj-CS} $S_Z$.
	}
	\label{fig:state_injection}
\end{figure}

\begin{figure*}[t!]
	\centering
	\includegraphics[width=\textwidth]{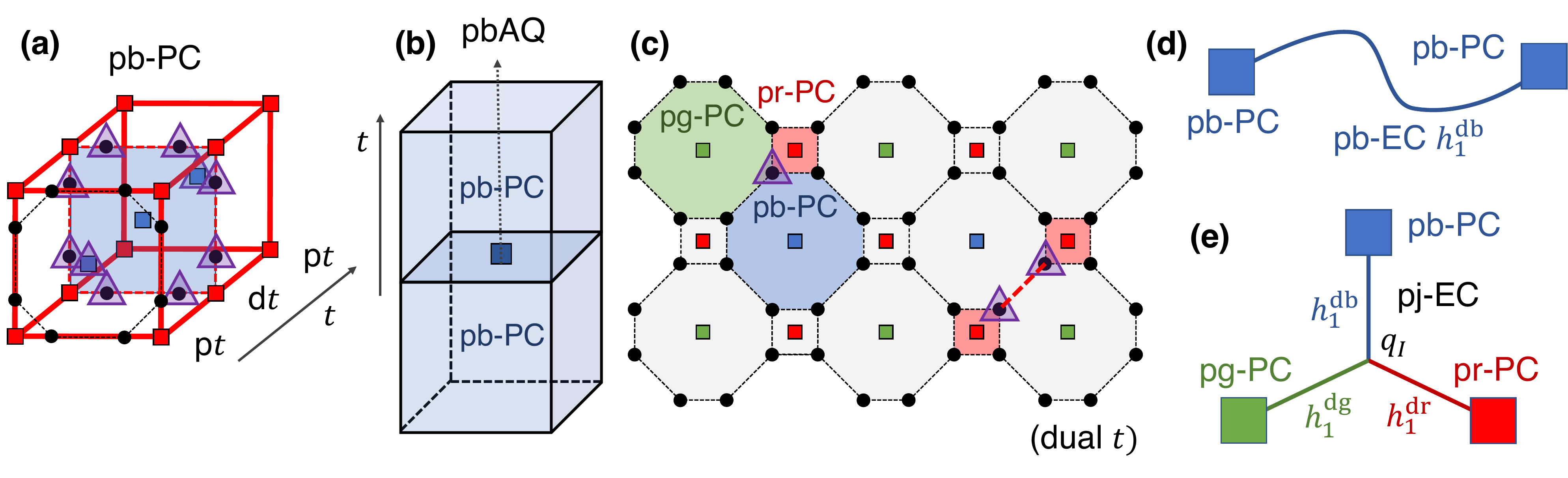}
	\caption{
	(a) Explicit structure of a parity-check operator (\tsf{PC}), specifically a \tsf{pb-PC} in a 4-8-8 CCCS. 
	Purple triangles indicate its $X$-support qubits.
	(b) A $Z$ or $X$-measurement ($M_X$) error on a \tsf{pcAQ} flips two \tsf{pc-PC}s sandwiching $q$.
	(c) A dual layer of a 4-8-8 CCCS is presented.
	Purple triangles indicate the \tsf{pCQ}s with errors.
	Each \tsf{c}-colored face corresponds to a flipped \tsf{pc-PC}, where an example is shown in (a) as a blue face on the dual layer.
	(d) A primal blue error chain (\tsf{pb-EC}), where every qubit along a connected dual 1-chain $h_1^\tsf{db}$ has an error, flips two \tsf{pb-PC}s located at its two ends.
	(e) Starting from an error on a \tsf{pCQ} $q_I$, a \tsf{pj-EC} is constructed by multiplying a \tsf{pc-EC} ending at the flipped \tsf{pc-PC} for each color \tsf{c} to the error operator.
	A \tsf{pj-EC} flips three primal \tsf{PC}s located at its ends.
% 	(e) Considering three connected dual 1-chains $h_1^\tsf{dr}$, $h_1^\tsf{dg}$, and $h_1^\tsf{db}$ ending at a common \tsf{pCQ} $q_I$, a \tsf{pj-EC}, where every qubit along the 1-chains except $q_I$ has an error, flips the \tsf{pr-PC}, \tsf{pg-PC}, and \tsf{pb-PC} located at its three ends.
	}
	\label{fig:vacuum_error_correction}
\end{figure*}

We start from the configuration for the $Z_L$-initialization of a primal logical qubit shown in Fig.~\ref{fig:logical_qubit}(c), where three defects meet at a point.
First, a qubit $q_\mathrm{inj}$ in the \tsf{pc-D} for any color \tsf{c} is selected as an \textit{injection qubit} which is the only input qubit in $Q_\mathrm{IN}$.
We assume that the defect is ``thicknessless'' at $q_\mathrm{inj}$; namely, its cross-section at $q_\mathrm{inj}$ contains at most one qubit as shown in Fig.~\ref{fig:state_injection}(a).
The desired initial state is injected into $q_\mathrm{inj}$ in an unencoded form $\ket{\psi} = a\ket{0} + b\ket{1}$, then the associated \cz~gates are applied.
Remark that $q_\mathrm{inj}$ is measured in the $X$ basis as stated in Eq.~\eqref{eq:measurement_pattern}.
The $X$ ($Z$) operator on $q_\mathrm{inj}$ is transformed into $X_L$ ($Z_L$) up to a sign factor as shown in Fig.~\ref{fig:state_injection}, thus the logical state $\ket{\psi_L} = a\ket{0_L} + b\ket{1_L}$ is prepared up to byproduct operators.

Note that the state injection procedure is inherently not fault-tolerant, since it uses an unprotected single-qubit state and the defect is thicknessless at $q_\mathrm{inj}$.
Therefore, magic state distillation is essential for the faithful $T$ gate.

\section{Error correction}
\label{sec:error_correction}

Now we describe error correction schemes in CCCSs.
The scheme varies with the area of the qubits: the vacuum, defects, and Y-planes.

\subsection{Error correction in the vacuum and defects}
\label{subsec:error_correction_vacuum}

For error correction in the vacuum, we exploit \textit{parity-check operators} (\tsf{PC}s) defined as follows:

\begin{definition}[\textbf{Parity-check operator}]
    For each cell $c$, the \tsf{CS}
    \begin{align*}
        S_\tsf{CS}(\partial c) = X(\partial c)
    \end{align*}
    is a \textup{parity-check operator (\tsf{PC})}, where $S_\tsf{CS}(\cdot)$ is given in Eq.~\eqref{eq:correlation_surface_definition}.
\end{definition}

\tsf{PC}s are classified into six groups according to primalities and cell colors. 
Here, the primality of a \tsf{PC} $S_\tsf{CS}\qty(\partial c)$ is that of the shrunk lattice $\mathcal{L}$ containing the cell $c$, and its cell color is the color of the \tsf{AQ} $Q\qty(c)$.
Remark that the cell color is different from the color of $\mathcal{L}$, as shown in Table~\ref{table:qubits_corr_to_each_element_in_shrunk_lattices}.
We refer to a primal \tsf{c}-colored \tsf{PC} as a ``\tsf{pc-PC}.''

Remark that a given \tsf{dcAQ} $q$ corresponds to two primal cells, one for each of $\mathcal{L}^{\tsf{pc}_1}$ and $\mathcal{L}^{\tsf{pc}_2}$ where \tsf{c}, $\tsf{c}_1$, and $\tsf{c}_2$ are all different colors.
However, the \tsf{PC}s corresponding to the cells are indeed the same, comparing Fig.~\ref{fig:CCCS_shrunk_lattices}(a) and (b) as an example.
We can thus regard that one \tsf{AQ} ($q$) corresponds to one \tsf{PC}, and denote it as $S_\tsf{PC}(q)$.
The support of the \tsf{pc-PC} $S_\tsf{PC}(q)$ for a \tsf{dcAQ} $q$ contains two \tsf{pcAQ}s and multiple \tsf{pCQ}s around $q$ as shown in Fig.~\ref{fig:vacuum_error_correction}(a)

We first consider only vacuum qubits.
Since they are measured in the $X$ basis, all \tsf{PC}s survive as stabilizers after the measurement step.
Any $Z$ error before the measurement or any $X$-measurement ($M_X$) error flips several \tsf{PC} outcomes.
Note that $X$ errors do not affect the outcomes at all, so can be ignored.
The final step for error correction is to decode errors from them and correct the errors.

An error may occur on either an \tsf{AQ} or a \tsf{CQ}. 
An error on a \tsf{pcAQ} $q$ flips two \tsf{pc-PC}s sandwiching $q$ along the time axis as shown in Fig.~\ref{fig:vacuum_error_correction}(b).
An error on a \tsf{pCQ} $q$ flips \tsf{pr-PC}, \tsf{pg-PC}, and \tsf{pb-PC} surrounding $q$ spatially, as shown in Fig.~\ref{fig:vacuum_error_correction}(c).
If both the \tsf{pCQ}s constituting a \tsf{pcL} $l$ have errors, the two \tsf{pc-PC}s connected by $l$ are flipped.

Combining the above facts, we conclude that, if every qubit in $Q(h_1^\tsf{dc})$ for a connected dual 1-chain $h_1^\tsf{dc} \in H_1^\tsf{dc}$ has an error, the \tsf{pc-PC} $S_\tsf{PC} (q)$ for each qubit $q \in Q(\partial h_1^\tsf{dc})$ is flipped, as shown in Fig.~\ref{fig:vacuum_error_correction}(d).
Such an error set in the vacuum is called a primal \tsf{c}-colored \textit{error chain}, referred to as a ``\tsf{pc-EC}.''
Formally, a \tsf{pc-EC} is written as the tensor product of the $Z$ operators on the error qubits.
Furthermore, starting from an error on a \tsf{pCQ}, each flipped \tsf{PC} may be ``moved'' by multiplying a primal error chain of the corresponding color ending at the \tsf{PC}.
An error set constructed by this way flips three primal \tsf{PC}s located at its ends and is referred to as a ``\tsf{pj-EC}.''
General error chains are obtained by connecting multiple \tsf{pc-EC}s for each color \tsf{c} and \tsf{pj-EC}s.
% Furthermore, we consider three connected dual 1-chains $h_1^\tsf{dr}$, $h_1^\tsf{dg}$, and $h_1^\tsf{db}$ with different colors ending at a \tsf{pCQ} $q_I$ as Fig.~\ref{fig:vacuum_error_correction}(e).
% If every qubit in the 1-chains except $q_I$ has an error, the primal \tsf{PC}s located at the three ends are flipped.
% Such a set of errors is referred to as a ``\tsf{pj-EC}.''
% Note that \tsf{pc-EC}s for each color \tsf{c} and \tsf{pj-EC}s together generate any error set, if we express an error set as the tensor product of $Z$ operators on the error qubits.
% To see this, it is enough to notice that, for three connected dual 1-chains $h_1^\tsf{dr}$, $h_1^\tsf{dg}$, and $h_1^\tsf{db}$ with different colors ending at a \tsf{pCQ} $q_I$, the product of the corresponding \tsf{pr-EC}, \tsf{pg-EC}, \tsf{pb-EC}, and \tsf{pj-EC} is simply an error on $q_I$.

\begin{figure}[t!]
	\centering
	\includegraphics[width=\columnwidth]{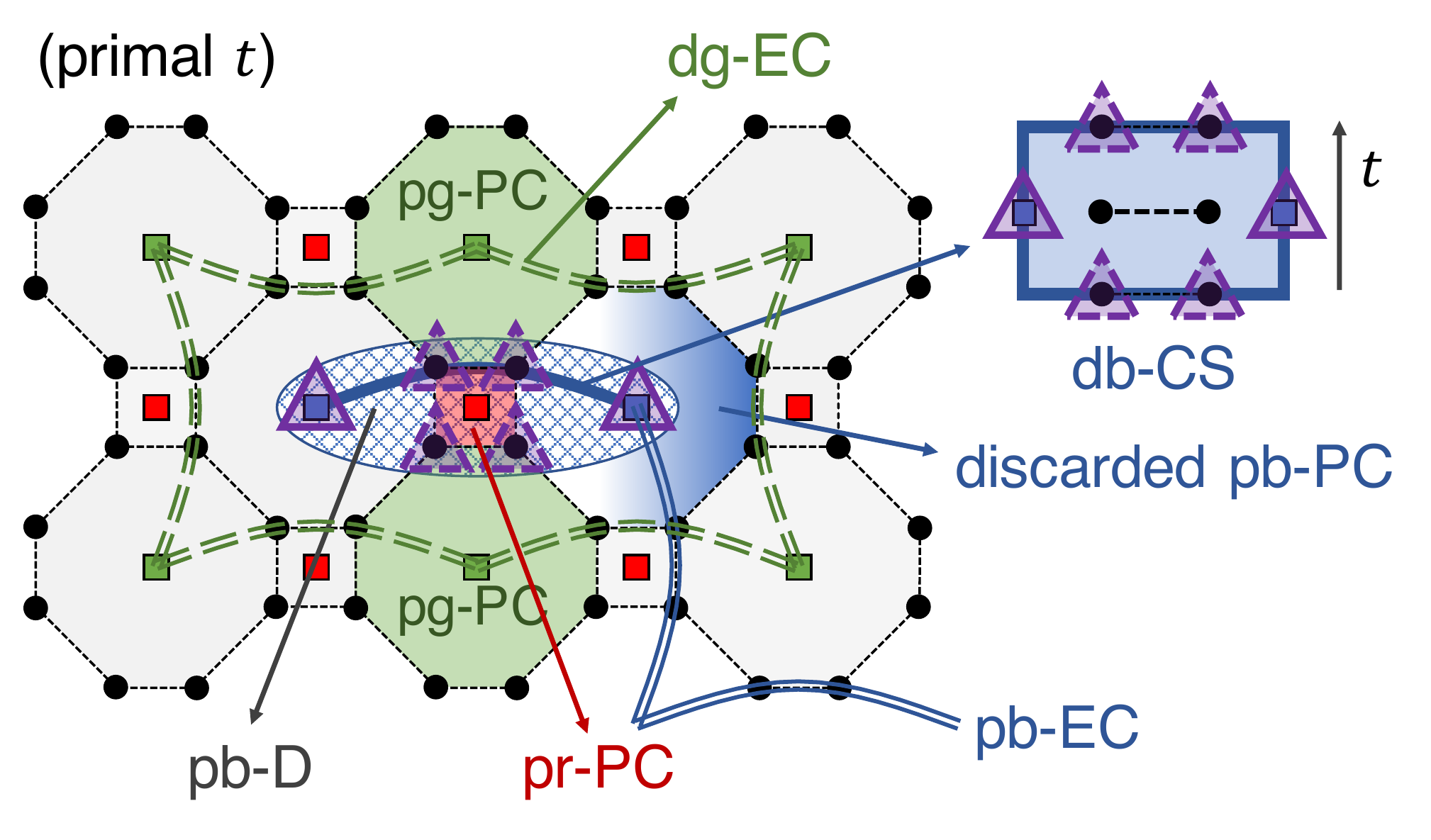}
	\caption{
	    \tsf{PC}s deformed or created due to a timelike \tsf{pb-D} in a 4-8-8 CCCS.
	    A cross-section of the defect on a primal layer is presented.
	    Each purple triangle with a solid (or dashed) border indicates a defect qubit on the layer (or an adjacent dual layer).
	    Although each of the two \tsf{pg-PC}s and one \tsf{pr-PC} marked is incompatible with the defect, their product is compatible, thus it can be used for error correction.
	    However, the marked \tsf{pb-PC} cannot be merged with other \tsf{PC}s in such a way, thus there is no choice but to discard it, which makes a \tsf{pb-EC} ending at it (blue double solid line) undetectable.
	    A \tsf{dg-EC} surrounding the defect (green double dashed line) located in an adjacent dual layer is another nontrivial undetectable set of errors.
	    Some dual \tsf{CS}s including the presented \tsf{db-CS} additionally survive and can be used to detect errors in both the vacuum and defect.
	}
	\label{fig:defect_error_correction}
\end{figure}

We now investigate the effects of a \tsf{pc-D} $d$ to the nearby \tsf{PC}s.
First, all primal \tsf{PC}s whose supports contain any defect qubit no longer survive, while dual \tsf{PC}s are unaffected.
Such incompatible \tsf{PC}s may be multiplied with each others to form larger compatible stabilizers, as shown in Fig.~\ref{fig:defect_error_correction} where two \tsf{pg-PC}s and a \tsf{pr-PC} are merged.
Like normal \tsf{PC}s, these merged \tsf{PC}s also can detect errors, although decoding errors from \tsf{PC}s may get more ambiguous.
Some \tsf{PC}s for which such multiplication is impossible have no choice but to be discarded, as the \tsf{pb-PC} in Fig.~\ref{fig:defect_error_correction}.
% Such \tsf{PC}s include overlapping \tsf{pc-PC}s if $d$ is timelike, or any overlapping primal \tsf{PC} if $d$ is spacelike.
As a consequence, a \tsf{pc-EC} ending at $d$ is not detected by any \tsf{PC} near $d$.
% A good news is that some \tsf{dc-CS}s corresponding to faces whose edges are in $d$ survive, thus can be used for error correction \figrefph. 

A \tsf{pc-D} may make some dual \tsf{CS}s survive additionally.
The \tsf{dc-CS} $S_\tsf{CS}(f)$ for a face $f \in \mathcal{B}_2^\tsf{dc}$, where $Q\qty(\partial f)$ is in the defect as shown in Fig.~\ref{fig:defect_error_correction}, is compatible, thus can serve as a \tsf{PC} for detecting errors.
We call such \tsf{CS}s \textit{defect \tsf{PC}s}.
A notable thing is that they may detect not only errors on vacuum qubits but also $X$ or $Z$-measurement ($M_Z$) errors on defect qubits.

We then identify nontrivial undetectable error chains, where ``nontrivial'' here means that they incur logical errors.
Such an error chain is closed or ends at defects with the same primality and color.
Considering the identity gate of a primal logical qubit, the shortest error chain inducing an $X_L$ error is a \tsf{pj-EC} ending at the three defects, and the shortest one inducing a $Z_L$ error is a closed \tsf{dc-EC} surrounding the $\tsf{pc}'\tsf{-D}$ ($\tsf{c}' \neq \tsf{c}$), as shown schematically in Fig.~\ref{fig:identity_gate} and explicitly in Fig.~\ref{fig:defect_error_correction}.
Note that a closed \tsf{dc-EC} penetrating a defect is detected by defect \tsf{PC}s, although not detected by ordinary \tsf{PC}s.
% If defect qubits are also erroneous, there may exist complex undetectable error sets composed of error chains and 
The \textit{code distance} of a logical qubit is defined by the size of the smallest nontrivial undetectable error set, and it increases as the defects get thicker or get farther from each other \footnote{
    Such an error set may not be an error chain, because errors in defects may also be nontrivial and undetectable. 
    However, since error correction in defects gets more accurate as the defects get thicker, the dependency of the code distance to their thicknesses is still valid. 
    It can be verified that the smallest nontrivial undetectable error set containing only defect qubits completely covers a cross-section of the defect, thus its size is roughly a quadratic function of the circumference of the defect.
}.

% \footnote{Precisely speaking, a code distance is defined by the size of the smallest undetectable error set incurring a logical error. Considering errors in defect qubits, there may exist such an error set smaller than the error chain for the definition in the main text. However, the number of }

% Assuming a logical identity gate on a primal qubit in Sec.~\ref{subsubsec:identity_gate}, Eq.~\eqref{eq:identity_gate_XL_relation} shows that a $Z_L$ error occurs if there exists a dual ㄷrror chain which connects three defects and is composed of three single-color error chains meeting at a \tsf{CQ}, as \figrefph.
% Such an error chain anticommutes with dual \tsf{CS}s surrounding one of the defects and is undetectable since each of the three ends is connected with a removed \tsf{PC}s.
% An $X_L$ error can be analyzed similarly; it occurs if there exists a closed primal single-color ($\tsf{c}'$) error chain surrounding the \tsf{pc-D} \figrefph, where $\tsf{c} \neq \tsf{c}'$.

% However, with this definition, a logical qubit cannot have a larger distance even if we make the defects thicker, due to $X$ or $Z$-measurement ($M_Z$) errors on defect qubits (see Appendix~\ref{app:shortest_error_chain} for more details).
% We therefore require a condition of $p_X + p_{M_Z} \ll p_Z + p_{M_X}$ to neglect $X$ or $M_Z$ errors, where $p_{X}$, $p_{Z}$, $p_{M_X}$, and $p_{M_Z}$ are the $X$, $Z$, $M_X$, and $M_Z$ error probabilities, respectively.
% We then redefine the distance with error chains containing only $Z$ or $M_X$ errors.
% This requirement is also needed for MBQC via RTCSs, w

\subsection{Error correction in Y-planes}
\label{subsec:error_correction_y_planes}

\begin{figure}[t!]
	\centering
	\includegraphics[width=\columnwidth]{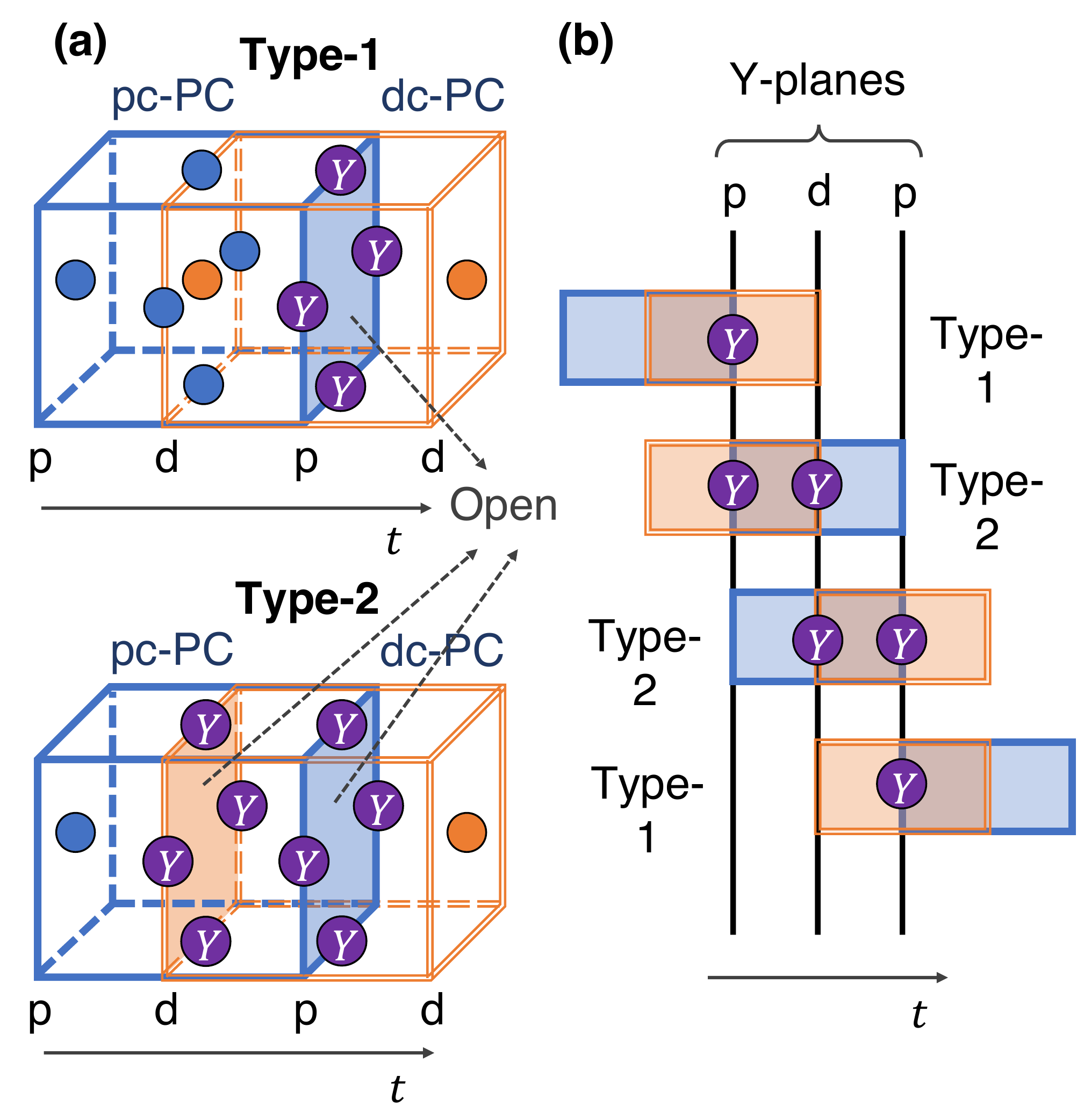}
	\caption{
	(a) Two types of hybrid \tsf{PC}s.
    Each cube is a \tsf{pc-PC} (blue solid line) or \tsf{dc-PC} (orange double line) which constitutes a hybrid \tsf{PC} and may be open, where the colored faces indicate the open faces.
    $Y$-support qubits of the hybrid \tsf{PC}s are marked as purple circles, and their $X$-support qubits originally in the \tsf{p(d)c-PC}s are marked as blue (orange) circles.
    (b) For three consecutive Y-planes, a hybrid \tsf{PC} may be placed in the presented four ways.
    Errors in the Y-planes may be corrected using the outcomes of such hybrid \tsf{PC}s covering the entire Y-planes.
	}
	\label{fig:y_plane_error_correction}
\end{figure}

To correct errors in Y-planes, we use \textit{hybrid \tsf{PC}}s defined with \textit{open \tsf{PC}s}, which are visualized in Fig.~\ref{fig:y_plane_error_correction}(a).

\begin{definition}[\textbf{Open \tsf{PC}}]
    For a cell $c$ and a face $f \in \partial c$, an \textup{open \tsf{PC}} is $S_\tsf{CS}(\partial c)S_\tsf{CS}(f)$, where $f$ determines the direction toward which it is open.
\end{definition}

\begin{definition}[\textbf{Hybrid \tsf{PC}}]
    A \textup{hybrid \tsf{PC}} is the product of primal and dual \tsf{PC}s of the same color which are adjacent along the time axis and may be open toward each other.
    It is \textup{type-1} if one of the two composing \tsf{PC}s is open, while it is \textup{type-2} if both of them are open.
\end{definition}

\begin{figure*}[t!]
	\centering
	\includegraphics[width=\textwidth]{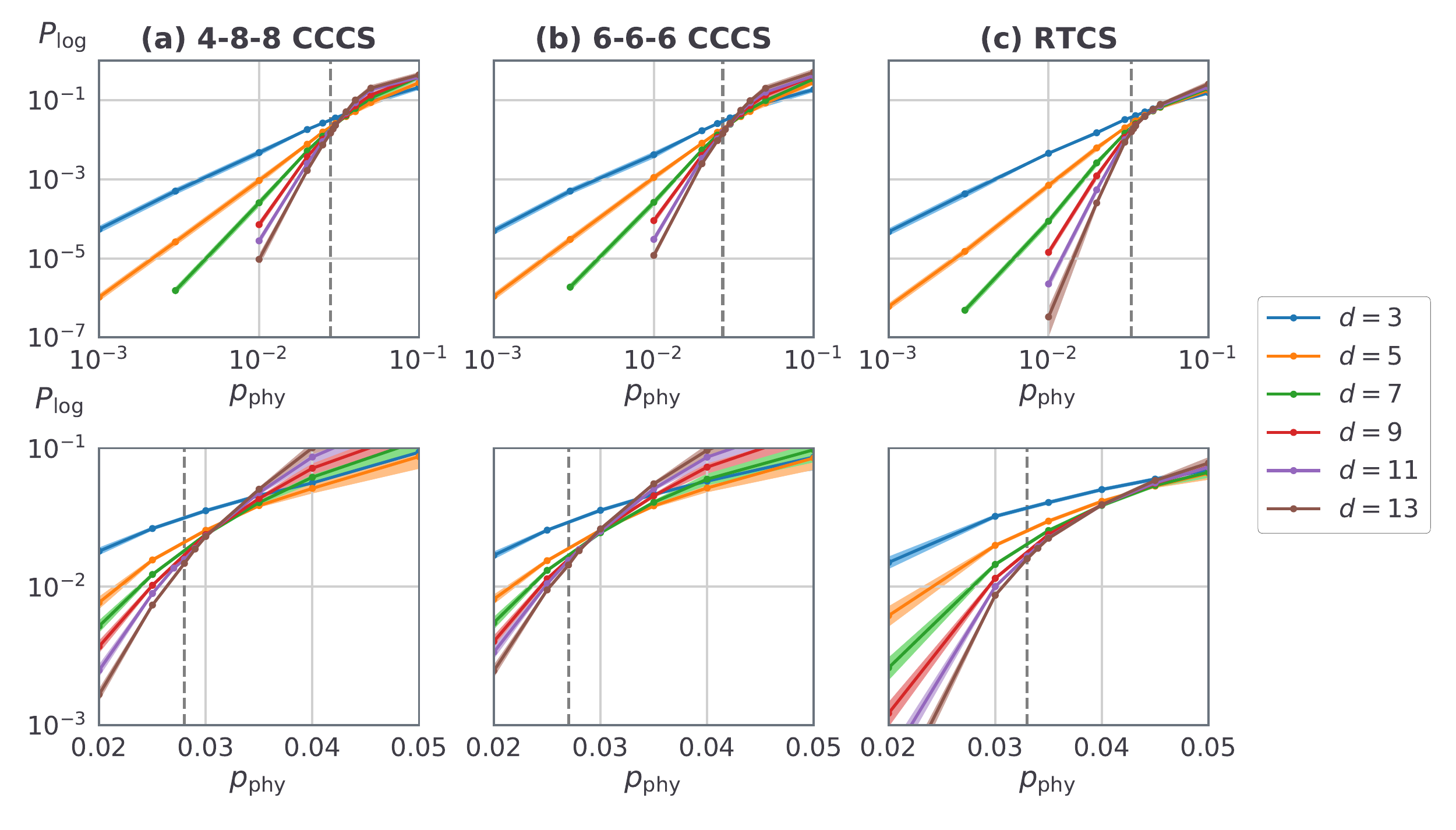}
	\caption{
	    $Z_L$ error probability $P_\mathrm{log}$ versus physical-level error probability $p_\mathrm{phys} = p_Z + p_{M_X} - p_Z p_{M_X}$, where $p_Z$ ($p_{M_X}$) is the $Z$ ($X$-measurement) error probability on vacuum qubits, for different code distances with respect to (a) 4-8-8 CCCSs, (b) 6-6-6 CCCSs, and (c) RTCSs.
	    The graphs in the upper row show the results for $0.001 \leq p_\mathrm{phys} \leq 0.1$, and those in the lower row show the results near the threshold values.
	    Pale areas around the lines indicate the 99\% confidence intervals of $P_\mathrm{log}$.
	    The error thresholds are calculated as 2.7\% for 6-6-6 CCCSs, 2.8\% for 4-8-8 CCCSs, and 3.3\% for RTCSs, which are shown as grey dashed lines.
	}
	\label{fig:error_simulation}
\end{figure*}

Remark that \tsf{CQ}s in Y-planes are measured in the $Y$ basis, thus ordinary \tsf{PC}s whose supports contain those \tsf{CQ}s are incompatible with the qubits.
Instead of them, we use hybrid \tsf{PC}s whose $Y$-supports are on the Y-planes, as shown in Fig.~\ref{fig:y_plane_error_correction}(b).
Error correction in the Y-planes is done with a set of hybrid \tsf{PC}s covering the entire Y-planes.

% The placement of Y-planes may negatively effect the performance of the error correction in the vacuum, since the number of \tsf{PC}s decreases.
% Therefore, the sizes of Y-planes should be determined appropriately to reduce the effect as much as possible.

\section{Calculations}
\label{sec:calculations}

\subsection{Resource overheads}
\label{subsec:resource_overheads}

We now calculate and compare the resource overheads of MBQC via RTCSs or CCCSs.
For each case, we consider a periodic hexagonal arrangement of parallel timelike primal defects, where primal logical qubits with the code distances of $d$ are compactly packed in the space.
In other words, the intervals of the arrangement are determined to minimize the number of physical qubits per logical qubit while keeping all the possible nontrivial undetectable error chains to contain $d$ or more qubits.
We present such arrangements of defects in Appendix~\ref{app:resource_overheads}.

\begin{table}[b!]
    \caption{
        Resource overheads of MBQC via RTCSs or CCCSs, evaluated by the numbers of physical qubits ($n$) and \cz~gates ($N_\mathrm{\cz}$) per layer in terms of the code distance ($d$) and the number of logical qubits ($k$), regarding optimal hexagonal arrangements of parallel timelike primal defects.
        Only the leading-order terms on $d$ are presented.
        Two types of color codes are considered: the 4-8-8 and 6-6-6 lattices.
    }
    \label{table:resource_overheads}
    \centering
    \begin{ruledtabular}
    \begin{tabular}{ccc}
        Types of cluster states   & $n/k$ & $N_\mathrm{\cz}
        /k$ \\ \hline
        RTCS           & $\approx 6.6 d^2$ & $\approx 13.1 d^2$ \\
        4-8-8 CCCS & $\approx 3.9 d^2$ & $\approx 10.5 d^2$ \\
        6-6-6 CCCS & $\approx 3.7 d^2$ & $\approx 9.8 d^2$
    \end{tabular}
    \end{ruledtabular}
\end{table}

Table~\ref{table:resource_overheads} shows the calculated numbers of physical qubits ($n$) and \cz~gates ($N_\mathrm{\cz}$) per layer in terms of $d$ and the number of logical qubits ($k$), considering the optimal hexagonal arrangements.
It is worth noticing that MBQC via CCCSs is definitely more resource-efficient than MBQC via RTCSs; $n/k$ is about 1.7--1.8 times smaller for CCCSs than for RTCSs.
Note that the compact packing of logical qubits may be unrealistic; extra spaces may be needed for implementing logical gates except the identity gate.

\subsection{Error thresholds}
\label{subsec:error_thresholds}

We numerically calculate and compare \textit{error thresholds} of MBQC via RTCSs and CCCSs.

\subsubsection{Error model}

We assume a simple error model where vacuum qubits have $Z$ ($M_X$) errors independently with the same probability $p_Z$ ($p_{M_X}$).
Since a $Z$ error just before the measurement and an $M_X$ error have the same effect, it is enough to consider the net error probability $p_\mathrm{phy} := p_Z + p_{M_X} - p_Z p_{M_X}$.
Note that $X$ errors on the vacuum qubits do not affect the $X$-measurement results at all, thus we neglect them.

\subsubsection{Simulation methods}
For each simulation with a code distance of $d$, we consider the logical identity gate of a primal logical qubit covering consecutive $2T+1$ layers with $T = 4d + 1$ starting from a primal layer.
Simplified defect models presented in Appendix \ref{subapp:simplified_defect_models} are used, instead of considering big areas containing the entire defects.
We calculate the $Z_L$ error probability per layer with the Monte Carlo method; we repeat a sampling cycle many times enough to obtain a desired confidence interval of the $Z_L$ error probability.
Each cycle is structured as follows.

We first prepare a cluster state whose shape and size are determined by $d$ and $T$.
Here we assume perfect preparation, namely, no qubit losses or failures of \cz~gates.
Errors are then randomly assigned to primal qubits with a given probability $p_\mathrm{phy}$, except those in the first and final layers to prevent error chains ending at these layers.
After that, the outcomes of primal \tsf{PC}s are calculated, then decoded to locate errors.
Edmonds' minimum-weight perfect matching (MWPM) algorithm \cite{edmonds1965paths, edmonds1965maximum, fowler2015minimum} via Blossom V software \cite{kolmogorov2009blossom} is used for decoding (once for RTCSs and six times for CCCSs), where the details are presented in Appendix \ref{subapp:decoding_methods}.
We then identify primal error chains connecting different defects which incur $Z_L$ errors by comparing the assigned and decoded errors.
We count such error chains while repeating the cycles and obtain the $Z_L$ error probability per layer $P_\mathrm{log}$.
The error threshold $p_\mathrm{thrs}$ is obtained from the calculated $P_\mathrm{log}$ results for different values of $d$ and $p_\mathrm{phy}$; $P_\mathrm{log}$ decreases as $d$ increases if $p_\mathrm{phy} < p_\mathrm{thrs}$ and vice versa otherwise.
% The above process is repeated for different values of $T$: $T = d, 2d, 4d, 8d, \cdots$.
% We stop increasing $T$ if the last two rates are statistically indistinguishable,\footnote{We call two values with the confidence intervals of $p_1 \pm \Delta p_1$ and $p_2 \pm \Delta p_2$ are statistically indistinguishable if 0 is in $p_1 - p_2 \pm \sqrt{\qty(\Delta p_1)^2 + \qty(\Delta p_2)^2 }$.} and regard the last rate as the desired asymptotic $Z_L$ error probability per layer ($P_{Z_L}$) for the given value of $d$.

\subsubsection{Results}
Figure \ref{fig:error_simulation} shows the results of the simulations.
The obtained error thresholds are $p_\mathrm{thrs} \approx 3.0\%$ for 4-8-8 and 6-6-6 CCCSs and $p_\mathrm{thrs} \approx 3.5\%$ for RTCSs.
The values for CCCSs are slightly lower than the value for RTCSs, but they have similar orders of magnitude.

\section{Remarks}
\label{sec:conclusion}

In this paper, we have proposed a new topological measurement-based quantum computation (MBQC) scheme via color-code-based cluster states (CCCSs).
We have shown that our scheme is comparable with or even better than the conventional scheme via Raussendorf's 3D cluster states (RTCSs) \cite{raussendorf2006fault, raussendorf2007fault, raussendorf2007topological, fowler2009topological}, in the three aspects mentioned at the very beginning:
\begin{enumerate}
    \item
        \textit{Universality.}
        Initialization and measurements of logical qubits and all the elementary logical gates constituting a universal set of gates (\cnot, Hadamard, phase, and $T$ gates) can be implemented via appropriate placement of defects and Y-planes.
        We described each one of them explicitly in Sec.~\ref{sec:MBQC_via_color_code_based_cluster_states}.
    \item
        \textit{Fault-tolerance.}
        We suggested the error correction scheme for each area of qubits in Sec.~\ref{sec:error_correction}.
        We further verified in Sec.~\ref{subsec:error_thresholds} that the error thresholds for $Z$ or $X$-measurement errors have a similar order of magnitude comparing with the value for RTCSs.
    \item
        \textit{Resource-efficiency.}
        Contrary to the case of using RTCSs, the Hadamard and phase gates do not require state distillation, which typically consumes many ancillary logical qubits \cite{bravyi2005universal, raussendorf2007topological, fowler2009high}, as shown in Sec.~\ref{subsec:elementary_gates}, thanks to the nature of the self-duality of the 2D color codes.
        Moreover, we found out in Sec.~\ref{subsec:resource_overheads} that the minimal number of physical qubits per logical qubit in our scheme is about 1.7--1.8 times smaller than the value for RTCSs.
        As a consequence, MBQC via CCCSs requires a significantly smaller amount of resources than MBQC via RTCSs.
\end{enumerate}

We particularly emphasize the last aspect on resource-efficiency as a definite improvement from the previous schemes, which makes our scheme a more easy-to-implement alternative to those.

Our work has several limitations. 
First, the logical $T$ gate still needs costly state distillation.
Some methods to significantly reduce the cost of distillation have been proposed, such as using logical qubits with low code distances as ancilla qubits \cite{litinski2019magic} or exploiting redundant ancilla encoding and flag qubits \cite{chamberland2020very}.
Moreover, 3D gauge color codes \cite{bombin2007topological, bombin2007exact, bombin2015gauge, kubica2015universal, watson2015qudit, kubica2018three, bombin20182d, bombin2018transversal} enables the implementation of a universal set of gates without distillation.
It may be possible to translate these protocols to be applicable for our MBQC scheme.
We also assume the perfect preparation of states, which is unrealistic.
It is unclear how much the fault-tolerance gets weaker if we consider qubits losses or failures of \cz~gates, which is particularly related to photon losses in optical systems.
% Another drawback of MBQC via CCCSs is that its decoding requires six times of Edmonds' MWPM algorithm, which may be a bottleneck for fast quantum computation.
It will be interesting future works to further investigate and resolve these problems.

Lastly, we would like to mention a recent work on a general topological MBQC scheme using the Walker-Wang model for the 3-Fermion anyon theory \cite{roberts2020fermion}.
It provides a general framework on universal QC with defect braiding, which produces the MBQC scheme via RTCSs as an example.
There may be some connections between this work and our scheme, which is worth further investigation.

\section*{Acknowledgments}
This work was supported by the National Research Foundation of Korea (NRF-2019M3E4A1080074, NRF-2020R1A2C1008609, NRF-2020K2A9A1A06102946) via the Institute of Applied Physics at Seoul National University and by the Ministry of Science and ICT, Korea, under the ITRC (Information Technology Research Center) support program (IITP-2020-0-01606) supervised by the IITP (Institute of Information \& Communications Technology Planning \& Evaluation).

\appendix

% \section{Commutation relation between the logical operators and the stabilizers}
% \label{app:commutation}

% As shown in \figrefph, $X\qty(h_1)$ anticommutes with every A-type SG around a \tsf{pcAQ} in $B\qty(\partial h_1)$, and with every C-type SG around a \tsf{pCQ} adjacent to a \tsf{CQ} in $B\qty(h_1)$.
% Also, $Z\qty(h_2)$ anticommutes with every C-type SG around a \tsf{dCQ} in $B\qty(h_1)$.
% For each of the two cases, all the unmentioned A- or C-type SGs commute with the operator.

% We can check that $X_L$ commutes with all the stabilizers as follows.
% There are no A-type SGs anticommuting with $X_L$, since $\partial h_1^\mathrm{sur} = \emptyset$.
% Every C-type SG around a \tsf{pCQ} in $B\qty( \overline{h_1^\mathrm{sur}})$ anticommutes with both of $X\qty(h_1^\mathrm{sur})$ and $Z\qty(\overline{h_1^\mathrm{sur}})$, thus they commute with $X_L$.
% The left SGs possibly anticommuting with $X_L$ are C-type SGs around \tsf{pCQ}s in the $t = t_0 - 1$ layer.
% However, qubits for $t < t_0$ are measured beforehand, thus the remained stabilizers whose support is in $t \leq t_0$ can have boundaries only in $t \geq t_0$.
% In other words, 

\section{Verification of Eq. \eqref{eq:identical_exp_val}}
\label{app:trans_log_op}

Here we verify Eq. \eqref{eq:identical_exp_val}:
\begin{align}
    \expval{\widetilde{X}_L}{\psi} = \expval{x_X X_L'}{\psi'}.
    \label{eq:app_identical_exp_val}
\end{align}
We first assume $V_X = \qty{q_0}$ for a qubit $q_0$.
Since there exists a stabilizer $S_0$ anticommuting with $X\qty(q_0)$ before the measurements, $\expval{X\qty(q_0)}{\psi} = 0$.
Thus,
\begin{align*}
    \ket{\psi'} &= \norm{\frac{I + x_{q_0} X\qty(q_0)}{2} \ket{\psi}}^{-1} \frac{I + x_{q_0} X\qty(q_0)}{2} \ket{\psi} \\
    &= \frac{I + x_{q_0} X\qty(q_0)}{\sqrt{2}} \ket{\psi}
\end{align*}
holds.
Therefore,
\begin{align*}
    \expval{x_X X_L'}{\psi'} &= \expval{\qty[I + x_{q_0} X\qty(q_0)] x_{q_0} X_L'}{\psi} \\
    &= x_{q_0} \expval{X'_L}{\psi} + \expval{\widetilde{X}_L}{\psi}
\end{align*}
holds.
Since $\widetilde{X}_L$ commutes with all the stabilizers before the measurements, $S_0$ also anticommutes with $X\qty(q_0) \widetilde{X}_L = X'_L$, thus $\expval{X'_L}{\psi}$ vanishes.
Hence, we get Eq.~\eqref{eq:app_identical_exp_val}.
For an arbitrary $V_X$ with $\abs{V_X} > 1$, we can show Eq.~\eqref{eq:app_identical_exp_val} by simply repeating this process for every qubit in $V_X$.

\section{Details on calculation of resource overheads}
\label{app:resource_overheads}

Here we calculate the resource overheads of MBQC via RTCSs or CCCSs, namely, the numbers of physical qubits ($n$) and required \cz~gates ($N_\mathrm{\cz}$) per layer in terms of the code distance ($d$) and the number of logical qubits ($k$), which are presented in Table ~\ref{table:resource_overheads} and Sec.~\ref{subsec:resource_overheads}.
We consider hexagonal arrangements of parallel timelike primal defects, where every error chain connecting different defects or surrounding a defect has $d$ or more qubits.
We need to find the optimal intervals minimizing $n/k$.

We first define the coordinate systems for the analysis.
The $x$ and $y$ axes are presented in Fig.~\ref{fig:cluster_states}(b) for RTCSs and Fig.~\ref{fig:color_code_lattices} for the two types of CCCSs.
The unit length is the length of a side of a unit cell for RTCSs, the distance between adjacent \tsf{prAQ} and \tsf{pgAQ} for 4-8-8 CCCSs, and half the distance between two adjacent \tsf{AQ}s with the same color for 6-6-6 CCCSs.

\begin{figure}[t!]
	\centering
	\includegraphics[width=\columnwidth]{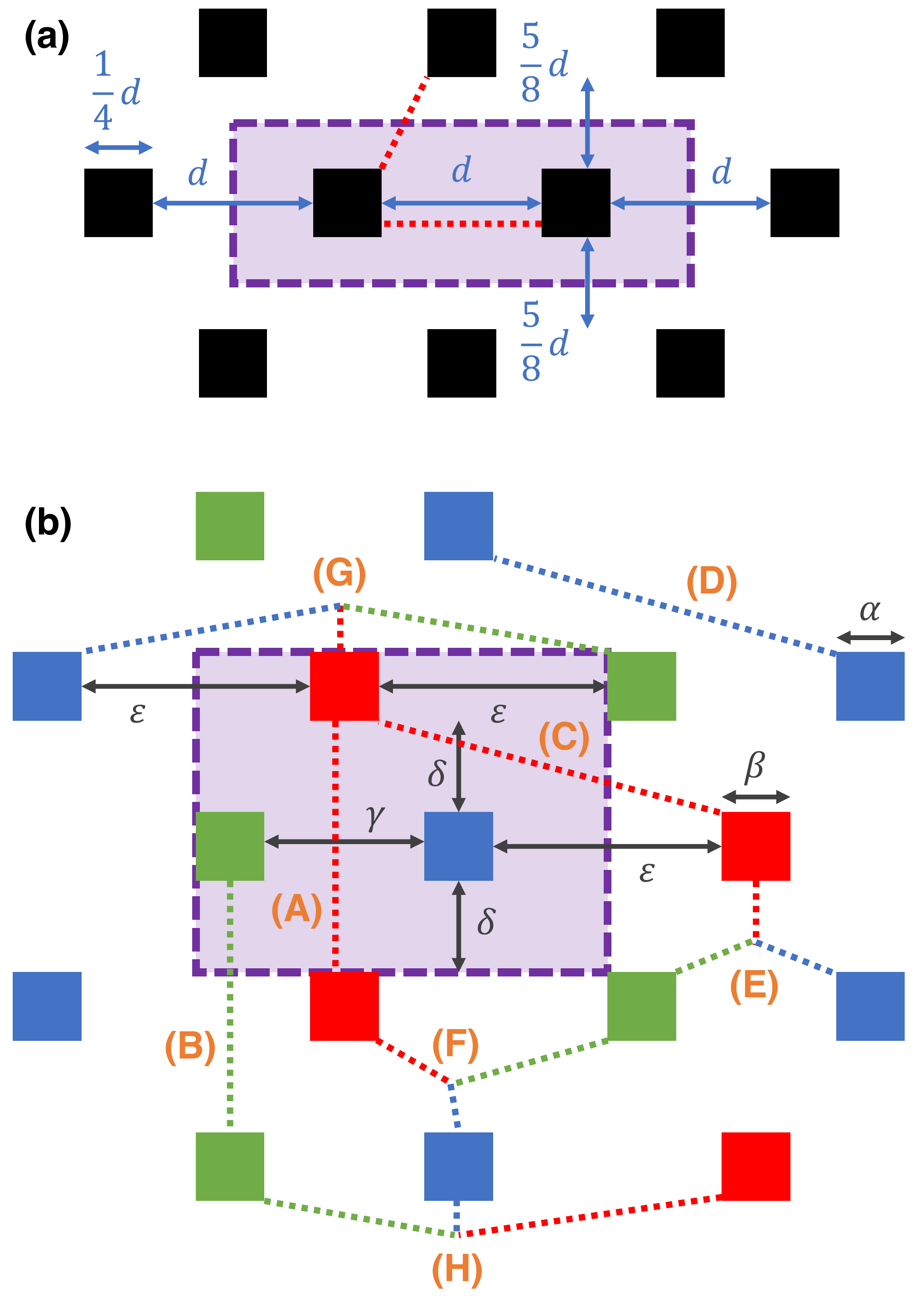}
	\caption{
	    Arrangements of primal defects penetrating a layer for calculating resource overheads of MBQC via (a) RTCSs or (b) CCCSs. 
	    Each black, red, green, or blue square is a defect, where its color means the color of the defect if it is in (b).
	    Each purple rectangle surrounded by dashed lines is an area occupied by a logical qubit.
	    Dotted lines indicate all the possible types of error chains which may be the shortest ones, which are used for obtaining the values of the marked intervals minimizing the area of a logical qubit.
	    Note that, in (b), counterparts of some error chains regarding the exchange of blue and green defects are omitted, since the two lattices (4-8-8 and 6-6-6) which we concern have symmetry on those defects.
	    The optimal intervals for RTCSs are directly presented in (a).
	    For CCCSs, they are $(\alpha, \beta, \gamma, \delta, \epsilon)=(\frac{1}{4} d, \frac{1}{4} d, 0, \frac{1}{2} d, \frac{1}{2} d)$ for 4-8-8 and $(\alpha, \beta, \gamma, \delta, \epsilon) \approx (0.23d, 0.23d, 0.38d, 0.53d, 0.38d)$ for 6-6-6.
	    Here, the unit length is a side of a unit cell in RTCSs (see Fig.~\ref{fig:cluster_states}(b)), the distance between adjacent \tsf{prAQ} and \tsf{pgAQ} in 4-8-8 CCCSs (see Fig.~\ref{fig:CCCS_structure}(a)), and half the distance between two adjacent \tsf{prAQ}s in 6-6-6 CCCSs (see Fig.~\ref{fig:color_code_lattices}(b)).
	}
	\label{fig:resource_overheads_calculation}
\end{figure}

The optimal arrangement in an RTCS is shown in Fig.~\ref{fig:resource_overheads_calculation}(a).
It is straightforward to obtain the intervals, considering that the shortest error chain connecting $(0, 0)$ and $(x, y)$ contains $|x| + |y| + O(1)$ qubits.
Note that we calculate only their leading-order terms on $d$.
The area occupied by a logical qubit is thus about $\frac{35}{16}d^2$, and since a unit area contains three qubits and six \cz~gates, we get $n/k \approx 6.6 d^2$ and $N_\mathrm{\cz} / k \approx 13.1 d^2$.
Note that, for each \tsf{CQ}, we count only one of the two related \cz~gates with other \tsf{CQ}s in the adjacent layers.

It is more tricky to obtain the optimal arrangements in 4-8-8 or 6-6-6 CCCSs.
Figure~\ref{fig:resource_overheads_calculation}(b) shows the concerned hexagonal arrangement with five variables $(\alpha, \beta, \gamma, \delta, \epsilon)$ for the intervals considering the symmetry.
We consider only the leading-order terms of their values on $d$ as well.

We first look at 4-8-8 CCCSs.
The shortest \tsf{pr-EC} connecting $(0, 0)$ and $(x, y)$ contains $2\max(x, y) + O(1)$ qubits, and the shortest \tsf{pg-EC} or \tsf{pb-EC} connecting them contains $|x| + |y| + O(1)$ qubits.
The thicknesses of the defects, $\alpha$ and $\beta$, can be derived from the shortest \tsf{pg-EC} or \tsf{pb-EC} surrounding each defect: $\alpha = \beta = \frac{1}{4} d$.
The following eight inequalities are derived from the eight possible types (A)--(H) of error chain in Fig.~\ref{fig:resource_overheads_calculation}(b):
\begin{align*}
     &(\mathrm{A}) \quad \delta \geq \frac{1}{8} d, \\
     &(\mathrm{B}) \quad \delta \geq \frac{3}{8} d, \\
     &(\mathrm{C}) \quad \max\qty(\gamma + 2\epsilon + \frac{1}{4} d, 2\delta) \geq d, \\
     &(\mathrm{D}) \quad \gamma + 2\delta + 2\epsilon \geq \frac{7}{4} d, \\
     &(\mathrm{E}) \quad \gamma + 2\delta \geq d, \\
     &(\mathrm{F}) \quad \delta + \epsilon + \frac{1}{2} \max\qty( \gamma - \frac{1}{4} d, 0 ) \geq d, \\
     &(\mathrm{G}) \quad \epsilon \geq \frac{3}{8} d, \\
     &(\mathrm{H}) \quad \gamma + 2\epsilon \geq \frac{3}{4} d.
\end{align*}
Note that, to get the inequalities corresponding to (E)--(H), the points at which three error chains meet should be placed carefully.
It is straightforward to see that placing each point just next to the red defect minimizes the length of the error chain.
The area $S$ occupied by a logical qubit is written as
\begin{align}
    S \approx \qty(\alpha + \frac{\beta}{2} + \frac{\gamma}{2} + \epsilon )\qty( 2\delta + \alpha + \beta ).
    \label{eq:area_logical_qubit}
\end{align}
Minimizing $S$ subject to the above inequalities, we get $S \approx \frac{21}{16} d^2$ where the corresponding intervals are $(\alpha, \beta, \gamma, \delta, \epsilon) = \qty(\frac{1}{4} d, \frac{1}{4} d, 0, \frac{1}{2} d, \frac{1}{2} d)$.
A unit area contains three qubits and eight \cz~gates, thus we get $n/k \approx 3.9 d^2$ and $N_\mathrm{\cz}/k \approx 10.5 d^2$.

\begin{figure*}[t!]
	\centering
	\includegraphics[width=\textwidth]{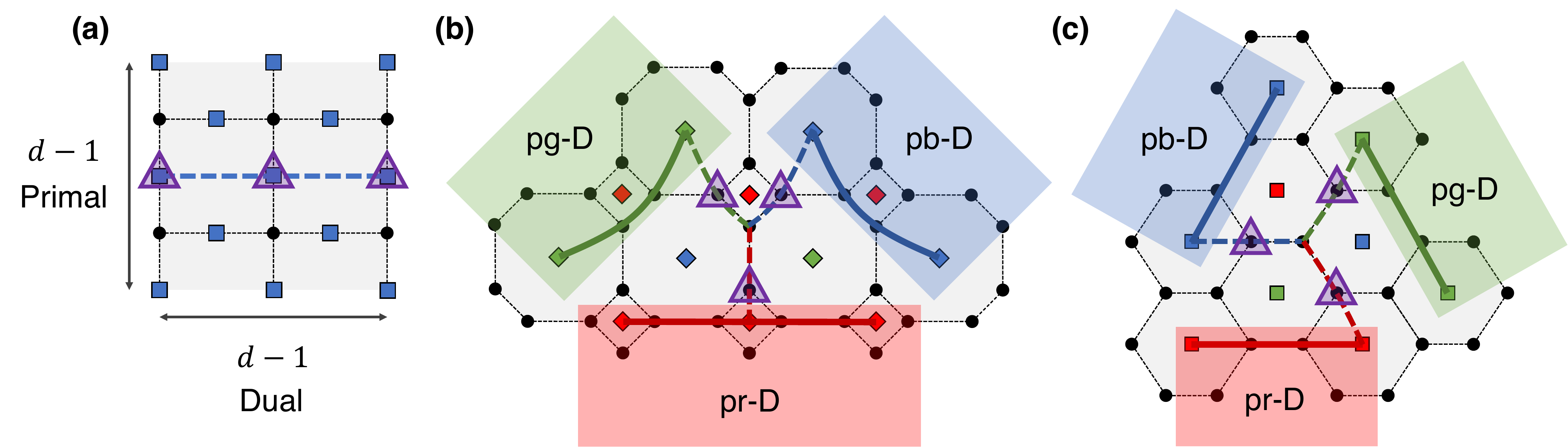}
	\caption{
	Structure of a layer in the simplified defect model for the simulation regarding (a) RTCSs, (b) 4-8-8 CCCSs, or (c) 6-6-6 CCCSs, particularly when the code distance is $d=3$.
	In (a), blue squares (black circles) indicate primal (dual) qubits.
	In (b) and (c), a colored solid line is a boundary corresponding to that color, which can be regarded as a part of a defect.
	For all of them, dashed lines are examples of primal error chains incurring $Z_L$ errors.
	Purples triangles indicate the qubits in the error chains, which show that the code distances are three.
	Defect models for $d > 3$ can be constructed analogously by increasing the distances between the boundaries while keeping their shapes.
	}
	\label{fig:simplified_defects}
\end{figure*}

The optimal arrangement for 6-6-6 CCCSs also can be derived similarly.
The shortest error chain connecting $(0,0)$ and $(x,y)$ for $x,y \geq 0$ contains $\max\qty(x + \frac{1}{\sqrt{3}} y, \frac{2}{\sqrt{3}} y) + O(1)$ qubits.
We thus get $\alpha = \beta = \qty( \sqrt{3} - \frac{3}{2} ) d \approx 0.23 d$, considering an error chain surrounding a defect.
The following inequalities are derived for each type of error chain:
\begin{align*}
     &(\mathrm{A}), (\mathrm{B}) \quad \delta \geq \frac{3 - \sqrt{3}}{4} d, \\
     &(\mathrm{C}), (\mathrm{D}) \quad \max\qty( \frac{1}{2} \gamma + \frac{1}{\sqrt{3}} \delta + \epsilon + \frac{1}{2} \alpha, \frac{2}{\sqrt{3}} \delta ) \geq d, \\
     &(\mathrm{E}) \quad \gamma + \frac{2}{\sqrt{3}} \delta \geq d, \\
     &(\mathrm{F}) \quad \epsilon + \frac{2}{\sqrt{3}} \delta \geq d, \\
     &(\mathrm{G}) \quad 2\epsilon \geq d - \alpha, \\
     &(\mathrm{H}) \quad \epsilon + \gamma \geq d - \alpha.
\end{align*}
Minimizing $S$ in Eq.~\eqref{eq:area_logical_qubit} subject to the inequalities, we get $S \approx 1.42 d^2$ where the corresponding intervals are $(\alpha, \beta, \gamma, \delta, \epsilon) \approx (0.23d, 0.23d, 0.38d, 0.53d, 0.38d)$.
A unit area contains $3\sqrt{3}/2$ qubits and $4\sqrt{3}$ \cz~gates, thus we get $n/k \approx 3.7 d^2$ and $N_\mathrm{\cz}/k \approx 9.8 d^2$.

\section{Details on calculation of error thresholds}
\label{app:error_thresholds}

We here present some details on the calculation of error thresholds presented in Sec.~\ref{subsec:error_thresholds}.

\subsection{Simplified defect models}
\label{subapp:simplified_defect_models}

As mentioned in the main text, we simplify the defect models for efficient simulations.
Instead of considering big regions containing the entire defects, we consider only regions surrounded by boundaries corresponding to the defects.
That is, we only take account of error chains located in the ``inner'' regions surrounded by the defects.
Since those error chains are strictly shorter than error chains passing outside the regions, we conjecture that this assumption does not affect the resulting $Z_L$ error probabilities much.

Figure \ref{fig:simplified_defects} shows single layers of the three simplified defect models for the simulations regarding RTCSs, 4-8-8 CCCSs, and 6-6-6 CCCSs, respectively.
Each layer of the concerned RTCSs has the shape of a square with a side length of $d-1$ in the units of cells for the code distance $d$, where the boundaries are of different types (primal and dual).
Any error chain connecting the two primal boundaries incurs a $Z_L$ error.
For CCCSs, we consider a region surrounded by three boundaries of different colors, where each boundary can be regarded as a part of a defect.
Any error chain connecting the three boundaries incurs a $Z_L$ error.

\subsection{Decoding methods}
\label{subapp:decoding_methods}

\subsubsection{Raussendorf's 3D cluster states}

In an RTCS, the \tsf{PC} outcomes are decoded to locate errors at vacuum qubits via Edmonds' minimum-weight perfect matching algorithm (MWPM) \cite{edmonds1965paths, edmonds1965maximum, fowler2015minimum}, as frequently used in the literature \cite{raussendorf2006fault, barrett2010fault, fowler2012topological, whiteside2014upper}.
Remark that an error chain flips at most two \tsf{PC}s located at its ends, and if it flips one \tsf{PC}, it ends at the boundary.
Hence, our goal is to figure out the most probable set of error chains based on the \tsf{PC} outcomes.
% Hence, our goal is to find a set of pairs of \tsf{PC}s such that each pair is the flipped \tsf{PC}s by an error chain and to find a set of \tsf{PC}s flipped by error chains ending at the boundary.

The decoding procedure is briefly summarized as follows.
First, a graph is constructed from the \tsf{PC} outcomes.
The vertex set of the graph contains two vertices for each flipped \tsf{PC}: one is the \tsf{PC} itself and the other is the ``boundary vertex.''
An edge is connected between each pair of different \tsf{PC}s, each pair of a \tsf{PC} and the corresponding boundary vertex, and each pair of different boundary vertices.
A ``weight'' value is assigned to each edge as follows.
If both the vertices are \tsf{PC}s, the weight is the number of qubits in the shortest path between them.
If only one of them is a \tsf{PC}, the weight is the number of qubits in the shortest path between the \tsf{PC} and the closest boundary.
If both of them are boundary vertices, the weight is zero.

We use the MWPM algorithm via Blossom V software \cite{kolmogorov2009blossom} to search for a set of edges of the graph constructed above which covers all the vertices, does not contain duplicated vertices, and minimizes the total weight.
Each edge in the resulting set corresponds to a pair of \tsf{PC}s flipped by an error chain or a \tsf{PC} flipped by an error chain ending at the boundary, unless the edge connects two boundary vertices, which is ignored.
We can thus locate errors from the error chain along the shortest path for each edge.
Since the total weight is minimized, we get the smallest of the sets of edges producing the same \tsf{PC} outcomes, which is the most probable assuming that the error probabilities are independent and the same between qubits.

\subsubsection{Color-code-based cluster states}

The decoding method for RTCSs is not directly applicable to CCCSs, since an error in a CCCS flips at most three \tsf{PC}s, unlike the case of an RTCS.
The decoding for each sample requires the application of the MWPM algorithm six times.

First, the outcomes of \tsf{pb-PC}s and \tsf{pg-PC}s are decoded to find the faces in $\mathcal{L}^\tsf{pr}$ with odd numbers of errors, via the method analogous to that for RTCSs.
This is possible since each of such faces flips at most two (blue or green) \tsf{PC}s like an error in an RTCS.
Remark that each face in $\mathcal{L}^\tsf{pr}$ corresponds to a \tsf{pbAQ}, \tsf{pgAQ}, or \tsf{prL}.
Errors at \tsf{pbAQ}s and \tsf{pgAQ}s are thus obtained from this process.
% The left errors are located at \tsf{prAQ}s or \tsf{pCQ}s, and we know the \tsf{prL}s with one error and the outcomes of \tsf{pr-PC}s.
% Each \tsf{prL} can be regarded as a \tsf{PC}, since its parity is determined.
% We can then say that an error at a \tsf{prAQ} or \tsf{prL} flips two adjacent \tsf{pr-PC}s or a \tsf{pr-PC} and \tsf{prL}.

Next, the left results for \tsf{prL}s and the outcomes of \tsf{pr-PC}s are decoded to locate errors at \tsf{prAQ}s and \tsf{pCQ}s, regarding the parity of the number of errors in each \tsf{prL} as a \tsf{PC}.
This is possible since an error at a \tsf{prAQ} or \tsf{pCQ} flips at most two \tsf{PC}s (\tsf{pr-PC}s and \tsf{prL}s).

All the errors are finally located by the above process.
However, to make the decoding more accurate, we repeat it for $\mathcal{L}^\tsf{pb}$ and $\mathcal{L}^\tsf{pg}$ analogously and select the smallest set of decoded errors among the three results.

\bibliography{references}

\end{document}